\documentclass[aps,pra,onecolumn,nofootinbib,notitlepage, longbibliography]{revtex4-1}

\usepackage{graphicx}
\usepackage{bm}
\usepackage{hyperref}
\hypersetup{
  colorlinks,%
  citecolor=blue,%
  filecolor=blue,%
  linkcolor=blue,%
  urlcolor=blue
}
\usepackage{subcaption}
\graphicspath{{figures/}{images/}{plots/}}
\usepackage{amsmath}
\usepackage{amssymb}
\usepackage{amsthm}
\usepackage{graphicx}
\usepackage{mathrsfs}
\usepackage{braket}
\usepackage{comment}
\usepackage{makecell}
\usepackage{multirow}
\usepackage[dvipsnames]{xcolor}
\usepackage{soul,xcolor}
\usepackage{color, xcolor, colortbl}
\usepackage[normalem]{ulem}
\usepackage{graphicx}
\usepackage{amsthm,amsmath,amssymb,amsfonts}
\usepackage{dcolumn}
\usepackage{url}
\usepackage{epstopdf}
\usepackage{enumitem}
\usepackage{algpseudocode}
\usepackage{bm}
\usepackage{appendix}
\usepackage{multirow}
\usepackage{braket}
\usepackage[english]{babel}
\usepackage{aliascnt}
\usepackage{hyperref}
\usepackage[capitalize]{cleveref}
\usepackage{algorithm}
\usepackage{xpatch}
\usepackage{tikz}
\usepackage{adjustbox}
\usepackage{xspace}
\usetikzlibrary{quantikz}
\usepackage{pifont}

\newcommand{\polylog}{\operatorname{poly}\log}
\newcommand{\poly}{\operatorname{poly}}

\newtheorem{thm}{Theorem}[section]
\crefname{thm}{Theorem}{Theorems}
\Crefname{thm}{Theorem}{Theorems}

\newaliascnt{lem}{thm}
\newaliascnt{rem}{thm}
\newaliascnt{prop}{thm}
\newaliascnt{cor}{thm}
\newaliascnt{assumption}{thm}
\newaliascnt{defn}{thm}
\newaliascnt{notation}{thm}
\newaliascnt{fact}{thm}
\newaliascnt{prob}{thm}

\newtheorem{lem}[lem]{Lemma}
\newtheorem{rem}[rem]{Remark}
\newtheorem{prop}[prop]{Proposition}

\newtheorem{assumption}[assumption]{Assumption}
\newtheorem{defn}[defn]{Definition}

\aliascntresetthe{lem}
\aliascntresetthe{rem}
\aliascntresetthe{prop}
\aliascntresetthe{cor}
\aliascntresetthe{assumption}
\aliascntresetthe{defn}
\aliascntresetthe{prob}
\aliascntresetthe{notation}
\aliascntresetthe{fact}

\crefname{lem}{Lemma}{Lemmas}
\Crefname{lem}{Lemma}{Lemmas}

\crefname{rem}{Remark}{Remarks}
\Crefname{rem}{Remark}{Remarks}

\crefname{prop}{Proposition}{Propositions}
\Crefname{prop}{Proposition}{Propositions}

\crefname{cor}{Corollary}{Corollaries}
\Crefname{cor}{Corollary}{Corollaries}

\crefname{assumption}{Assumption}{Assumptions}
\Crefname{assumption}{Assumption}{Assumptions}

\crefname{defn}{Definition}{Definitions}
\Crefname{defn}{Definition}{Definitions}

\crefname{prob}{Problem}{Problem}
\Crefname{prob}{Problem}{Problem}

\crefname{notation}{Notation}{Notations}
\Crefname{notation}{Notation}{Notations}

\crefname{fact}{Fact}{Facts}
\Crefname{fact}{Fact}{Facts}

\usepackage{bbm}

\NewDocumentCommand{\ketbra}{mG{#1}}{\mathinner{|{#1}\rangle\!\langle{#2}|}}

\usepackage{silence}


\usetikzlibrary{fit}
\tikzset{%
  highlight/.style={rectangle,rounded corners,fill=blue!15,draw,fill opacity=0.3,thick,inner sep=0pt}
}

\usetikzlibrary{positioning, arrows.meta, calc, fit, shapes.geometric, backgrounds}
\makeatletter
\def\frontmatter@above@affilgroup{\vskip 2mm\relax}
\def\frontmatter@above@affiliation{\vskip 2mm\relax}
\makeatother

\begin{document}
\title{Quantum Algorithms for Nonlinear Differential Equations via Pivot-Shifted Carleman Linearization}

\author{Ke Wang}
\thanks{These authors contributed equally to this work.}
\affiliation{Department of Mathematics, University of Michigan}
\author{Zikang Jia}
\thanks{These authors contributed equally to this work.}
\affiliation{Department of Mathematics, University of Michigan}
\author{Shravan Veerapaneni}
\affiliation{Department of Mathematics, University of Michigan}
\author{Zhiyan Ding}
\thanks{zyding@umich.edu}
\affiliation{Department of Mathematics, University of Michigan}

\begin{abstract}
We develop a pivot-shifted Carleman linearization framework for quantum algorithms solving quadratic nonlinear ordinary differential equations. By shifting the dynamics by a pivot state prior to Carleman lifting, and combining this with a Lyapunov transform and rescaling, we enlarge the class of nonlinear systems that can be efficiently simulated on quantum computers. For systems that exhibit stability in the shifted coordinates, we establish long time convergence of the truncated Carleman embedding. We prove that the truncation order scales only logarithmically with the simulation time and target precision, and we derive end-to-end quantum query complexity bounds for preparing a state proportional to the final solution. By introducing a modified nonlinearity condition, this framework entirely removes the conventional lower bound requirement on the initial condition. For more general systems that remain unstable after shifting, we provide short time convergence guarantees that are similarly free from the initial condition constraints. Numerical experiments on the logistic and the Lotka–Volterra equations demonstrate that an appropriate pivot choice improves stability and accuracy, and yields exponential error decay with truncation order. These results show that pivot shifting provides a practical and theoretically justified route for extending Carleman-based quantum algorithms to a broader class of nonlinear dynamical systems.
\end{abstract}

\maketitle
\section{Introduction}
Over the past several decades, quantum algorithms have attracted sustained interest because of their potential to provide significant speedups over classical methods~\cite{feynman1986quantum}. One of the most important quantum algorithmic primitives is the simulation of quantum systems, where the curse of dimensionality arises naturally and severely limits classical approaches. Since quantum dynamics are themselves governed by quantum mechanics, they are especially well suited for implementation on quantum computers. This structural compatibility has led to remarkable advances, including optimal-scaling Hamiltonian simulation~\cite{low2019hamiltonian,gilyen2019quantum} and near-optimal-scaling Lindblad simulation~\cite{cleve2016efficient,li2022simulating,ding2024simulating}. Building on these developments, recent works have begun to explore quantum advantages for more general classical computational tasks~\cite{lin2026quantum,dalzell2023quantum}, including linear system solvers~\cite{harrow2009quantum,childs2017quantum} and algorithms for linear differential equations~\cite{costa2022optimal,fang2023time,an2023linear,an2026quantum,berry2014high,berry2017quantum,morales2024quantum}, largely by exploiting the intrinsic linearity of quantum mechanics.

Although quantum computing offers powerful tools for simulating linear systems, many phenomena of scientific and practical interest are governed by nonlinear differential equations, including models arising in plasma physics, engineering, economics, and chemical reactions~\cite{hockney2021computer,scott2007nonlinear,black1973pricing,yu2018mathematical}. Their nonlinear structure makes accurate simulation substantially more challenging, even on modern classical high-performance computing platforms. The difficulty is even more fundamental on a quantum computer: the evolution generated by quantum mechanics is intrinsically linear, whereas the target dynamics are nonlinear. Therefore, a quantum algorithm cannot usually implement the nonlinear flow directly, and must instead represent it through an auxiliary linear framework or through intermediate operations that effectively reproduce the nonlinear behavior. This basic mismatch between linear quantum evolution and nonlinear target dynamics is the central obstacle in designing quantum algorithms for nonlinear differential equations.

Existing quantum approaches to nonlinear differential equations can largely be viewed as different strategies for overcoming the basic mismatch between linear quantum evolution and nonlinear target dynamics. The Koopman--von Neumann framework embeds the dynamics into a linear operator evolution acting on observables or functions of the state, rather than directly on the state itself~\cite{joseph2020koopman}. Variational and hybrid quantum-classical methods offer a near term pathway by recasting nonlinear dynamics as optimization problems using parameterized quantum circuits, although the rigorous error analyses are not fully established~\cite{lubasch2020variational,pfeffer2022hybrid}.
Other early approaches similarly relied on more general linearization or approximation schemes to indirectly mimic nonlinear evolution on a quantum computer~\cite{leyton2008quantum}. Among these methods, Carleman linearization has emerged as one of the most promising frameworks. Its central idea is to lift the nonlinear system into a higher dimensional linear system by enlarging the state space and encoding higher order monomials of the solution~\cite{liu_efficient_2021,jennings_quantum_2025,kroviImprovedQuantumAlgorithms2023}. This construction is particularly attractive because it is systematic, compatible with quantum linear system techniques, and admits rigorous error analysis. Under dissipativity assumptions and a condition analogous to the Reynolds number in fluid dynamics, it yields efficient quantum simulation algorithms that remain stable for a long time~\cite{liu_efficient_2021}. Extensions beyond the dissipative regime have also been established for nonresonant systems~\cite{wuQuantumAlgorithmsNonlinear2025a}, as well as for stable and conservative systems~\cite{jennings_quantum_2025}.

Despite this progress, the current applicability of Carleman linearization still relies on rather restrictive assumptions. These assumptions not only narrow the class of nonlinear systems that can be treated efficiently, but also impose nontrivial conditions on the initial value in order to guarantee convergence of the truncated Carleman expansion. Specifically, consider the $n$-dimensional quadratic ordinary differential equation
\begin{equation}\label{eq:qODEx}
\partial_t x = F_2 x^{\otimes 2} + F_1 x + F_0, \quad x(0) = x_{0},
\end{equation}
where $x \in \mathbb{R}^n$, $x^{\otimes 2} \in \mathbb{R}^{n^2}$, $F_2 \in \mathbb{R}^{n \times n^2}$, $F_1 \in \mathbb{R}^{n \times n}$, and $F_0 \in \mathbb{R}^n$. Existing analyses~\cite{jennings_quantum_2025,liu_efficient_2021,kroviImprovedQuantumAlgorithms2023} typically establish efficient quantum algorithms under three primary conditions. First, the linear part is required to be dissipative, meaning its spectral abscissa satisfies $\alpha(F_1)<0$. Second, a constraint similar to the Reynolds number is imposed,
\begin{equation}\label{eq:Rcondition}
R \coloneqq \frac{1}{-\alpha(F_1)}
\left(
\|x(0)\| \|F_2\|
+
\frac{\|F_0\|}{\|x(0)\|}
\right) < 1.
\end{equation}
This condition ensures that the dissipative effect of the linear term dominates both the nonlinear term and the forcing term, so that the norm of the solution remains nonincreasing. Furthermore, because $R$ scales inversely with the initial norm, $1/\|x(0)\|$, one needs a third condition, $\delta \leq \|x(0)\| \leq 1$ for some fixed $\delta > 0$.

Although these conditions are sufficient to guarantee efficient quantum algorithms, they are quite restrictive and describe a much narrower regime than the full class of nonlinear systems that are stable in the usual dynamical-systems sense. Many stable systems of practical interest fall outside this framework, and for such systems, establishing an end-to-end quantum advantage becomes substantially more challenging~\cite{dalzell2023quantum}. A simple example is the logistic equation $\partial_t x=x-x^2$. For every initial condition $x(0)>0$, the solution remains bounded and converges to a stable equilibrium. However, the linear coefficient is positive, and hence the condition $\alpha(F_1)<0$ fails. In this case, the standard Carleman approximation can diverge on long time scales and fail to achieve high accuracy, as demonstrated by our numerical experiments in~\cref{sec:example}. Nevertheless, the underlying nonlinear dynamics are stable. It is therefore natural to ask whether one can develop provably efficient quantum algorithms that capture the dynamics of such systems. More generally, given the limitations of existing Carleman-based criteria, one may ask whether provably efficient quantum algorithms can be developed under weaker assumptions than those currently required. This leads to the first central question of our work:
\begin{quote}
\textit{Can we develop provably efficient algorithms under weaker stable assumptions?}
\end{quote}

Beyond stable nonlinear systems, many systems of practical interest are unstable, and it is important to understand the scope of quantum algorithms in this broader setting. For unstable systems, solutions may grow rapidly, often exponentially in time, so one generally expects efficient simulation to be possible only over finite or short time intervals. However, since the convergence guarantees for existing Carleman linearization methods rely strongly on stability assumptions and restrictions on the initial condition, the method fails to converge even on short time intervals for a large class of unstable systems, and hence fail to capture the dynamics with high accuracy. On the other hand, the solution of an unstable nonlinear system may still be well approximated by a linearized or lifted linear system over sufficiently short time intervals. It is therefore natural to ask whether Carleman linearization can be modified to capture this behavior on short time scales while still retaining rigorous convergence guarantees. This leads to the second central question of our work:
\begin{quote}
\textit{How can we capture the dynamics of unstable systems at least for short time scales with rigorous convergence guarantees?}
\end{quote}
In this work, we provide a modified Carleman linearization framework that addresses both questions above. Inspired by the classical idea of pivot rule in linear programming~\cite{terlaky1993pivot,bertsimas1997introduction}, we introduce a pivot-shifted Carleman linearization method. The main idea is to shift the dynamics by a pivot state $s$ and apply Carleman linearization to the shifted variable $u=x-s$, rather than directly to the original variable $x$. This produces a transformed quadratic system whose linear part can be substantially better behaved than that of the original equation.

We rigorously demonstrate that this modification significantly broadens the regime in which Carleman-based quantum algorithms can be analyzed. With a suitable choice of pivot, we establish long time convergence guarantees under assumptions that are strictly weaker and more flexible than the standard Carleman criteria. In particular, for a stable system that does not satisfy $\alpha(F_1)<0$, the pivot-shifted Carleman linearization can still yield an efficient quantum algorithm, provided that the pivot is chosen appropriately. The method remains useful even when the shifted system is still unstable. In this case, we prove short time convergence guarantees and show that pivot shifting can weaken the dependence on the initial condition while extending the time interval over which the Carleman approximation remains accurate.

Taken together, our results show both theoretically and numerically that pivot shifting plays two complementary roles. For stable systems, it enlarges the class of problems that admit efficient long time simulation. For unstable systems, it improves the effective short time approximation regime and allows the dynamics to be captured over longer time scales than standard Carleman linearization permits.

\subsection{Main results}\label{subsec:mainResults}
Our main results are obtained through a pivot-shifted Carleman linearization of the quadratic ordinary differential equation in~\cref{eq:qODEx}. For a chosen pivot $s$, we introduce the shifted variable  $u = x - s$, which satisfies
\begin{equation}\label{eqn:shiftedODE}
\partial_t u = F_{2,s} u^{\otimes 2} + F_{1,s} u + F_{0,s}.
\end{equation}
We then apply a Lyapunov transformation together with a rescaling by setting $v = Q u$, where $Q$ is chosen to control the stability and magnitude of the shifted dynamics. Carleman lifting is then applied to the transformed variable $v$. This pivot shifted formulation allows us to reduce the effective initial displacement and enforce a dissipativity condition on the transformed linear part.

We say that the differential equation becomes stable after pivot shifting if the following assumption holds:
\begin{assumption}[Stability after pivot shift]
\label{assumption:stability}
The spectral abscissa of $F_{1,s}$ is negative, namely
\begin{equation}\label{eq:Stability}
\alpha(F_{1,s}) = \max_i \mathrm{Re}(\lambda_i(F_{1,s})) < 0.
\end{equation}
\end{assumption}
By the Lyapunov theorem, ~\cref{assumption:stability} implies that there exists a positive definite matrix $P\succ 0$ such that
\begin{equation}\label{eqn:Lyapunov_P}
P F_{1,s} + F_{1,s}^\dagger P \prec 0.
\end{equation}
It reveals that the shifted linear dynamics is dissipative in the weighted inner product induced by $P$. Using the corresponding weighted norm $\|\cdot\|_P$ and logarithmic norm $\mu_P(\cdot)$, we impose the following condition.
\begin{assumption}[Weighted nonlinear stability condition]\label{assumption:stability2}
$\mu_P(F_{1,s})^2 - 4 \|F_{2,s}\|_P \|F_{0,s}\|_P > 0.$
\end{assumption}
This condition plays a role analogous to the requirement $R<1$ in~\cite{liu_efficient_2021,jennings_quantum_2025}.
Unlike the Reynolds-number-type condition in~\cref{eq:Rcondition}, this assumption does not place the initial state $u(0)$ in the denominator, and therefore avoids the singular behavior caused by the absence of a positive lower bound on the initial value.

\cref{assumption:stability}, together with the choice of the Lyapunov matrix $P$, makes the shifted linear dynamics strictly dissipative in the $P$ weighted norm, $\mu_P(F_{1,s})<0$.  \cref{assumption:stability2} controls the nonlinearity contribution by ensuring that the corresponding quadratic inequality has two distinct positive roots. These assumptions together identify a regime in which the pivot-shifted Carleman system inherits a dissipativity structure in the transformed coordinates.
It is worth noting that the efficient implementation of the shifted Carleman linearization requires an efficient transformation between the original variable $x$ and the shifted variable $u=x-s$. This issue is discussed mainly in~\cref{sec:preliminaries}.
In this work, the computational cost of the algorithm is stated under the same cost model as in~\cite{liu_efficient_2021,jennings_quantum_2025}. Specifically, our complexity estimates assume an input model in which we have access to block encodings of $F_2$, $F_1$, $F_0$, and $Q$, state preparation oracles for $\ket{x(0)}$ and $\ket{s}$, and the values of $\|x(0)\|$ and $\|s\|$.

We first state the main result for stable shifted systems, which establishes long time convergence of the pivot-shifted Carleman linearization and provides the corresponding quantum query complexity:

\begin{thm}[Informal complexity for stable shifted systems]
\label{thm:stablecomplexityinformal}
Suppose Assumptions~\ref{assumption:stability} and~\ref{assumption:stability2} hold. Let $Q=\sqrt{P}/\gamma$ for a suitable rescaling parameter $\gamma>0$, and let $v=Q(x-s)$ be the transformed shifted variable. Then there exists a quantum algorithm that outputs an $\epsilon$-approximation of the final state $\ket{x(T)}$ using
\begin{equation}
\widetilde{\mathcal{O}}\left(
    \frac{
        \sqrt{\|x(T)-s\|^2+\|s\|^2}
    }{
        \|x(T)\|
    }
    \frac{
        \sqrt{\|x(0)\|^2+\|s\|^2}
    }{
        \|x(0)-s\|
    }
    \frac{
        \max_{t\in[0,T]}\|v(t)\|
    }{
        \|v(T)\|
    }
    \sqrt{T}
    \polylog(1/\epsilon)
\right)
\end{equation}
queries to the block encodings of $Q$ and $F_i$, and to the state-preparation oracles for $\ket{x(0)}$ and $\ket{s}$. Here, $\widetilde{\mathcal{O}}(\cdot)$ suppresses logarithmic factors and constants associated with the block encoding normalizations and condition parameters of the transformed system.
\end{thm}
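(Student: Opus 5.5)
Throughout, $\|\cdot\|_P$ is the norm $\|w\|_P=\|\sqrt{P}\,w\|$. The plan follows the Liu et al.\ / Jennings et al.\ Carleman pipeline, applied to the transformed variable $v=Q(x-s)$.

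\textbf{Step 1: transformed system.} Substituting $u=Q^{-1}v$ into \cref{eqn:shiftedODE} gives
\[
\partial_t v=\tilde F_2 v^{\otimes 2}+\tilde F_1 v+\tilde F_0,
\]
with
\[
\tilde F_2=QF_{2,s}(Q^{-1})^{\otimes 2},\qquad \tilde F_1=QF_{1,s}Q^{-1},\qquad \tilde F_0=QF_{0,s}.
\]
With $Q=\sqrt{P}/\gamma$, the Euclidean log-norm of $\tilde F_1$ equals $\mu_P(F_{1,s})<0$. Under this rescaling, $\|\tilde F_2\|=\gamma\|F_{2,s}\|_P$ and $\|\tilde F_0\|=\|F_{0,s}\|_P/\gamma$.

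\textbf{Step 2: invariant ball.} The energy estimate is
\[
\tfrac{d}{dt}\|v\|\le \mu\|v\|+\|\tilde F_2\|\|v\|^2+\|\tilde F_0\|, \qquad \mu=\mu_P(F_{1,s}).
\]
The right-hand side is a quadratic in $r=\|v\|$. By \cref{assumption:stability2} its discriminant $\mu^2-4\|F_{2,s}\|_P\|F_{0,s}\|_P$ is positive, and this is invariant under $\gamma$ because the product $\|\tilde F_2\|\|\tilde F_0\|$ does not depend on $\gamma$. Hence the quadratic has two positive roots $r_-<r_+$, and the ball of radius $r_+$ is forward invariant.

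Choose $\gamma$ so that $\|v(0)\|$ lies below $r_+$ and the rescaled nonlinear ratio $\|\tilde F_2\|\, r_+/|\mu|$ is strictly less than $1$. Both are possible because $\gamma$ rescales the roots. In particular, no lower bound on $\|x(0)\|$ is needed: small $\|v(0)\|$ is harmless, since the forcing term is compared with $r_\pm$ rather than with $\|v(0)\|$.

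\textbf{Step 3: truncation error.} Form the order-$N$ Carleman system for $y_k=v^{\otimes k}$. The truncation error $\eta_k=v^{\otimes k}-y_k$ obeys a block-tridiagonal linear ODE. Its diagonal blocks have log-norm $k\mu$, and its off-diagonal blocks are $k\|\tilde F_2\|$ and $k\|\tilde F_0\|$.

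Adapting the Liu et al.\ argument (or the Jennings et al.\ argument for long time), I would show
\[
\|\eta_1(t)\|\le C\, r_+\,\rho^{N}, \qquad \rho<1,
\]
uniformly in $t$, by using the invariant ball bound $\|v\|\le r_+$ as the effective "$R<1$". This gives $N=O(\log(T/\epsilon'))$.

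This is the step I expect to be the main obstacle. The forcing term couples level $k$ down to level $k-1$, and the resulting estimate has to close with geometric decay using only the discriminant condition. I would first try a weighted norm on the levels, $\sum_k \beta^{k}\|\eta_k\|$, and choose $\beta$ between the two roots so that the lifted generator becomes dissipative.

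\textbf{Step 4: quantum solve and success probability.} Discretize the linear Carleman ODE and solve it with an optimal linear-ODE solver, for instance time-marching or LCHS. The lifted system's generator is dissipative in the weighted norm, which gives $\sqrt{T}$ scaling. The remaining overhead is the norm-ratio factor $\max_t\|v(t)\|/\|v(T)\|$ from postselecting the final-time first-level block.

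Then recover the target state in three moves:
\begin{itemize}
\item apply a block encoding of $Q^{-1}$ to pass from $v$ to $x-s$, with its cost absorbed into the condition parameters;
\item prepare the lifted initial state from $\ket{x(0)}$ and $\ket{s}$ by a linear combination of unitaries, with success amplitude $\|x(0)-s\|/\sqrt{\|x(0)\|^2+\|s\|^2}$;
\item add back $\ket{s}$ at the end by another linear combination of unitaries, with amplitude $\|x(T)\|/\sqrt{\|x(T)-s\|^2+\|s\|^2}$.
\end{itemize}
Amplitude amplification turns each of these amplitudes into the corresponding inverse factor in the query count.

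Finally, choose the intermediate precision $\epsilon'\sim\epsilon\,\|x(T)\|$ divided by these norms. The total query count is then the product of the three ratios, $\sqrt{T}$, and $\polylog(1/\epsilon)$ arising from $N$ and the solver precision, as stated in \cref{thm:stablecomplexityinformal}.
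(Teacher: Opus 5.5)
Your pipeline matches the paper's: shift, Lyapunov rescaling, Riccati comparison, Carleman error via a dissipative lifted generator, a dissipative linear-ODE solver, then $Q^{-1}$ and the two LCU shifts with amplitude amplification. However, Steps 2--3 do not close as written.

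Neither condition you impose on $\gamma$ actually depends on $\gamma$. The quantities $\|v(0)\|=\|x(0)-s\|_P/\gamma$ and $r_+=r_+^P/\gamma$ scale identically. Also $\|\tilde F_2\|\,r_+=\|F_{2,s}\|_P\,r_+^P=\tfrac12\bigl(|\mu|+\sqrt{\mu^2-4\|F_{2,s}\|_P\|F_{0,s}\|_P}\bigr)$ for every $\gamma$. More importantly, radius $r_+$ is exactly the borderline case. Since $r_+$ is a root, $\|\tilde F_2\|r_++\|\tilde F_0\|/r_+=|\mu|$, so the ``effective $R$'' there equals $1$, not $<1$.

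Concretely, weight level $k$ by $\beta^{-k}$, which is equivalent to rescaling $v\mapsto v/\beta$ and is what $\gamma$ does (with weights $\beta^{k}$ the admissible window is inverted). The level-$N$ truncation forcing becomes $N\|\tilde F_2\|\beta\,(\max_t\|v\|/\beta)^{N+1}$. Geometric decay in $N$ therefore requires $\beta>\max_t\|v(t)\|$. Dissipativity of the weighted lifted generator at high levels requires $\mu+\|\tilde F_2\|\beta+\|\tilde F_0\|/\beta<0$, i.e.\ $r_-<\beta<r_+$. Knowing only $\|v(t)\|\le r_+$ leaves no admissible $\beta$, so $\|\eta_1\|\le Cr_+\rho^N$ with $\rho<1$ cannot be derived this way.

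The missing ingredient is the sharper trapping bound the paper extracts from the same comparison ODE. Since $\dot z>0$ on $[0,r_-)$ and $\dot z<0$ on $(r_-,r_+)$, one has $\max_t\|v(t)\|\le\max\{\|v(0)\|,r_-\}$, which is strictly below $r_+$ once $\|v(0)\|<r_+$. The real job of $\gamma$ is to place the unit truncation scale in that gap. With $Q=\sqrt P/\gamma$, $\gamma\in(\zeta_-^P,r_+^P)$ and $\|v(0)\|<1$, one gets $\max_t\|v\|\le\max\{\|v(0)\|,r_-^P/\gamma\}<1<r_+^P/\gamma$.

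Dissipativity then follows from Gershgorin applied to the tridiagonal comparison matrix $G$, with $G_{jj}=j\mu(E_{1,s})$ and off-diagonals $j\|E_{0,s}\|$, $j\|E_{2,s}\|$; this gives $\mu(B_N)\le\tfrac12 C_E<0$. The low-level rows impose the extra condition $4\mu(E_{1,s})+3\|E_{2,s}\|+5\|E_{0,s}\|<0$. That is why the lower endpoint is $\zeta_-^P$ rather than $r_-^P$, and your weighted-norm version would have to handle the same boundary rows. Gr\"onwall then gives $\|\xi(t)\|\le tN\|E_{2,s}\|\max_t\|v\|^{N+1}$, hence $N=\mathcal{O}(\log(T/(\epsilon\|v(T)\|)))$.

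Finally, the $\sqrt T$ scaling is not delivered by generic time-marching or LCHS, which are linear or worse in $T$. It comes from the fast-forwarded QLSA/Taylor solver of An et al.\ for $A+A^\dagger\le-2\eta$, $T\ge 1/\eta$, applied to $B_N$ with $\eta\sim|C_E|$. The factor $\max_t\|v\|/\|v(T)\|$ enters through that solver's ratio $\max_t\|\hat z(t)\|/\|\hat z(T)\|$. Extracting the first Carleman block costs only $\mathcal{O}(\sqrt N)$ rounds of amplitude amplification.
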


\begin{rem}\label{re:stable}
    The choice of the pivot $s$ in the stable case is crucial for the efficiency of the quantum algorithm. In particular, $s$ should be chosen such that $F_{2,s},F_{1,s},F_{0,s}$ satisfies the stable condition in~\cref{assumption:stability}. The ideal choice of $s$ depends on the specific system and may require some prior knowledge of the dynamics. One useful guideline is to choose $s$ to be a stable equilibrium of the system, if such an equilibrium exists. For example, if the system has a locally exponentially stable equilibrium $x^\star$, then $F_{1,x^\star}:=F_1+F_2(x^\star\otimes I_n+I_n\otimes x^\star)$ has a negative spectral abscissa, meaning~\cref{assumption:stability} is satisfied with $s=x^\star$. In addition, $F_{0,x^\star}=F_0+F_1 x^\star + F_2 (x^\star)^{\otimes 2}=0$, so the condition in~\cref{assumption:stability2} is also satisfied if $\mu_P(F_{1,s})\neq 0$. This implies that, given a system that has a locally exponentially stable equilibrium $x^\star$, as long as $s$
    is chosen in a sufficiently small neighborhood of $x^*$, the stability conditions above are satisfied and the Carleman linearization can be applied to yield an efficient quantum algorithm. This makes the pivot-shifted Carleman linearization a powerful tool for simulating stable nonlinear systems around their stable equilibria, even when the original system does not satisfy the standard Carleman criteria.
\end{rem}

The formal version of the above theorem is stated as~\cref{thm:main-stable} in~\cref{sec:complexity}. The proof consists of two main parts. First, we transform the original system into a shifted and rescaled system that satisfies the desired stability condition. Second, we apply Carleman linearization to the transformed system and analyze the resulting quantum algorithm. The first part involves constructing block encodings of the coefficient matrices of the shifted system and efficiently transforming the initial and final states between the original and shifted variables. These ingredients are discussed mainly in~\cref{sec:preliminaries}. The second part establishes convergence of the Carleman approximation for the shifted system and derives the corresponding quantum query complexity. In this part, we relax the assumptions on initial value used in~\cite{jennings_quantum_2025}. In particular, instead of requiring $R_P = \frac{1}{-\mu_P(F_1)}\left( \|F_2\|_P \|x(0)\|_P + \frac{\|F_0\|_P}{\|x(0)\|_P}\right)<1$, our analysis can be applied to any $\|v(0)\|_P<1$. This removes the dependence on a lower bound for $\|x(0)\|_P$. In addition, while~\cite{jennings_quantum_2025} focuses on history state preparation, our analysis targets final state preparation and achieves a query complexity $\tilde{\mathcal{O}}(\sqrt{T})$, improving over the best final state complexity $\tilde{\mathcal{O}}(T)$  for Carleman linearization in~\cite{kroviImprovedQuantumAlgorithms2023}.

Although \cref{thm:main-stable} provides long time convergence guarantees for systems that become stable after pivot shifting, the choice of pivot is crucial and may require prior knowledge of the dynamics. For systems that are inherently unstable, or for which no practical pivot choice stabilizes the shifted dynamics, pivot-shifted Carleman linearization can still be useful for improving short time convergence behavior. In this regime, we choose the pivot to be the initial condition, $s=x(0)$, so that the shifted variable satisfies $u(0)=0$. This leads to the following short time convergence and query complexity guarantee.
\begin{thm}[Informal complexity for general shifted systems]
    \label{thm:unstablecomplexityinformal}
Choose $s=x(0)$ and $Q = I$. Then there exists $t^*>0$, depending on the coefficients $F_i$ and the initial condition $x(0)$, such that for any $T\leq t^*$, there is a quantum algorithm that outputs an $\epsilon$-approximation of the final state $\ket{x(T)}$ using
\[
        \widetilde{\mathcal{O}}\left(\frac{(1+\|x(0)\|)^2\sqrt{\|x(T)-x(0)\|^2+ \|x(0)\|^2 } }{\|x(T)\|}\frac{\max_{t\in[0,T]}\|x(t)-x(0)\|}{\|x(T)-x(0)\|^2}\frac{T^2}{\epsilon}\right)
\]
queries the block encodings of $F_i$ and the state-preparation oracle for $\ket{x(0)}$.  Here, $\widetilde{\mathcal{O}}(\cdot)$ suppresses logarithmic factors and constants associated with the block encoding normalizations and condition parameters of the transformed system.
\end{thm}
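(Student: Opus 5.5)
With $s=x(0)$ and $Q=I$, the shifted variable $u=x-x(0)$ obeys \cref{eqn:shiftedODE} with $u(0)=0$ and coefficients
\[
F_{2,s}=F_2,\qquad F_{1,s}=F_1+F_2(s\otimes I+I\otimes s),\qquad F_{0,s}=F_0+F_1 s+F_2 s^{\otimes 2}=\partial_t x(0).
\]
The plan has three steps: a short-time convergence bound for the truncated Carleman system of $u$, a quantum linear-ODE solver applied to that system, and a conversion from $\ket{u(T)}$ back to $\ket{x(T)}$.

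\textbf{Step 1 (Carleman convergence).} Truncate at order $N$ and write $y_j\approx u^{\otimes j}$. The Carleman matrix is block tridiagonal, with blocks $A^j_{j+1}$, $A^j_j$, $A^j_{j-1}$ built from $F_{2,s}$, $F_{1,s}$, $F_{0,s}$ respectively. The error $\eta_j=u^{\otimes j}-y_j$ satisfies a linear system with zero initial data, driven only by the missing term $A^N_{N+1}u^{\otimes(N+1)}$ at level $N$. Since no dissipativity is available, I would not use a logarithmic-norm argument. Instead I would bound everything in Duhamel form using the crude bounds
\[
\|A^j_{j+1}\|\le j\|F_2\|,\qquad \|A^j_j\|\le j\|F_{1,s}\|,\qquad \|A^j_{j-1}\|\le j\|F_{0,s}\|.
\]
The key input is $u(0)=0$: it gives $\|u(t)\|\le C t$ for small $t$, so $\|u^{\otimes(N+1)}\|\le (Ct)^{N+1}$. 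Iterating the Duhamel integrals down the chain (equivalently, comparing with the majorant scalar ODE $\dot w=\|F_2\|w^2+\|F_{1,s}\|w+\|F_{0,s}\|$, $w(0)=0$, whose Carleman series converges up to its blow-up time) should give
\[
\|\eta_1(T)\|\le c\,(T/t^*)^{N}
\]
for $T<t^*$, where $t^*$ is a fraction of the majorant's blow-up time. This $t^*$ depends only on the $F_i$ and $x(0)$, and notably has no $1/\|x(0)\|$ singularity. It follows that $N=\mathcal{O}(\log(1/\epsilon))$ suffices once $T\le t^*$ is bounded away from the radius. This step is the main obstacle. The bookkeeping over the $j$-dependent block norms must produce a geometric rate rather than a factorial blow-up. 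I would handle it by proving by induction that $\|\eta_j(t)\|$ is dominated by the Taylor tail of $w(t)^j$.

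\textbf{Step 2 (quantum solver).} Apply a standard linear-ODE solver to $\dot y=Ay+b$, where $b$ comes from $F_{0,s}$ at level one, and post-select the $j=1$ block. I would use a time-discretised linear-system solver (e.g.\ truncated Taylor or Euler with amplitude amplification) rather than an optimal one. Since $A$ need not be stable, the relevant condition number is controlled only on $[0,t^*]$, and the discretisation error drives the $T^2/\epsilon$ dependence.

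The success probability of extracting $y_1\approx u(T)$ is governed by the ratio of the history-norm to the final norm. This gives the factor
\[
\frac{\max_{t\in[0,T]}\|u(t)\|}{\|u(T)\|}.
\]
The extra factor $1/\|u(T)\|$ in the statement, together with the $(1+\|x(0)\|)^2$ factor, comes from the normalisation $\|b\|$ and the block-encoding normalisation of $F_{1,s}$. The latter is $\|F_1\|+2\|F_2\|\|x(0)\|$ per use, and $F_{0,s}$ is quadratic in $s$. These block encodings are built from those of $F_i$ and the state preparation for $\ket{x(0)}$ by linear combination of unitaries, as in \cref{sec:preliminaries}.

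\textbf{Step 3 (undoing the shift).} From $\ket{u(T)}$ and $\ket{x(0)}$, together with the known norms $\|u(T)\|$ (estimated) and $\|x(0)\|$, prepare $\ket{x(T)}\propto u(T)+x(0)$ by a two-term linear combination of unitaries. Its success amplitude is
\[
\frac{\|x(T)\|}{\sqrt{\|u(T)\|^2+\|x(0)\|^2}},
\]
and amplitude amplification contributes the remaining prefactor. Multiplying the costs of the three steps gives the claimed bound.
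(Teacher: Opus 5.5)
Your overall architecture matches the paper's. You shift so that $u(0)=0$, bound the truncation error on a short interval, and run a Krovi-type truncated-Taylor solver on the lifted system. You then extract the first block with $\mathcal{O}(\sqrt N)$ rounds of amplitude amplification and undo the shift by the two-term LCU, with $(1+\|x(0)\|)^2$ coming from $\alpha_{F_{0,s}}+\alpha_{F_{1,s}}+\alpha_{F_{2,s}}$. Your Step 1 (norm domination by the scalar majorant and its Taylor tail) is a legitimate alternative to the paper's route. The paper instead writes $\xi(t)=\int_0^t e^{B_N(t-\tau)}\hat d_N(\tau)\,d\tau$ and bounds $\|e^{B_Nt}\|\le\sqrt N\,e^{\frac N2(\mu(F_{1,s})+\|F_{0,s}\|+\|F_{2,s}\|)t}$ via a logarithmic norm weighted by $\Lambda$, the operator acting as $1/j$ on the $j$-th Carleman block.

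\textbf{The gap is in Step 2.} For a non-dissipative generator, the cost of the solver (Krovi's Theorem 7, the paper's \cref{lem:QLSAbasedAlgorithm2}) scales as $g_{\hat z}\,T\|B_N\|\,C(B_N)$ times polylogarithmic factors, where $C(B_N)=\max_{t\le T}\|e^{B_Nt}\|$. This is because the conditioning of the block linear system is governed by powers of the one-step propagator acting on arbitrary vectors. You never bound $C(B_N)$. Your majorant argument controls only the particular error trajectory forced by $u^{\otimes(N+1)}$, not the propagator itself. So ``the relevant condition number is controlled only on $[0,t^*]$'' is an unsupported assertion. In the paper, the weighted-log-norm estimate does double duty and gives $C(B_N)\le\sqrt N e^{N}$.

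This is also why your accounting of the final bound does not close. With a truncated-Taylor solver, discretisation error costs only $\polylog(1/\epsilon)$. Likewise, $\|b\|=\|F_{0,s}\|$ enters Krovi's bound only through $\log(1+Te^2\|b\|/\|\hat z(T)\|)$. Neither source therefore produces $T^2/\epsilon$ or the second factor $1/\|x(T)-x(0)\|$. In the paper, all three of these (the extra $T$, the extra $1/\|v(T)\|$, and $1/\epsilon$) come from $C(B_N)$. The relative-error requirement forces $N=\Theta(\log(T\|F_2\|/(\|v(T)\|\epsilon)))$, and $e^{N}$ is then traded for $T\|F_2\|/(\|v(T)\|\epsilon)$.

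To repair your plan, add a uniform bound on $\|e^{B_Nt}\|$. Either the $\Lambda$-weighted one or even the crude $e^{tN(\|F_{0,s}\|+\|F_{1,s}\|+\|F_{2,s}\|)}$ will do. Then convert $e^{\Theta(N)}$ into a power of $X:=T\|F_2\|/(\|v(T)\|\epsilon)$. If your growth bound is $e^{cN}$ and the truncation error decays like $r^N$, then $N=\log X/\log(1/r)$ gives $e^{cN}=X^{c/\log(1/r)}$. Landing on exactly the first power of $1/\epsilon$ therefore requires $\log(1/r)\ge c$, i.e.\ $T$ a fixed constant factor inside the convergence radius, and this must be tracked explicitly. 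If you use forward Euler instead, $1/\epsilon$ does arise from discretisation. However, the conditioning still carries the growth of $(I+hB_N)^k$, so the missing propagator bound is needed either way.
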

The formal version of the above theorem is stated as~\cref{thm:main-unstable} in~\cref{sec:complexity}. We note that the above theorem does not impose any stability structure on the shifted system, and therefore applies to a much broader class of dynamics than standard Carleman linearization. In traditional Carleman linearization, convergence may fail even on short time intervals when the stable condition is not satisfied. In contrast, by choosing $s=x(0)$, the shifted variable $u=x-s$ starts from zero. This can substantially improve the short time convergence behavior of the Carleman approximation, even when the underlying dynamics are unstable. The admissible short-time interval, however, may shrink when the shifted coefficients have a large overall scale.
 The analysis of the above theorem builds on the quantum linear differential equation solver in~\cite{kroviImprovedQuantumAlgorithms2023}. By slightly refining the upper bound on the matrix exponential of the Carleman linearized matrices, improved upon the estimation in~\cite[Lemma 16]{kroviImprovedQuantumAlgorithms2023}, we obtain a quadratic dependence on the evolution time $T$ for general dynamics. We note that, when $x(T)$ is very close to $x(0)$, the factor $1/\|x(T)-x(0)\|^2$ in the complexity can become large. This scaling arises from the cost of the quantum linear solver applied to the Carleman-linearized system: when the target solution has norm close to zero, the success probability of preparing the normalized solution state can be small, and additional queries are needed to amplify this probability.

\subsection{Related works}

Several approaches have been developed for quantum algorithms for nonlinear differential equations. Early work by Leyton and Osborne~\cite{leyton2008quantum} incorporates quadratic nonlinearities through the amplitude of two copies of the solution state. The subsequence time-marching procedure requires multiple fresh copies of the solution state at each time step. Due to the no-cloning theorem, the resources required by this algorithm scale exponentially with time $T$ and the inverse precision $1/\epsilon$, although only  polylogarithmically with the dimension $n$ under suitable sparsity assumptions.
Numerous subsequent developments have sought to improve the dependence on time and precision, primarily for stable systems. For example, Carleman linearization has emerged as one of the most prominent frameworks: it systematically embeds nonlinear dynamics into a higher-dimensional linear system and admits rigorous error analysis. The quantum Carleman linearization approach was initially proposed by Liu et al.~\cite{liu_efficient_2021}, who showed that dissipative quadratic nonlinear differential equations satisfying a negative log-norm condition together with the Reynolds-type condition in~\cref{eq:Rcondition} can be solved efficiently using Carleman linearization, the quantum linear systems algorithm(QLSA), and a first-order forward Euler method, with complexity for final state preparation scales as $\widetilde{\mathcal O}(T^2/\epsilon)$. This was later improved by Krovi~\cite{kroviImprovedQuantumAlgorithms2023} to $\widetilde{\mathcal O}(T\polylog(1/\epsilon))$ by using a higher-order truncated Taylor-series solver for the resulting linear system. More recently, Wu et al.~\cite{wuQuantumAlgorithmsNonlinear2025a} extended the applicable regime to certain nonresonant systems through a refined spectral analysis of the Carleman matrix, while Jennings et al.~\cite{jennings_quantum_2025} replaced the negative log-norm assumption with the more general stability condition in~\cref{eq:Stability} by introducing a Lyapunov transform prior to the Carleman lifting. Unlike previous algorithms that output an approximation of the final state $\ket{x(T)}$,
Jennings et al.~\cite{jennings_quantum_2025} achieve an  $\mathcal{O}(\poly\log(T/\epsilon))$ scaling for preparing an approximation to the history state proportional to $[x(0); x(h); \cdots; x(Mh)]$. The history state preparation is naturally based on the "all at once" formulation, whereas extracting the final state requires additional padding and amplitude amplification procedure to boost the success probability of measuring the final state. In addition, all of these work focus on dissipative, stable or marginally stable dynamics, and do not provide theoretical guarantees for unstable ODE systems where $\alpha(F_1)>0$. In contrast, our paper develops a modified Carleman method for unstable system. Related developments have also shown how boundary conditions can be incorporated into quantum differential equation algorithms through explicit block encodings~\cite{kharazi2025explicit} or penalty projection methods~\cite{schleich2025arbitrary}. In addition, Carleman-type constructions have been extended from quadratic ordinary differential equations to polynomial nonlinear systems of arbitrary order~\cite{surana_efficient_2023}. Beyond Carleman linearization, other frameworks have also been explored for nonlinear dynamics on quantum computers. These include the Koopman--von Neumann approach~\cite{joseph2020koopman}, which reformulates the dynamics as a linear evolution on observables and allows flexible choices of basis functions such as Fourier expansions~\cite{katz2025efficient}; the homotopy perturbation method~\cite{xue2021quantum}, which reduces the problem to solving a sequence of linearized equations; and quantum algorithms for nonlinear Schr\"odinger-type equations~\cite{lloyd2020quantum}.

Our work is related to recent efforts that introduce shifting or piecewise embedding ideas into Carleman linearization. Endo and Takahashi~\cite{endo2024divergence} proposed a \emph{pivot-switching} method, in which the pivot state $s$ is repeatedly updated as the trajectory evolves, while the Carleman lifting is still performed about the origin. We emphasize that, despite the similar nomenclature, their method differs from ours in two essential respects: they switch pivots dynamically over the simulation, whereas the present work uses a single static shift; and their generator expansion is performed only locally, whereas we Carleman-lift directly around the shifted pivot. Although the strategy of~\cite{endo2024divergence} can extend the accessible simulation time, it generally produces a linear approximation with a non-sparse coefficient matrix, which results in a substantial increase in the cost of block encoding and quantum simulation. In addition, during the pivot-switching step, the Carleman linearization requires block encodings of both the new coefficient matrix and the updated initial state. To the best of our knowledge, it remains unclear whether these objects can be prepared efficiently.
Closely related ideas were explored in the global piecewise Carleman embedding framework of~\cite{novikau2025globalizing}, which showed that some chaotic systems can be approximated by Carleman linearization if pivot shifts are performed sufficiently frequently. Shifted formulations also appear in quantum algorithms based on lattice Boltzmann methods~\cite{jenningsEndtoendQuantumAlgorithm2025}. In the quantum setting, however, repeated pivot updates can introduce substantial overhead, since they require measurements to estimate both the current state and the next pivot, while the resulting loss of sparsity may significantly increase the cost of block encoding. In our work, we mainly focus on a single pivot shift, which avoids the overhead of repeated measurements and pivot updates. To the best of our knowledge, it remains unclear how to implement pivot switching efficiently in practice. Although we do not consider switching between pivot points, our results establish a theoretical foundation for the convergence of a single step in~\cite{endo2024divergence,novikau2025globalizing}. For unstable systems, our analysis further provides a theoretical characterization of an upper bound on the switching time. We will discuss the potential of multiple pivot shifts (pivot switching) in~\cref{sec:discussion}.

The key primitive underlying Carleman-based algorithm is an efficient quantum solver for linear differential equations. Existing quantum algorithms for linear differential equations are typically based on the QLSA~\cite{berry2017quantum}, time-marching methods~\cite{fang2023time}, or reformulations in terms of Hamiltonian simulation~\cite{an2023linear,PhysRevLett.133.230602,an2026quantum}. The optimal query complexity for general linear differential equations is limited by the $\Omega(T)$ lower bound for Hamiltonian simulation~\cite{berry2015hamiltonian}. In contrast, recent works have shown that for certain classes of linear differential equations, we can obtain fast forwarded complexity~\cite{an2025quantum}.
In particular, for stable linear systems, the complexity of  history state preparation can be reduced to $\mathcal{O}(\sqrt{T})$~\cite{jennings2024cost}.
For dissipative linear systems, the query complexity of final state preparation can be improved to $\widetilde{\mathcal{O}}(\sqrt{T})$ via QLSA~\cite{An_2026}, and the dependence on $T$ can be eliminated using LCHS-based or time-marching approaches~\cite{yangQuantumDifferentialEquation2025}. In this work,
we analyze the QLSA-based Carleman linearization with the linear differential equation solver developed in~\cite{An_2026}.
Our analysis is tailored to final state preparation and achieves a query complexity $\widetilde{\mathcal{O}}(\sqrt{T})$ using~\cite{An_2026}. This improves over the  final state preparation cost obtained from the linear differential equation solvers used in~\cite{jennings_quantum_2025}, namely $\widetilde{\mathcal{O}}(T^{3/4})$ scaling based query complexity on~\cite{jennings2024cost}. Moreover, we formulate the  query complexity in terms of the condition number $\kappa(Q)$, rather than the norm of $Q$, making the bound invariant under rescaling of $Q$. Such rescaling invariance is essential for a consistent treatment of block encoding normalization factors.
\begin{table}[t]
    \renewcommand{\arraystretch}{1.8}
    \centering
    \scalebox{0.82}{
    \begin{tabular}{c|c|c|c|c|c}
        \hline\hline
        \multirow{2}{8em}{\makecell[c]{\textbf{}}}
        & \multirow{2}{10em}{\makecell[c]{\textbf{Method}}}
        & \multirow{2}{10em}{\makecell[c]{\textbf{Stability}}}
        & \multicolumn{2}{c|}{\textbf{Query complexity}}
        & \multirow{2}{9em}{\makecell[c]{\textbf{Remark}}} \\
        \cline{4-5}
        & & & \textbf{History state} & \textbf{Final state} & \\
        \hline
        \multirow{2}{8em}{\makecell{Linear ODEs}}
        & \makecell{Truncated Taylor ~\cite{kroviImprovedQuantumAlgorithms2023,berry2014high}}
        & $\max_t \|e^{At}\| = \mathcal{O}(1)$
        & $\widetilde{\mathcal{O}}\left(T\,\mathrm{poly}\log(1/\epsilon)\right)$
        & $\widetilde{\mathcal{O}}\left(T\,\mathrm{poly}\log(1/\epsilon)\right)$
        & QLSA based\\
        \cline{2-6}
        & \makecell{LCHS/Schr\"odingerization~\cite{an2023linear,PhysRevLett.133.230602}}
        & $A(t)+A(t)^\dag<0$
        & $\widetilde{\mathcal{O}}\left(T\,\mathrm{poly}\log(1/\epsilon)\right)$
        & $\widetilde{\mathcal{O}}\left(T\,\mathrm{poly}\log(1/\epsilon)\right)$
        & Hamiltonian simulation\\
        \hline
        \multirow{3}{8em}{\makecell{Fast-forwarded\\ Linear ODEs}}
        & \makecell{Fast-forwarded Taylor~\cite{jennings2024cost}}
        & $\alpha < 0$
        & $\widetilde{\mathcal{O}}\left(T^{1/2}(\log(1/\epsilon))^2\right)$
        & $\widetilde{\mathcal{O}}\left(T^{3/4}(\log(1/\epsilon))^2\right)$
        & QLSA based \\
        \cline{2-6}
        & \makecell{Fast-forwarded Euler/Dyson~\cite{An_2026}}
        & $A(t)+A(t)^\dagger <0$
        & $\mathcal{O}\left(\log(T)(\log(1/\epsilon))^2\right)$
        & $\widetilde{\mathcal{O}}\left(T^{1/2}(\log(1/\epsilon))^2\right)$
        & QLSA based \\
        \cline{2-6}
        & \makecell{Fast-forwarded LCHS~\cite{yangQuantumDifferentialEquation2025}}
        & \makecell{$A(t)+A(t)^\dagger <-2\eta<0$\\$T=\widetilde{\Omega}(\frac{1}{\eta})$}
        & $\mathcal{O}(\log^3(1/\epsilon))$
        & $\mathcal{O}(\log^3(1/\epsilon))$
        & Hamiltonian simulation\\
        \hline
        \multirow{4}{8em}{\makecell{Carleman-based\\ Nonlinear ODEs}}
        & Liu et. al.~\cite{liu_efficient_2021}
        & $\mu<0, R<1$
        & /
        & $\widetilde{\mathcal O}(T^2/\epsilon)$
        & Forward Euler\\
        \cline{2-6}
        & Krovi~\cite{kroviImprovedQuantumAlgorithms2023}
        & $\mu<0, R<1$
        & /
        & $\widetilde{\mathcal O}(T\polylog(1/\epsilon))$
        & \makecell{Truncated Taylor} \\
        \cline{2-6}
        & Jennings et al.~\cite{jennings_quantum_2025}
        & \makecell{$\alpha <0, R_P<1$}
        & $\mathcal{O}(\poly\log(T/\epsilon))$
        & /
        & \makecell{Fast-forwarded Taylor~\cite{jennings2024cost,An_2026}} \\
        \cline{2-6}
        & This work
        & \cref{assumption:stability,assumption:stability2}
        & /
        & $\widetilde{\mathcal{O}}(\sqrt{T}\polylog(1/\epsilon))$
        & \makecell{Fast-forwarded Taylor~\cite{An_2026}} \\
        \hline\hline
    \end{tabular}
    }
    \caption{Comparison of quantum algorithms for linear and nonlinear differential equations. Here, \(\alpha\) denotes the spectral abscissa and \(\mu\) denotes the logarithmic norm of the linear part of the differential equation. The table focuses on the dissipative or stable setting and the time-independent case.}
    \label{tab:comparison_ode_algorithms}
\end{table}

\subsection{Organization}
The remainder of the paper is organized as follows. In~\cref{sec:preliminaries}, we introduced the notation, norm properties and quantum primitives.
In~\cref{sec:algorithm}, we present the pivot-shifted Carleman linearization algorithm.
In~\cref{sec:complexity}, we analyze the error convergence and complexity.  In~\cref{sec:example}, we demonstrate the performance of our algorithm through numerical experiments. We summarize our results and discuss future directions in~\cref{sec:discussion}.

\section{Preliminaries}\label{sec:preliminaries}
\paragraph{Norm and Properties}
Throughout this paper, we use $\|\cdot \|$ to denote the Euclidean vector norm or the induced matrix norm. In this section, we introduce the property of the weighted norm for matrices in $\mathbb{R}^{n\times 1}$, $\mathbb{R}^{n\times n}$ and $\mathbb{R}^{n\times n^2}$. Given a vector $\psi\in\mathbb{R}^n$, we denote $\ket{\psi} = \psi/\|\psi\|$ as the normalized quantum state.

Given a positive definite matrix $O \succeq 0 $, we define the $O$-induced inner product by
\begin{equation}
     \langle x, y \rangle_O = x^\dag O y.
\end{equation}
Then,
\begin{itemize}
    \item For $x\in \mathbb{R}^{n\times 1}$, the weighted norm  is
    \begin{equation}
        \|x\|_O = \sqrt{\langle x, x\rangle_O}.
    \end{equation}
    \item For $F \in \mathbb{R}^{n\times n }$, the weighted matrix norm
    \begin{equation}
        \|F\|_O := \max_{x\neq 0}\frac{\|Fx\|_O}{\|x\|_O} = \max_{x\neq 0}\frac{\|\sqrt{O}Fx\|}{\|\sqrt{O}x\|} = \max_{y\neq 0}\frac{\left\|\sqrt{O}F\sqrt{O}^{-1}y\right\|}{\|y\|} = \left\|\sqrt{O}F\sqrt{O}^{-1}\right\|\,.
    \end{equation}
    \item For $G \in\mathbb{R}^{n\times n^2}$, the norm follows
    \begin{equation}
 \|G\|_O :=\max_{x\neq 0}\frac{\|Gx\|_O}{\|x\|_O} = \| O^{1/2} G (O^{-1/2}\otimes O^{-1/2})\|\,.
\end{equation}
\end{itemize}

The logarithmic norm can also be defined from the vector norm. For a matrix $F \in \mathbb{R}^{n\times n }$, its log norm is defined as
\begin{equation}\label{eq:lognorm}
    \mu(F) = \frac{1}{2}\max_i\lambda_i\left(F + F^\dag\right)\,.
\end{equation}
Correspondingly, the weighted log norm follows
\begin{equation}
    \mu_O(F) = \max_{x\neq 0} \mathrm{Re}\left(\frac{\langle F x, x\rangle_O}{\langle x, x\rangle_O}\right)=\lim_{h\rightarrow 0^+}\frac{\|I + hF\|_O-1}{h}=\lim_{h\rightarrow 0^+}\frac{\left\|I + h\sqrt{O}F\sqrt{O}^{-1}\right\|-1}{h} = \mu\left(\sqrt{O}F\sqrt{O}^{-1}\right)\,.
\end{equation}

For two functions $f, g$ that characterize complexity scaling, we use the notation $f = \mathcal{O}(g)$ if there exists some constant $c>0$ such that $f\leq c g$. We write $f = \Theta(g)$ or $f\sim g$ if both $f = \mathcal{O}(g)$ and $g = \mathcal{O}(f)$ hold. Additionally, $f = \widetilde{\mathcal{O}}(g)$ denotes the polylogarithmic factors are suppressed, i.e., $ f = \mathcal{O}(g\polylog (g))$.

\paragraph{Quantum primitive} Quantum hardware naturally implements unitary evolution. However, most matrices in scientific computing do not satisfy the unitary property. A natural idea is therefore to construct a larger unitary matrix so that the non unitary matrix $A$ becomes the upper left block, i.e.
\begin{equation}
U_A \approx
    \begin{pmatrix}
        A/\alpha & *\\ * & * \\
    \end{pmatrix}.
\end{equation}
This is the key idea behind the block encoding, an important building block in quantum algorithm design.
\begin{defn}[Block Encoding]
Given an $n$-qubit operator $A$, if there exists $\alpha, \epsilon>0$ and $a+n$-qubit unitary $U_A$ such that
    \begin{equation}
        \| A- \alpha (\bra{0}^{\otimes a}\otimes I_n) U_A (\ket{0^a}\otimes I_n)\|\leq \epsilon
    \end{equation}
    then $U_A$ is called an $(\alpha, a, \epsilon)$-block encoding of $A$. The block encoding is exact when $\epsilon = 0$.
\end{defn}
While block encoding is not guaranteed to be efficient for arbitrary dense matrices, linear combinations of unitaries (LCU) provides flexibility in constructing suitable block encodings~\cite[Lemma 52]{gilyen2019quantum}. We introduce the slightly generalized version with different normalizing factors stated in~\cite[Lemma 5]{li2023succinct}.
\begin{lem}[Linear combination of unitaries]
    Suppose $A \coloneqq \sum_{j=1}^{m} y_j A_j \in \mathbb{C}^{2^n\times 2^n}$, where $A_j \in \mathbb{C}^{2^n \times 2^n}$ and $y_j>0$ for all $j \in \{1, \ldots m\}$. Let $U_j$ be an $(\alpha_j, a, \epsilon)$-block-encoding of $A_j$ and $s = \sum_{j = 1}^m y_j \alpha_j$. Define the state preparation oracle pair as
    \begin{equation}
    \begin{aligned}
        \mathbf{Prep}: \ket{0}\rightarrow \sum_{j = 0}^{m-1}\sqrt{\frac{\alpha_jy_j}{s}}\ket{j},\\
    \end{aligned}
    \end{equation}
    and the select oracle as
    \begin{equation}
        \mathbf{Select} = \sum_{j = 0}^{m-1}\ketbra{j}\otimes U_j.
    \end{equation}
    Then the matrix
    \begin{equation}
        (\mathbf{Prep}^\dag\otimes I ) \mathbf{Select} (\mathbf{Prep}\otimes I )
    \end{equation}
    is an $\left(\sum_{j = 1}^{m} \alpha_jy_j, a+\log(m),\sum_{j = 0}^{m-1}\alpha_jy_j\epsilon\right)$-block encoding of $A$.
\end{lem}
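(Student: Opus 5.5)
We prove the LCU lemma by computing the top-left block of $W:=(\mathbf{Prep}^\dag\otimes I)\,\mathbf{Select}\,(\mathbf{Prep}\otimes I)$ directly. We then bound the error term by term using the triangle inequality.

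Write $\mathbf{Prep}\ket{0}=\sum_j \sqrt{\alpha_j y_j/s}\,\ket{j}$, with $s=\sum_j\alpha_j y_j$. We project onto the all-zero state of the $\log(m)$ index qubits together with the $a$ block-encoding ancillas, which gives $a+\log(m)$ ancillas in total. The select oracle acts blockwise, and $\mathbf{Prep}^\dag$ on the left contributes the conjugate amplitudes. Since the $\sqrt{\alpha_j y_j/s}$ are real and nonnegative, we obtain
\begin{equation}
(\bra{0}\otimes\bra{0^a}\otimes I)\,W\,(\ket{0}\otimes\ket{0^a}\otimes I)
=\sum_{j}\frac{\alpha_j y_j}{s}\,(\bra{0^a}\otimes I)U_j(\ket{0^a}\otimes I).
\end{equation}
Here we use that $\langle 0|\mathbf{Prep}^\dag|j\rangle = \overline{\langle j|\mathbf{Prep}|0\rangle}$, so only the first column of $\mathbf{Prep}$ enters. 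The other columns are arbitrary, and this is why an isometry-type state preparation suffices.

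Now define $\widetilde A_j:=\alpha_j(\bra{0^a}\otimes I)U_j(\ket{0^a}\otimes I)$, so that $\|A_j-\widetilde A_j\|\le\epsilon$ by hypothesis. Multiplying the block above by $s$ gives $\sum_j y_j\widetilde A_j$. Hence
\begin{equation}
\Bigl\|A - s\,(\bra{0^{a+\log m}}\otimes I)W(\ket{0^{a+\log m}}\otimes I)\Bigr\|
\le \sum_j y_j\|A_j-\widetilde A_j\|\le \sum_j y_j\,\epsilon .
\end{equation}
This shows that $W$ is an $(s, a+\log(m), \sum_j y_j\epsilon)$-block encoding. The stated error $\sum_j\alpha_j y_j\epsilon$ follows in the usual regime $\alpha_j\ge1$, which holds whenever $\|A_j\|\le \alpha_j$ up to $\epsilon$ normalization.

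The remaining subtlety is this bookkeeping of the error. If one instead takes the convention that each $U_j$ encodes $A_j/\alpha_j$ to accuracy $\epsilon$, then the error rescales by $\alpha_j$ and gives exactly $\sum_j\alpha_j y_j\epsilon$. I would state the convention explicitly and conclude under it. The other steps are a routine expansion with no real obstacle, apart from handling the indexing $j=0,\dots,m-1$ versus $1,\dots,m$ consistently.
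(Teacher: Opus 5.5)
Your top-left-block computation is the standard LCU argument. The paper gives no proof of its own and only cites \cite[Lemma 52]{gilyen2019quantum} and \cite[Lemma 5]{li2023succinct}, which argue the same way. Everything up to
\[
\Bigl\|A - s\,(\bra{0^{a+\log m}}\otimes I)\,W\,(\ket{0^{a+\log m}}\otimes I)\Bigr\| \le \sum_j y_j\,\epsilon
\]
is correct. The gap is in your last step, and it cannot be closed, because the statement as written is too strong.

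Your claim that $\alpha_j\ge 1$ ``holds whenever $\|A_j\|\le\alpha_j$'' is false. The definition only forces $\alpha_j\ge\|A_j\|-\epsilon$, so for instance $\alpha_j=1/2$ is admissible. Under the paper's absolute-error Definition of block encoding, the bound $\sum_j\alpha_j y_j\epsilon$ in fact fails when $\alpha_j<1$. Take $m=2$, $U_1=U_2=U$ with top-left block $\widetilde U$, $\alpha_1=\alpha_2=1/2$, and $A_1=A_2=\tfrac12\widetilde U+\epsilon I$. Each $U_j$ is a valid $(1/2,a,\epsilon)$-block encoding of $A_j$. Moreover $\mathbf{Select}=I\otimes U$, so $W=I\otimes U$. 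With $s=\tfrac12(y_1+y_2)$ one gets $A-s\widetilde U=(y_1+y_2)\epsilon I$, whose norm $(y_1+y_2)\epsilon$ is twice the claimed $\sum_j\alpha_j y_j\epsilon$. So your bound $\sum_j y_j\epsilon$ is the correct conclusion, and this example shows it is tight. It implies the stated bound only under an extra hypothesis such as $\alpha_j\ge1$ for all $j$.

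Your fallback convention, where each $U_j$ encodes $A_j/\alpha_j$ to accuracy $\epsilon$, does reproduce $\sum_j\alpha_j y_j\epsilon$. It does so only by measuring the input errors relatively and the output error absolutely. That is not the paper's Definition, and in a consistently relative convention the output error would simply be $\epsilon$. The right repair is to restate the lemma with error $\sum_j y_j\epsilon$, or more generally $\sum_j y_j\epsilon_j$ for component errors $\epsilon_j$.

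This unweighted version is what the paper actually uses later, in Appendix~\ref{appd:QuantumOracles}. The errors $(2\|s\|+1)\epsilon$ for $F_{1,s}$ and $(\|s\|^2+\|s\|+1)\epsilon$ for $F_{0,s}$ come from adding the component errors ($\epsilon,2\|s\|\epsilon$, respectively $\epsilon,\|s\|\epsilon,\|s\|^2\epsilon$) with weights $y_j=1$, not $\alpha_j y_j$.

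A minor bookkeeping point: $\log m$ should be $\lceil\log_2 m\rceil$, with $\mathbf{Select}$ padded by the identity on unused index states so that it is unitary.
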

Additionally, the matrix inverse can be implemented through the QLSA with the following query complexity.
\begin{lem}\label{lem:QLSP}(\cite[Theorem 11]{costa2022optimal})
    Suppose a matrix $A\in\mathbb{C}^{2^n\times 2^n}$ with $\|A\| = 1$ and $\|A^{-1}\| = \kappa$. Given a block encoding oracle of the matrix $A$ and a state preparation oracle for $\ket{b}$, then the approximation of the normalized state $\ket{A^{-1}b}$ with precision $\epsilon$ can be produced with
    \begin{equation}
        \mathcal{O}(\kappa \log(1/\epsilon))
    \end{equation}
    calls to the oracles.
\end{lem}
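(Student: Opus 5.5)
The approach is the one of~\cite{costa2022optimal}: an adiabatic preparation of $\ket{A^{-1}b}$ to constant fidelity using $\mathcal{O}(\kappa)$ queries, followed by eigenstate filtering to boost the precision to $\epsilon$ at cost $\mathcal{O}(\kappa\log(1/\epsilon))$. The two stages add to the claimed bound.

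\textbf{Step 1: Hermitian embedding with a null-space target.} I first reduce to a Hermitian problem. Let $Q_b = I-\ketbra{b}{b}$ and work on an enlarged space with one extra qubit. Define
\begin{equation*}
H_0 = \begin{pmatrix} 0 & Q_b\\ Q_b & 0\end{pmatrix},\qquad H_1 = \begin{pmatrix} 0 & A Q_b\\ Q_b A^\dagger & 0\end{pmatrix}.
\end{equation*}
Then $\ket{0,b}$ lies in the null space of $H_0$, and $\ket{1,A^{-1}b}$ lies in the null space of $H_1$ (since $Q_bAA^{-1}b=0$). Set $H(f)=(1-f)H_0+fH_1$. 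On the relevant invariant subspace, $H(f)$ has a zero eigenvalue separated by a gap $\Delta(f)\gtrsim 1-f+f/\kappa$. Block encodings of $H(f)$ follow from the oracle for $A$, the preparation oracle for $\ket{b}$, and the LCU lemma above, with $\mathcal{O}(1)$ queries each.

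\textbf{Step 2: discrete adiabatic evolution.} I would replace continuous Hamiltonian evolution by a discrete walk. For $j=1,\dots,T$, the walk operators are $W_j$, the qubitized walk operators built from the block encodings of $H(f(j/T))$. The schedule is chosen so that $f'(s)\propto\Delta(f(s))^{p}$ for some $1<p<2$, namely
\begin{equation*}
f(s)=\frac{\kappa}{\kappa-1}\left(1-\bigl(1+s(\kappa^{p-1}-1)\bigr)^{1/(1-p)}\right).
\end{equation*}
A discrete adiabatic theorem would then bound the distance from the instantaneous zero eigenstate by
\begin{equation*}
\mathcal{O}\!\left(\frac{1}{T}\left[\frac{\|W'\|}{\Delta^2}\Big|_{\mathrm{ends}} + \int_0^1\left(\frac{\|W''\|}{\Delta^2}+\frac{\|W'\|^2}{\Delta^3}\right)ds\right]\right).
\end{equation*}
With this schedule the bracket is $\mathcal{O}(\kappa)$, so $T=\mathcal{O}(\kappa)$ walk steps yield a state with constant overlap on $\ket{1,A^{-1}b}$.

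\textbf{Step 3: eigenstate filtering.} Next I use QSP/QSVT to apply to $H_1$ the minimax polynomial $R_\ell(x;\Delta_1)$ with $\Delta_1=1/\kappa$. This polynomial equals $1$ at $0$ and is bounded by $\epsilon$ on $\{\Delta_1\le|x|\le1\}$, with degree $\ell=\mathcal{O}(\kappa\log(1/\epsilon))$. Because the input overlap is a constant, a constant number of repetitions (or fixed-point amplitude amplification) succeeds with $\mathcal{O}(1)$ overhead. The result is an $\epsilon$-approximation of $\ket{A^{-1}b}$ after discarding the ancilla qubit.

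\textbf{Main obstacle.} The hard part is Step 2: proving the discrete adiabatic bound with cost linear in $\kappa$ and no $\log\kappa$ factor. The continuous adiabatic theorem with a naive schedule gives $\kappa\log\kappa$ or worse. Removing the logarithm requires two things. First, a rigorous discrete adiabatic theorem for walk operators, including control of the error from the difference between the walk-operator eigenphases and $\arcsin$ of the eigenvalues of $H$. Second, the specific $p$-dependent schedule, which makes the integrals of $\|W'\|^2/\Delta^3$ and $\|W''\|/\Delta^2$ converge to $\mathcal{O}(\kappa)$. I would first try bounding $\|W^{(k)}\|\lesssim |f^{(k)}|$ together with summation-by-parts estimates on the discrete sums, and then check that $1<p<2$ keeps both endpoint and bulk terms at $\mathcal{O}(\kappa)$.
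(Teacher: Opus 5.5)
The paper does not prove this lemma; it imports it from Costa et al. Your two-stage outline follows that source's route: an AQC($p$)-scheduled discrete adiabatic walk reaching constant fidelity in $\mathcal{O}(\kappa)$ steps, then eigenstate filtering of degree $\mathcal{O}(\kappa\log(1/\epsilon))$. There is, however, a genuine gap in Step 1 for the generality in which the lemma is stated, where $A$ is only assumed invertible with $\|A\|=1$ and $\|A^{-1}\|=\kappa$.

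Your interpolant is $H(f)=\begin{pmatrix}0&A(f)Q_b\\ Q_bA(f)^\dagger&0\end{pmatrix}$ with $A(f)=(1-f)I+fA$. On the relevant subspace, its gap is the smallest singular value of $A(f)$ restricted to $b^\perp$. The bound $\Delta(f)\gtrsim 1-f+f/\kappa$ is the An--Lin estimate for Hermitian positive definite $A$. For general $A$ it is false, because $(1-f)I+fA$ can be singular for some $f\in(0,1)$. For example, $A=\mathrm{diag}(1,-1,-1)$ has $\kappa=1$, but $A(1/2)=\mathrm{diag}(1,0,0)$ has a two-dimensional kernel. That kernel meets $b^\perp$ for every $b$, so $H(1/2)$ acquires additional zero eigenvalues in the relevant subspace. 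The discrete adiabatic theorem then gives no bound in terms of $\kappa$, and the $\mathcal{O}(\kappa)$ estimate of Step 2 is unsupported.

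There is also a smaller bookkeeping slip. With $H_1$ as written, $\ker H_1$ is spanned by $\ket{1}\otimes b$ and $\ket{0}\otimes (A^\dagger)^{-1}b$, not by a vector containing $\ket{1}\otimes A^{-1}b$. Your check $Q_bAA^{-1}b=0$ applies to the block $Q_bA$, not to $AQ_b$. So starting from $\ket{0}\otimes b$, you would actually be solving $A^\dagger y=b$.

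The missing idea is the Hermitian dilation with a $\sigma_z$ interpolation, which An and Lin and Costa et al.\ use for non-positive-definite $A$.
\begin{itemize}
\item Set $\bar A=\begin{pmatrix}0&A\\A^\dagger&0\end{pmatrix}$. It is Hermitian, with $\|\bar A\|=1$, $\|\bar A^{-1}\|=\kappa$, and $\bar A^{-1}(\ket{0}\otimes b)=\ket{1}\otimes A^{-1}b$.
\item Interpolate $A(f)=(1-f)\,\sigma_z\otimes I+f\bar A$, and build $H(f)$ from $A(f)$ and $\bar b=\ket{0}\otimes b$ exactly as before. This $H(f)$ is still linear in $f$.
\item Because $\sigma_z\otimes I$ anticommutes with $\bar A$, one has $A(f)^2=(1-f)^2I+f^2\bar A^2$. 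Hence $\sigma_{\min}(A(f))\ge\sqrt{(1-f)^2+f^2/\kappa^2}\ge(1-f+f/\kappa)/\sqrt{2}$.
\item In addition, $A(0)^{-1}\bar b=\bar b$ and $\|A(f)\|\le 1$.
\end{itemize}
With these facts, your schedule, the adiabatic estimate, and the filtering step go through with only constants changed; discarding the dilation qubit then yields $\ket{A^{-1}b}$.

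In this paper the lemma is only invoked for $A=Q/\alpha_Q$ with $Q=\sqrt{P}/\gamma$ Hermitian positive definite. There your Step 1 is correct up to the qubit relabeling. As a proof of the lemma as stated, however, the dilation step is required.
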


\paragraph{Prepare the shifted initial state}
To ensure efficient encoding of the problem in~\cref{eq:qODEx} onto quantum computer, we make the following assumptions on the input model:
\begin{itemize}
  \item We assume $F_2, F_1, F_0$ are block encoded with normalizing factors $\alpha_{F_2}, \alpha_{F_1}, \alpha_{F_0}$.
  \item We assume the matrix $Q$ that is proportional to $\sqrt{P}$ are block encoded with normalizing factor $\alpha_Q=\mathcal{O}\left(\|Q\|\right)$. We denote the condition number of $Q$ by $\kappa_Q = \|Q\|\|Q^{-1}\|$.
  \item We assume $\|x_0\|$ is know and the normalized vector $\ket{x_0}$ can be efficiently prepared by an initial state preparation oracle, meaning $\ket{x_0}= O_{x_0}\ket{0}$ for some unitary $O_{x_0}$ that can be implemented efficiently.
  \item Given the pivot $s$, we assume $\|s\|$ is known and a normalized state $\ket{s} = s/\|s\|$ can be efficiently prepared by an initial state preparation oracle $O_s$.
\end{itemize}

Given a pivot state $s$ and the shifted variable $u = x - s$,~\cref{eq:qODEx} can be rewritten as~\cref{eqn:shiftedODE} with
\begin{equation}\label{eq:Fis}
    F_{2, s} = F_{2}, \quad F_{1, s}= F_1 + F_2(s\otimes I_n + I_n\otimes s), \quad F_{0,s} = F_2s^{\otimes 2}+ F_1 s + F_0\,.
\end{equation}
The cost of constructing $F_{2,s}$, $F_{1,s}$, and $F_{0,s}$ and the state preparation oracle for the shifted initial state $\ket{u(0)}$ are summarized in the following lemma.
\begin{lem}\label{lem:query0} Assume access to the the block encoding oracle $O_{F_2}, O_{F_1}, O_{F_0}$, the state preparation oracles $O_x, O_s$, as introduced above. The following hold:
    \begin{itemize}
        \item The block encoding oracle for $F_{0,s},F_{1,s},F_{2,s}$ can be constructed with normalizing factor $\alpha_{F_{2,s}} = \alpha_{F_2}$, $\alpha_{F_{1,s}} = \alpha_{F_1}+2\|s\| \alpha_{F_2}$ and $\alpha_{F_{0,s}} = \alpha_{F_0}+\alpha_{F_1} \|s\| + \alpha_{F_2}\|s\|^2$.
        \item The shifted initial state $\ket{u(0)} = \ket{x(0)-s}$ can be prepared with $\mathcal{O}\left(\frac{\sqrt{\|x(0)\|^2 + \|s\|^2}}{\|x(0)-s\|}\right)$ queries to $O_{x_0}$ and $O_s$.
        \item The final state $\ket{x(T)}$ can be shifted back from $\ket{u(T)}$ with $\mathcal{O}\left(\frac{\sqrt{\|x(T)-s\|^2+ \|s\|^2 } }{\sqrt{p}\|x(T)\|}\right)$  queries to the preparation oracles $O_{u(T)}$ and $O_s$, where $O_{u(T)}$ is the state preparation oracle such that
        \[
        O_{u(T)}\ket{0^{m+k}} = \sqrt{p}\ket{0^k}\ket{u(T)} + \sqrt{1-p}\ket{\perp}, \quad \bra{0^k}\ket{\perp} = 0.
        \]
    \end{itemize}
\end{lem}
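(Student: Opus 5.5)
The proof splits into the three bullets of Lemma~\ref{lem:query0}. Each follows from linear combination of unitaries (LCU) combined with amplitude amplification.

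\emph{Block encodings of the shifted coefficients.} We use the explicit formulas in~\cref{eq:Fis}.
\begin{itemize}
\item $F_{2,s}=F_2$ needs no work.
\item For $F_{1,s}=F_1+F_2(s\otimes I_n)+F_2(I_n\otimes s)$, first block encode $s\otimes I_n$ with normalizing factor $\|s\|$. Place $O_s$ on an ancilla register and apply the isometry $\ket{0}\mapsto\ket{s}$; after projecting the ancilla onto $\ket{0}$, this gives $(\ket{s}\langle 0|)\otimes I_n$, whose top-left block times $\|s\|$ is $s\otimes I_n$. The analogous construction with the registers swapped gives $I_n\otimes s$.
\item Multiply by the block encoding of $F_2$. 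Normalizing factors multiply under products, so each cross term gets factor $\alpha_{F_2}\|s\|$.
\item Combine $F_1$ and the two cross terms with the LCU lemma. The normalizing factors add, giving $\alpha_{F_1}+2\|s\|\alpha_{F_2}$.
\item Similarly, $F_2 s^{\otimes 2}=F_2(s\otimes s)$ has factor $\alpha_{F_2}\|s\|^2$, using two copies of $O_s$ as a state preparation viewed as an $n^2\times 1$ block encoding. $F_1 s$ has factor $\alpha_{F_1}\|s\|$, and $F_0$ is given with factor $\alpha_{F_0}$. The LCU lemma then gives $\alpha_{F_{0,s}}$ as stated.
\end{itemize}
The only care needed is padding dimensions so that rectangular matrices ($n\times n^2$, $n\times 1$) fit into square unitaries. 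This is routine, and it is the point where I would be explicit about the ancilla counts.

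\emph{Shifted initial state.} Write $x(0)-s=\|x(0)\|\ket{x(0)}-\|s\|\ket{s}$. Prepare the control state
\[
\frac{\|x(0)\|\ket{0}+\|s\|\ket{1}}{\sqrt{\|x(0)\|^2+\|s\|^2}},
\]
apply controlled $O_{x_0}$ on $\ket 0$ and controlled $O_s$ on $\ket 1$, with a $Z$ phase on the $\ket 1$ branch. Then apply a Hadamard to the control and postselect on $\ket{0}$. The postselected branch is $(x(0)-s)/\sqrt{2(\|x(0)\|^2+\|s\|^2)}$, so the success amplitude is
\[
\frac{\|x(0)-s\|}{\sqrt{2(\|x(0)\|^2+\|s\|^2)}}.
\]
Fixed-point or standard amplitude amplification then uses $\mathcal{O}\bigl(\sqrt{\|x(0)\|^2+\|s\|^2}/\|x(0)-s\|\bigr)$ queries. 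This requires knowing the amplitude approximately, which we have since the norms are assumed known; otherwise fixed-point amplification suffices.

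\emph{Shifting back.} Repeat the same LCU, now combining $O_{u(T)}$ and $O_s$. On the flag $\ket{0^k}$ branch, $O_{u(T)}$ carries amplitude $\sqrt p$. Weight the control as
\[
\frac{\|u(T)\|\ket 0+\|s\|\ket 1}{\sqrt{\|u(T)\|^2+\|s\|^2}},
\]
which requires knowing $\|u(T)\|$, or an estimate of it obtained from the ODE solver's normalization. The good branch is then proportional to
\[
\sqrt p\,\|u(T)\|\ket{u(T)}+\sqrt p\,\|s\|\ket{s}\cdot(\text{correction}).
\]
\textbf{Main obstacle.} The subtle point is that only the $O_{u(T)}$ branch is damped by $\sqrt p$, so the two branches must be balanced. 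I would fix this by also wrapping $O_s$ with an artificial flag of amplitude $\sqrt p$, for example by a single-qubit rotation, so that both terms are scaled equally. With that correction the good component is
\[
\frac{\sqrt p\,(u(T)+s)}{\sqrt{2(\|u(T)\|^2+\|s\|^2)}}=\frac{\sqrt p\,x(T)}{\sqrt{2(\|x(T)-s\|^2+\|s\|^2)}}.
\]
Amplitude amplification then yields the claimed count
\[
\mathcal{O}\!\left(\frac{\sqrt{\|x(T)-s\|^2+\|s\|^2}}{\sqrt p\,\|x(T)\|}\right).
\]
The branch balancing, and the dependence on $p$ and the norms being known to sufficient accuracy, are the details I would check most carefully.
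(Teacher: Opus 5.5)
Your proposal is correct and follows the paper's proof essentially step for step. For the coefficients, both obtain $s\otimes I_n$, $I_n\otimes s$ and $s\otimes s$ from $O_s$, multiply by the block encodings of $F_2$ and $F_1$, and combine the terms with LCU to get the stated normalizing factors. For both shifts, both use a norm-weighted control qubit, controlled state preparations, a Hadamard, postselection and amplitude amplification. Your explicit $\sqrt p$ rotation on the $O_s$ branch is exactly what the paper's intermediate state does implicitly, since that state carries $\sqrt p$ and $\sqrt{1-p}$ on both branches. Your $Z$ phase also correctly supplies the minus sign needed for $x(0)-s$, which a Hadamard followed by postselection on $|0\rangle$ alone would not give.
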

The detailed construction and cost of each oracle can be found in~\cref{appd:QuantumOracles}. In the above theorem, $O_{u(T)}$ will be constructed from the Carleman linearization of the shifted system. The detailed discussion and estimation of $p$ can be found in~\cref{appd:QuantumOracles}.

\section{Details of the Pivot-Shifted Algorithm}\label{sec:algorithm}
\begin{figure}
    \centering
    \begin{tikzpicture}[
        node distance=11mm and 20mm,
        box/.style    ={rectangle, rounded corners=4pt,
                        draw=blue!55!cyan!75!black, line width=0.7pt,
                        fill=blue!7, minimum height=10mm, align=center, inner sep=4pt},
        carleman/.style={rectangle, rounded corners=4pt,
                        draw=orange!75!red!70!black, line width=0.7pt,
                        fill=orange!12, minimum height=10mm, align=center, inner sep=4pt},
        lbl/.style    ={inner sep=2pt},
        slbl/.style   ={inner sep=2pt, font=\small},
        ar/.style     ={-{Latex[length=2mm,width=1.8mm]}, line width=0.7pt,
                        draw=blue!55!cyan!75!black},
    ]
        \node[box] (Ax) {$\partial_t x = F_2 x^{\otimes 2} + F_1 x + F_0$};
        \node[box, below=of Ax] (Au) {$\partial_t u = F_{2,s} u^{\otimes 2} + F_{1,s} u + F_{0,s}$};
        \node[box, below=of Au] (Av) {$\partial_t v = E_{2,s} v^{\otimes 2} + E_{1,s} v + E_{0,s}$};

        \draw[ar] (Ax) -- node[lbl, left=2pt] {Pivot shift $u = x - s$} (Au);
        \draw[ar] (Au) -- node[lbl, left=2pt] {Lyapunov transform $v = Qu$} (Av);

        \node[carleman, right=of Av] (Carl) {$\partial_t \hat{z} = B_N \hat{z} + d_N$};
        \node[lbl, below=1mm of Carl] {Carleman lifting};

        \node[above=14mm of Carl, font=\small] (BN)
            {$B_N = \begin{pmatrix}
                B_{1,1} & B_{1,2} &        & 0          \\
                B_{2,1} & B_{2,2} & \ddots &            \\
                        & \ddots  & \ddots & B_{N-1,N}  \\
                0       &         & B_{N,N-1} & B_{N,N}
            \end{pmatrix}$};

        \draw[ar] ([yshift= 2mm]Av.east) to[bend left=22]
            node[lbl, sloped, above] {stable}   ([yshift= 2mm]Carl.west);
        \draw[ar] ([yshift=-2mm]Av.east) to[bend right=22]
            node[lbl, sloped, below] {unstable} ([yshift=-2mm]Carl.west);

        \node[box, right=16mm of Carl] (QLSA) {$A_{M, M_p - 1} Y = b_{M, M_p - 1}$};
        \node[lbl, below=1mm of QLSA] {QLSA};

        \draw[ar] (Carl) -- (QLSA);

        \node[lbl, above=6mm of QLSA] (ztilde) {$\ket{\tilde{z}}$};
        \node[lbl, above=11mm of ztilde] (zone)  {$\ket{\tilde{z}_1}$};
        \node[lbl, above=11mm of zone]   (xT)    {$\ket{x_T}$};
        \draw[ar] (QLSA) -- (ztilde);
        \draw[ar] (ztilde) --
            node[slbl, right=1pt] {Amp.\ amplification} (zone);
        \draw[ar] (zone) --
            node[slbl, right=1pt, align=left] {$Q^{-1}\;+\;$\\Shift oracle} (xT);

        \begin{scope}[on background layer]
            \fill[black!5, rounded corners=6pt]
                ($(current bounding box.south west) + (-3mm, -2mm)$)
                rectangle
                ($(current bounding box.north east) + ( 3mm,  2mm)$);
        \end{scope}
    \end{tikzpicture}
    \caption{Flowchart for the pivot-shifted Carleman linearization algorithm.}
    \label{fig:flowchart}
\end{figure}
We present the pivot-shifted Carleman linearization, incorporating the Lyapunov transform in~\cite{jennings_quantum_2025}, for the quadratic nonlinear differential equation in~\cref{eq:qODEx}.
Our approach applies the Carleman linearization directly around the pivot $s$ after expanding the generator at the same point. This formulation leads to a sparser lifting matrix,  is compatible with single-shot measurement and allows us to combine it with the quantum algorithms designed for stable nonlinear differential equations. The algorithm consists of four steps as illustrated in~\cref{fig:flowchart}:
\begin{enumerate}[label=(\arabic*)]
    \item \textbf{Pivot shift.}
Given a suitable pivot vector $s\in\mathbb{R}^n$, we apply a pivot-shift transformation $u = x-s$, with initial condition $u(0)=x(0)-s$. Under this change of variables, \cref{eq:qODEx} can be rewritten as
\begin{equation}\label{eq:MPSquadratic}
    \partial_t u = F_{2, s} u^{\otimes 2} +   F_{1, s} u +  F_{0,s}\,,
\end{equation}
where the shifted coefficient matrices are given by~\cref{eq:Fis}. The block encoding oracles for $F_{2,s}$, $F_{1,s}$, and $F_{0,s}$, as well as the state preparation oracle for the shifted initial state $u(0)$, can be constructed efficiently, as summarized in~\cref{lem:query0}. When the prior information about a stable region is available, the pivot $s$ can be chosen in that region to improve the stability of the shifted dynamics. In the absence of such information, or when the system is unstable, a natural choice is $s = x(0)$, as in~\cite{novikau2025globalizing}. This choice does not generally stabilize the system, but it gives $u(0) = 0$ and supports short time convergence of the Carleman approximation.

\item  \textbf{Lyapunov Transform.}
With a \textit{positive definite matrix} $Q$, we then consider the change of variable
\begin{equation}
\label{eq:LyapunovTransform}
    v = Qu.
\end{equation}
Under this transformation, the differential equation becomes
    \begin{equation}\label{eqn:v_t_equation}
        \partial_t v = E_{2,s} v^{\otimes 2} + E_{1,s} v+E_{0,s}  \,,
    \end{equation}
    where the coefficient matrices are given by
    \begin{equation}\label{eq:EfromF}
        E_{2,s} = Q F_{2,s} (Q^{-1}\otimes Q^{-1}), E_{1,s} = Q F_{1,s} Q^{-1}, E_{0,s}  = Q F_{0,s}, v_0 = Qu_0\,.
    \end{equation}
This transform is introduced to enhance the stability properties of the linearized system. Under \cref{assumption:stability,assumption:stability2}, we choose $Q= \sqrt{P}/\gamma$, following~\cite{jennings_quantum_2025}, so that the linearized differential equation is dissipative and that can be solved efficiently by a quantum linear differential equation solver. For general systems where the prior knowledge of the underlying system might not be available, we also provide analysis when $Q = I$.

\item  \textbf{Carleman lifting.} We truncate the Carleman Linearization at order $N$ and denote the resulting lifted solution as $\hat z=(\hat z_k)_{k=1}^N$ such that
\begin{equation}\label{eq:carleman-trunc-LM}
\frac{d}{dt} \hat{z} = \frac{d}{dt}
\begin{bmatrix}
\hat z_1 \\ \hat z_2 \\ \vdots \\ \hat z_N
\end{bmatrix}
=
\begin{bmatrix}
 B_{1,1} & B_{1,2} & 0 & \cdots & 0 \\
 B_{2,1} & B_{2,2} & B_{2,3} & \ddots & \vdots \\
 0 & B_{3,2} & B_{3,3} & \ddots & 0 \\
 \vdots & \ddots & \ddots & \ddots & B_{N-1,N} \\
 0 & \cdots & 0 & B_{N,N-1} & B_{N,N}
\end{bmatrix}
\begin{bmatrix}
\hat z_1 \\ \hat z_2 \\ \vdots \\ \hat z_N
\end{bmatrix}
+
\begin{bmatrix}
 E_{0,s} \\ 0 \\ \vdots \\ 0
\end{bmatrix}
=: B_N \hat z + d_N .
\end{equation}
The initial condition approximates the lifted exact solution
\[
\hat z (0) \approx \begin{bmatrix}
        v_0; v_0^{\otimes 2}; \cdots ; v_0^{\otimes N}
    \end{bmatrix}.
\]
This block tridiagonal structure is inherited from the quadratic form of the differential equation.
For admissible indices $j$ from $1$ to $N$, the blocks are given by
\begin{equation}\label{eq:CarlemanA}
\begin{aligned}
    B_{j,j+1} &=
     E_{2,s}\otimes I^{\otimes (j-1)}
    + I\otimes  E_{2,s}\otimes I^{\otimes (j-2)}
    + \cdots
    + I^{\otimes (j-1)}\otimes E_{2,s},\\
    B_{j,j} &=
     E_{1,s}\otimes I^{\otimes (j-1)}
    + I\otimes  E_{1,s}\otimes I^{\otimes (j-2)}
    + \cdots
    + I^{\otimes (j-1)}\otimes E_{1,s},\\
    B_{j,j-1} &=
     E_{0,s}\otimes I^{\otimes (j-1)}
    + I\otimes  E_{0,s}\otimes I^{\otimes (j-2)}
    + \cdots
    + I^{\otimes (j-1)}\otimes  E_{0,s}.
\end{aligned}
\end{equation}
The truncation order $N$ is chosen so that the Carleman truncation error is below the accuracy $\epsilon$. This choice also contributes to the query complexity of the quantum solvers for the linearized differential equation. The detailed discussion on error analysis can be found in~\cref{sec:convergenceMPS_stable}.

\item \textbf{Quantum solver for linear differential equations.}
We solve the lifted linear differential equation in~\eqref{eq:carleman-trunc-LM} with a QLSA-based method for linear differential equations based on truncated Taylor series~\cite{An_2026,berry2014high}.
We briefly summarize the construction and refer to~\cite{An_2026} for further details.

By Duhamel's principle, the solution of the linear ODEs~\cref{eq:carleman-trunc-LM} is
\begin{equation}
    \hat{z}(t) = e^{tB_N} \hat{z}(0) + \int_0^t e^{(t-\tau)B_N}\mathrm{d}\tau d_N.
\end{equation}
Let $M$ be the number of time steps and set $h = T/M$, with grid points $t_m = mh$.
 Approximating the matrix exponential by a truncated Taylor series of order $J$, and
 denoting by $\hat{z}^{(m)}$ the approximation to $\hat{z}(t_m)$, we obtain the one step iteration formula
\begin{equation}
    \hat{z}^{(m+1)} = \sum_{j=0}^J \frac{1}{j!}(B_N h)^j \hat{z}^{(m)} + \sum_{j=0}^{J-1} \frac{1}{(j+1)!}(B_N)^{j}h^{j+1} d_N = R \hat{z}^{(m)} + p .
\end{equation}
The linear system is constructed through concatenating the approximate solutions at different time steps. By appending an extra $M_p - 1$ rows to the linear system, the success probability of measuring the final state can be amplified to a constant. The resulting linear system takes the form
\begin{equation}
    A_{M,M_p-1} Y = b_{M,M_p-1},
\end{equation}
where
\begin{equation}\label{eq:AMMp}
A_{M,M_p-1}=
\left(
\begin{array}{ccccccccc}
I \\
-R & I \\
& & \ddots & \ddots \\
& & & -R & I \\
& & & & -I & I\\
& & & & & & \ddots & \ddots\\
& & & & & & & -I & I\\
\end{array}
\right),
\quad
b_{M,M_p - 1}=
\left(
\begin{array}{c}
\hat{z}(0) \\
p \\
\vdots \\
p\\
0\\
\vdots\\
0
\end{array}
\right), \quad Y
=
\begin{pmatrix}
\hat{z}^{(0)}\\
\hat{z}^{(1)}\\
\vdots\\
\hat{z}^{(M)}\\
\hat{z}^{(M)}\\
\vdots\\
\hat{z}^{(M)}\\
\end{pmatrix}.
\end{equation}
The corresponding exact solution is
\[
Y_{\mathrm{exact}}
=
\bigl[
\hat{z}(0);\,
\hat{z}(h);\,
\hat{z}(2h);\,
\cdots;\,
\hat{z}(Mh);\hat{z}(Mh);\hat{z}(Mh);\cdots;\hat{z}(Mh)
\bigr].
\]
The QLSA output a quantum state approximating $Y_{\mathrm{exact}}/\|Y_{\mathrm{exact}}\|$. Through the amplitude amplification incorporated in the linear differential equation solver, we obtain a quantum state $\ket{\tilde{z}}$ approximating the normalized final time lifted state $\ket{\hat{z}(T)}$. To approximate the solution in $\mathbb{R}^n$, we extract the first block of $\ket{\tilde{z}}$.
After an additional round of amplitude amplification, the success probability of this block is boosted to a constant. The state, denoted by $\ket{\tilde{z}_1}$ or $\ket{\hat{v}_T}$, approximates $\ket{v(T)}$ up to precision $\epsilon$,
 \begin{equation}
     \|\ket{\hat{v}_T} - \ket{v(T)}\|<\epsilon.
 \end{equation}
To implement the linear differential equation solver, we require a block encoding of $B_N$ and the state preparation oracles for $\ket{d_N}$ and $\ket{\hat{z}(0)}$.
These oracles are built from the block encodings of $F_{i}$ and $Q$, along with the state preparation oracle for the initial state $x(0)$ and pivot $s$, see details in~\cref{sec:QLSAbasedalgorithm}.

Finally, we recover the original solution by applying the inverse Lyapunov transform and then undo the pivot shift to the state $\ket{\hat{v}_T}$,
\begin{equation}
    \ket{u_T} = \ket{Q^{-1}\ket{\hat{v}_T}}, \ket{x_T} = \ket{\ket{u_T}+s}.
\end{equation}
Using the shift oracle in~\cref{lem:query0} and block encodings of $Q$ and $Q^{-1}$, the complete algorithm starts from the state preparation oracle $O_{x_0}$ for $\ket{x(0)}$ and output a quantum state approximating $\ket{x(T)}$.
\end{enumerate}

\section{Complexity}
\label{sec:complexity}
In this section, we consider the pivot-shifted Carleman linearization algorithm introduced in~\cref{sec:algorithm} and prove the query complexity result stated in the main result. As discussed in~\cref{sec:preliminaries}, we assume access to the block encoding oracles for $F_{2}$, $F_{1}$, $F_{0}$ and $Q$, together with state preparation oracles for the initial state $\ket{x(0)}$ and the pivot $\ket{s}$. The complexity of the algorithm is measured in terms of the number of queries to these oracles.

We establish sublinear time dependence for systems that become stable after pivot shift.
For general systems that probably remain unstable even after pivot shift, we proved the short time convergence.
We begin with the stable case where we choose the Lyapunov transform properly so that $\mu(B_N)<0$, defined in~\cref{eq:lognorm}.

\begin{thm}[Main result 1: stable system]\label{thm:main-stable}
Under Assumptions~\ref{assumption:stability} and~\ref{assumption:stability2}, we choose $Q = \sqrt{P}/\gamma$ with rescaling parameter $\gamma \in \left(\zeta_{-}^P, r_+^P\right)$, where
\begin{equation}\label{eq:gamma_bd}
    r_\pm^P=\frac{-\mu_P( F_{1,s})\pm \sqrt{\mu_P( F_{1,s})^2 -4\| F _{2,s}\|_P\| F_{0,s} \|_P}}{2\| F_{2,s}\|_P},\quad \zeta_{-}^P= \frac{-4\mu_P(F_{1,s})-\sqrt{16\mu_P(F_{1,s})^2-60\|F_{0,s}\|_P\|F_{2,s}\|_P}}{6\|F_{2,s}\|_P}\,.
\end{equation}
Here, $F_{i,s}$ are defined in~\cref{eq:Fis} with $\mu_P(\cdot)$ and $\|\cdot\|_P$ defined in~\cref{sec:preliminaries}.

Denote the Carleman truncation order $N$. Suppose the initial value $\|v(0)\|<1$, then for arbitrary $\epsilon>0$, there exists an efficient quantum algorithm the output an $\epsilon$-approximation of the normalized quantum state $\ket{x(T)}$ using
\begin{itemize}
        \item $\widetilde{O}\left(\frac{\sqrt{\|x(T)-s\|^2+ \|s\|^2 } }{\|x(T)\|}\frac{\kappa_Q\sqrt{T}g_v}{\sqrt{|C_E|}}\alpha_E\polylog(4\kappa_Q/\epsilon)\right)$ queries to  $O_{F_{i}}$,
        \item $\widetilde{O}\left(\frac{\sqrt{\|x(T)-s\|^2+ \|s\|^2 } }{\|x(T)\|}\frac{\sqrt{\|x(0)\|^2 + \|s\|^2}}{\|x(0)-s\|}\frac{\kappa^2_Q\sqrt{T}g_v}{\sqrt{|C_E|}}\alpha_E\polylog(4\kappa_Q/\epsilon)\right)$ queries to the state preparation oracle of $O_{x_0}$ and $O_s$,
        \item $\widetilde{O}\left(\frac{\sqrt{\|x(T)-s\|^2+ \|s\|^2 } }{\|x(T)\|}\frac{\kappa_Q^3\sqrt{T}g_v}{\sqrt{|C_E|}}\alpha_E\polylog(4\kappa_Q/\epsilon)\right)$ queries to the block encoding of $Q$,
    \end{itemize}
    where $\alpha_E,g_v$ are defined in~\cref{subsec:mainResults}. Here, the constants are defined with weighted norm and logarithmic norm on $O = P/\gamma^2$,
    \begin{equation}\label{eqn:C_E}
    C_E:=\max\{ 4\mu_O(F_{1,s}) + 3\| F_{2,s}\|_O + 5\| F_{0,s}\|_O,
        \mu_O( F_{1,s}) +\| F_{2,s}\|_O + \| F_{0,s}\|_O\}<0
    \end{equation}
    \begin{equation}\label{eq:alpha_E}
    \alpha_E = \alpha_{F_{0}}\alpha_Q + \alpha_{F_{1}}\left(\|s\|\alpha_Q +\kappa_Q\right)+\alpha_{F_{2}}\left(\|s\|^2\alpha_Q + 2\|s\|\kappa_Q+\frac{\kappa_Q^2}{\alpha_Q}\right)\,.
    \end{equation}
\end{thm}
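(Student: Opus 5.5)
The argument has four parts: a dissipativity estimate for the transformed Carleman system, a truncation-error bound, the cost of the fast-forwarded linear solver of~\cite{An_2026}, and the pre- and post-processing oracles of~\cref{lem:query0}.

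\textbf{Step 1: dissipativity of the transformed system.} With $Q=\sqrt{P}/\gamma$ and $O=P/\gamma^2$, the Euclidean quantities of the $v$-system are the $O$-weighted quantities of the $u$-system. Concretely, $\mu(E_{1,s})=\mu_O(F_{1,s})$, $\|E_{2,s}\|=\|F_{2,s}\|_O=\gamma\|F_{2,s}\|_P$ and $\|E_{0,s}\|=\|F_{0,s}\|_O=\|F_{0,s}\|_P/\gamma$. The scale-invariant identity $\mu_O=\mu_P$ holds for $\mu$.

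\begin{itemize}
\item The upper constraint $\gamma<r_+^P$ is equivalent to the quadratic inequality $\|F_{2,s}\|_P\gamma^2+\mu_P\gamma+\|F_{0,s}\|_P<0$. This is the condition $\mu(E_{1,s})+\|E_{2,s}\|+\|E_{0,s}\|<0$.
\item The lower constraint $\gamma>\zeta_-^P$ comes from the analogous quadratic $3\|F_{2,s}\|_P\gamma^2+4\mu_P\gamma+5\|F_{0,s}\|_P<0$. This is the first entry of $C_E$ being negative.
\item Together they give $C_E<0$.
\end{itemize}

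Using the block structure~\cref{eq:CarlemanA} and the Gershgorin/log-norm bound row by row, $\mu(B_N)\le \max_j j(\mu(E_1)+\|E_2\|+\|E_0\|)$. This is the step where the second entry of $C_E$ enters. Hence $\|e^{tB_N}\|\le e^{C_E t}$, which decays.

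\textbf{Step 2: truncation error.} Set $\eta_j=v^{\otimes j}-\hat z_j$. It satisfies $\partial_t\eta=B_N\eta+b$, where $b$ is supported in block $N$ with norm $\|B_{N,N+1}\|\|v\|^{N+1}\le N\|E_2\|\|v\|^{N+1}$. I first need $\|v(t)\|<1$ for all $t$. Since $\|v(0)\|<1$ and $\frac{d}{dt}\|v\|\le \|E_2\|\|v\|^2+\mu(E_1)\|v\|+\|E_0\|$, which is negative at $\|v\|=1$ by Step 1, the unit ball is invariant. This is exactly where the lower bound on $\|x(0)\|$ is avoided.

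To bound $\eta_1$ I would use the weighted energy argument of~\cite{jennings_quantum_2025,liu_efficient_2021}, with weights on blocks $j$ chosen so that the first entry of $C_E$ appears. This should give $\|\eta_1(t)\|\lesssim N\|E_2\|\|v\|_{\max}^{N+1}/|C_E|$, and since $\|v\|<1$ the error decays geometrically in $N$. The target is relative error $\epsilon\|v(T)\|/\kappa_Q$, so $N=\mathcal O(\log(\kappa_Q g_v/\epsilon))$ with $g_v=\max_t\|v(t)\|/\|v(T)\|$. \textbf{This is the step I expect to be the main obstacle.} The constants $3,4,5$ suggest a specific Lyapunov-type weighting across Carleman levels, and getting it uniform in $T$ requires the full exponent $C_E$, not only $\mu(B_N)$.

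\textbf{Step 3: linear solver.} I will apply the fast-forwarded QLSA-based solver of~\cite{An_2026}, which requires $\mu(B_N)<0$. Its final-state cost is
\[
\widetilde O\!\left(\alpha_{B_N}\sqrt{T/|C_E|}\,g\,\polylog(1/\epsilon)\right),
\]
where $g$ is the ratio of the maximal to the final lifted norm; this is controlled by $g_v$ since $\|v\|<1$. The normalization $\alpha_{B_N}=\mathcal O(N\alpha_E)$ comes from block encoding $E_{i,s}$ via LCU using~\cref{lem:query0}, together with the encodings of $Q$ and $Q^{-1}$. Here $Q^{-1}$ is obtained by QSVT at cost $\kappa_Q$, which produces the $\kappa_Q$ and $\kappa_Q^2/\alpha_Q$ terms in $\alpha_E$.

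\textbf{Step 4: pre- and post-processing.} Extracting $\hat z_1$ by amplitude amplification costs $\mathcal O(g_v)$, because the first block carries norm comparable to $\|v(T)\|$ among $\sum_j\|v\|^{2j}$. Preparing $\ket{v(0)}$ from $\ket{u(0)}$ and applying $Q^{-1}$ at the end each cost $\kappa_Q$-dependent factors; these account for the different powers of $\kappa_Q$ across the three oracle counts. The shift-in and shift-back costs come from~\cref{lem:query0}. Finally I multiply the factors and allot error $\epsilon/4\kappa_Q$ to each stage.
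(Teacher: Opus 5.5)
Your overall plan follows the paper's: the rescaling identities for $Q=\sqrt P/\gamma$, dissipativity of $B_N$, the truncation bound, the solver of An et al.\ with $\sqrt{T/|C_E|}$, $\alpha_{B_N}=\mathcal{O}(N\alpha_E)$, QLSA for $Q^{-1}$ with budget $\epsilon/(4\kappa_Q)$, and the shift oracles. But Step 1 rests on a false inequality, and that is why Step 2 looks like an obstacle to you.

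\textbf{Gap in Step 1.} The bound $\mu(B_N)\le\max_j j\bigl(\mu(E_{1,s})+\|E_{2,s}\|+\|E_{0,s}\|\bigr)$ does not follow from a row-wise Gershgorin estimate, and it is false in general. Write $\mu=\mu(E_{1,s})$, $a=\|E_{2,s}\|$, $c=\|E_{0,s}\|$. Then $\xi^\dagger(B_N+B_N^\dagger)\xi\le\sum_j 2j\mu\|\xi_j\|^2+2\bigl(ja+(j+1)c\bigr)\|\xi_j\|\|\xi_{j+1}\|$, so $\mu(B_N)\le\mu(G)$ for the paper's scalar tridiagonal matrix $G$. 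Row $j\ge2$ of $G+G^\top$ has Gershgorin bound
\[
2j\mu+(2j-1)a+(2j+1)c=2(j-2)(\mu+a+c)+(4\mu+3a+5c).
\]
After halving this is $j(\mu+a+c)+\tfrac12(c-a)$, not $j(\mu+a+c)$, and the offset is genuine. Take $n=1$, $E_{2,s}=0$, $E_{1,s}=-1$, $E_{0,s}=0.9$, so $\mu+a+c=-0.1$ but $4\mu+3a+5c=0.5$. Then $B_N$ is lower bidiagonal with diagonal entries $-j$ and subdiagonal entries $0.9j$. For $N\ge 9$ the test vector $(1,1,1,0.9,0.75,0.6,0.45,0.3,0.15)$ gives $x^\top B_N x/\|x\|^2\approx-0.061>-0.1$.

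So the second entry of $C_E$ alone does not give dissipativity by your route. In the correct estimate both entries of $C_E$ are consumed at once:
\begin{itemize}
\item the first entry is exactly the $j=2$ row condition;
\item the second entry is the slope condition, which makes rows $j>2$ (including the last row) no worse;
\item row $1$, $2\mu+a+2c\le 2(\mu+a+c)$, then follows.
\end{itemize}
This gives $\lambda_{\max}(G+G^\top)\le C_E$ and hence $\mu(B_N)\le\tfrac12 C_E<0$. There is no separate weighting across Carleman levels behind the constants $3,4,5$.

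\textbf{Consequences.} With this bound, your ``main obstacle'' reduces to a one-line Gr\"onwall argument for the error $\xi$ (your $\eta$): $\tfrac{d}{dt}\|\xi\|\le\mu(B_N)\|\xi\|+N\|E_{2,s}\|\|v(t)\|^{N+1}$ with $\xi(0)=0$.
\begin{itemize}
\item The paper drops the dissipative term and gets $\|\xi(t)\|\le tN\|E_{2,s}\|\max_t\|v\|^{N+1}$, so $N$ is logarithmic in $T/\epsilon$. Keeping the term gives your $T$-uniform bound with $1/|\mu(B_N)|$.
\item You do need $\max_t\|v(t)\|$ bounded away from $1$ uniformly in $T$; your invariance argument gives this only qualitatively. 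The Riccati comparison, whose roots $r_\pm^P/\gamma$ bracket $1$, gives $\max_t\|v(t)\|\le\max\{\|v(0)\|,r_-^P/\gamma\}<1$.
\item In Step 4, block extraction does not cost $\mathcal{O}(g_v)$. All blocks are evaluated at the same time $T$ with $\|v(T)\|<1$, so $\|\hat z_1(T)\|/\|\hat z(T)\|\gtrsim 1/\sqrt N$ and $\mathcal{O}(\sqrt N)$ amplification rounds suffice.
\item The factor $g_v$ enters only once, through the solver's ratio $\max_t\|\hat z(t)\|/\|\hat z(T)\|\le\sqrt N\,\tfrac{1+\delta'}{1-\delta'}\,g_v$. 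Counting it in both places would give $g_v^2$ rather than the stated $g_v$.
\end{itemize}
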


The condition in~\cref{thm:main-stable} requires prior knowledge of both the stable region and a suitable transformation matrix $Q$. Under these assumptions, the shifted system provably satisfies the conditions needed both for the convergence of the Carleman linearization and for the efficient application of a QLSA-based solver for the resulting dissipative linear differential equation.

The next case is when the underlying system is unstable or no prior knowledge about the stable region is available. In this case, we use the classical ``shift-to-origin'' trick to stabilize the Carleman linearization~\cite{WeberMathis2016CarlemanValidity,KowalskiSteeb1991Carleman} by choosing the pivot to be the initial state, $s = x(0)$, which guarantees good local behavior and thus supports convergence of the Carleman approximation over short time scales. However, since the shifted system is not guaranteed to be stable, we can only expect short-term convergence of the Carleman linearization, and the complexity of the resulting QLSA-based linear differential equation solver will have a worse dependence on the final time $T$. The complexity in this case is summarized in the following theorem.

\begin{thm}[Main result 2: unstable system]\label{thm:main-unstable}
Choose $s=x(0)$ and $Q = I$. Then, there exists $t^*>0$ such that: 1. $\|v(t)\|<1$ for all $0\leq t\leq t^*$; 2. For any $T<t^*$, there exists an efficient quantum algorithm to solve the quadratic ODE and output a quantum state $\epsilon$-close to the final state $\ket{x(T)}$ with
\[
        \widetilde{\mathcal{O}}\left(\frac{\sqrt{\|x(T)-x(0)\|^2+ \|x(0)\|^2 } }{\|x(T)\|}\frac{\left(
                    \alpha_{F_0}+\alpha_{F_1}+\alpha_{F_2}
                \right)(1+\|x(0)\|)^2\left\|F_{2}\right\|T^2g_v}{\|x(T)-x(0)\|\epsilon}\mathrm{polylog}(n)\right)
\]
queries to $O_{F_i}$ and $O_{x(0)}$.
\end{thm}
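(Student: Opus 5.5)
We will prove \cref{thm:main-unstable} in three stages: existence of $t^*$, Carleman truncation error on $[0,T]$, and the end-to-end query count.

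\textbf{Choice of $t^*$.} With $s=x(0)$ and $Q=I$ we have $v=u=x-x(0)$, $v(0)=0$, and
\[
\partial_t v = F_{2,s}v^{\otimes 2}+F_{1,s}v+F_{0,s}.
\]
Define $a=\|F_{2,s}\|$, $b=\|F_{1,s}\|$, $c=\|F_{0,s}\|$. A comparison argument on $r(t)=\|v(t)\|$ gives $\dot r\le ar^2+br+c$ with $r(0)=0$. Hence $r(t)\le\rho(t)$, where $\rho$ solves the scalar Riccati equation $\dot\rho=a\rho^2+b\rho+c$, $\rho(0)=0$. This $\rho$ is explicit, increasing, and blows up in finite time. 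We take $t^*$ to be a time with $\rho(t^*)<1$; shrinking it further if needed, we make it small enough for the truncation argument below. This gives item 1. Lemma~\ref{lem:query0} shows that $a,b,c$, and the block-encoding normalizations, are controlled by $\alpha_{F_0}+\alpha_{F_1}+\alpha_{F_2}$ times powers of $(1+\|x(0)\|)$. This is the source of the $(1+\|x(0)\|)^2$ factor.

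\textbf{Truncation error without dissipativity.} Let $\eta_k=v^{\otimes k}-\hat z_k$. The error vector $\eta$ obeys
\[
\dot\eta=B_N\eta+\mathrm{(tail)},
\]
where the tail term is $B_{N,N+1}v^{\otimes(N+1)}$ and sits in the last block only. Since $\eta(0)=0$ (because $v(0)=0$, so $\hat z(0)=0$ exactly), Duhamel's formula gives
\[
\eta(T)=\int_0^T e^{(T-\tau)B_N}\,\mathrm{tail}(\tau)\,d\tau .
\]
We cannot use $\mu(B_N)<0$, so we need a refined growth bound. We plan to follow \cite[Lemma 16]{kroviImprovedQuantumAlgorithms2023} but track the first block row of $e^{tB_N}$ via a generating-function (Riccati majorant) argument. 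Each step of coupling through $B_{j,j+1}$ costs a factor $ja$ and each step through $B_{j,j-1}$ costs $jc$. Summing paths from block $N$ back to block $1$ should give
\[
\|[e^{tB_N}]_{1,N}\|\lesssim \big(\rho\text{-type quantity}\big)^{N-1}.
\]
Combined with $\|v^{\otimes(N+1)}\|\le\rho(t)^{N+1}$, the first-block error becomes $\mathcal O(NaT\,q^N)$ for some $q<1$ when $T<t^*$. It therefore suffices to take $N=\mathcal O(\log(T/(\epsilon\|v(T)\|)))$. This step is the main obstacle: the combinatorics of path counting must give geometric decay and not factorial growth, and this is exactly where the constraint on $t^*$ enters. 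If it fails, we fall back on Krovi's bound and absorb the loss by shrinking $t^*$.

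\textbf{Solver cost and assembly.} We discretize with the truncated Taylor scheme and solve $A_{M,M_p-1}Y=b$ via Lemma~\ref{lem:QLSP}. Without dissipativity, the bound $\|e^{tB_N}\|\le e^{tN\alpha}$ is only $\mathcal O(1)$ for $t\le t^*$. So $\kappa(A)=\mathcal O(M+M_p)$ with $M\sim \alpha_{B_N}T$, and $\alpha_{B_N}=\mathcal O(N(\alpha_{F_0}+\alpha_{F_1}+\alpha_{F_2})(1+\|x(0)\|)^2)$. The success amplitude of the final-time block scales like $\|v(T)\|/(\max_t\|\hat z(t)\|)$, which is where $g_v$ and $1/\|x(T)-x(0)\|$ enter.

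Next, converting the relative precision of the normalized state $\ket{v(T)}$ into error in $\ket{x(T)}$, and using a first-order (Euler-type) error control that yields the $T^2/\epsilon$ with an $\|F_2\|$ prefactor from the nonlinear coupling, gives the $T^2\|F_2\|/\epsilon$ scaling. Finally, the shift-back step of Lemma~\ref{lem:query0} contributes the factor $\sqrt{\|x(T)-x(0)\|^2+\|x(0)\|^2}/\|x(T)\|$. The shifted-initial-state preparation costs nothing here, since $u(0)=0$ and only $d_N\propto F_{0,s}$ needs to be prepared. Multiplying these factors, with logarithms (including $\mathrm{polylog}(n)$ from tensor-power block encodings) suppressed, yields the stated bound.
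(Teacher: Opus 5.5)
\textbf{There is a genuine gap in your solver-cost stage.} You assert that $\|e^{tB_N}\|\le e^{tN\alpha}$ is $\mathcal{O}(1)$ for $t\le t^*$, and conclude that the Taylor-discretized system has condition number $\mathcal{O}(M+M_p)$. But $t^*$ must be fixed independently of $\epsilon$: it may depend only on the $F_i$ and $x(0)$. The truncation order $N=\Theta(\log(T\|F_{2,s}\|/(\|v(T)\|\epsilon)))$, on the other hand, grows as $\epsilon\to 0$. Hence $e^{t^*N\alpha}=e^{\Theta(N)}$, not $\mathcal{O}(1)$.

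This growth is genuine, not an artifact of a crude estimate. Take the scalar case $n=1$ with $F_{1,s}>0$ (e.g.\ the logistic equation with $x(0)<1/2$). Then $B_{j,j}=jF_{1,s}$, so $\mathrm{tr}(B_N)=F_{1,s}N(N+1)/2$. Hence $B_N$ has an eigenvalue with real part at least $F_{1,s}(N+1)/2$, and $\|e^{tB_N}\|\ge e^{tF_{1,s}(N+1)/2}$. Shrinking $t^*$ to order $1/N$ would make $t^*$ depend on $\epsilon$, which the statement does not allow. Your Stage 2 already implicitly concedes this growth: the decay of $\rho^{N+1}$ is needed precisely to beat it. So Stages 2 and 3 are inconsistent with each other.

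Because you dropped this factor, your account of the $T^2\|F_2\|/(\|x(T)-x(0)\|\epsilon)$ scaling is not a derivation.
\begin{itemize}
\item A truncated-Taylor scheme has no first-order (Euler-type) discretization error to trade for $1/\epsilon$; its discretization cost is only $\mathrm{polylog}(1/\epsilon)$.
\item The success-amplitude argument yields $g_v$, which already contains one factor $1/\|v(T)\|$. It cannot also produce the additional $1/\|x(T)-x(0)\|$ in the statement.
\end{itemize}

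In the paper, all of these factors come from the propagator bound you omitted.
\begin{itemize}
\item With the block-diagonal weight $\Lambda=\mathrm{diag}(1/j)$, one gets $\mu_\Lambda(B_N)\le\frac{N}{2}(\mu(F_{1,s})+\|F_{0,s}\|+\|F_{2,s}\|)$. Hence $C(B_N)=\sup_{t\le T}\|e^{tB_N}\|\le\sqrt{N}\,e^{N}$ for $T\le t^*$.
\item The same bound drives the Duhamel estimate for the truncation error. There it is beaten by $\|v(t)\|^{N+1}\le(t\sum_i\|F_{i,s}\|)^{N+1}$, which is why $t^*$ is of order $1/(e\sum_i\|F_{i,s}\|)$.
\item Krovi's Theorem 7 has cost proportional to $g_{\hat z}\,T\,\|B_N\|\,C(B_N)$. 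Inserting the bound on $C(B_N)$ and substituting $N=\Theta(\log(T\|F_{2,s}\|/(\|v(T)\|\epsilon)))$ turns $e^N$ into $T\|F_{2,s}\|/(\|v(T)\|\epsilon)$. This supplies the second power of $T$, the factor $\|F_2\|$, the $1/\epsilon$, and the extra $1/\|x(T)-x(0)\|$.
\end{itemize}

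Your Riccati comparison for item 1 is fine, and arguably cleaner than the paper's bootstrap. Your path-counting majorant for the $(1,N)$ block is also legitimate, and sharper than needed: any bound $\|e^{tB_N}\|\le\mathrm{poly}(N)e^{\mathcal{O}(Nt)}$ suffices there. However, you need to bound the whole error vector, $\|\xi(T)\|\le\delta'\|v(T)\|$, not just its first block. The solver analysis requires this to control $\|\hat z(T)\|$ and $g_{\hat z}$.
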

The performance guarantee holds up to a time point $t^*$, which may decrease when the shifted coefficient scale, namely, when $\sum_i\|F_{i,s}\|$ becomes large.

We put the proof in the following part of this section. The complexity analysis consists of four main steps:
\begin{itemize}
  \item \emph{Step 1: Preparation of oracles for the shifted system.}

  This corresponds to the first two steps of the algorithm, namely the pivot shift and the Lyapunov transform. Specifically, we need to construct the block encoding oracles for $F_{2,s}$, $F_{1,s}$, and $F_{0,s}$, as well as the state preparation oracle for $\ket{u(0)}$. These are constructed using the block encoding for $F_{2}$, $F_{1}$, and $F_{0}$, along with the state preparation oracle for $\ket{s}$ and $\ket{x(0)}$. The complexity of this step is analyzed in~\cref{sec:preliminaries}~\cref{lem:query0}.

  \item \emph{Step 2: Convergence of the Carleman Linearization.}

  This corresponds to the third step of the algorithm, where we analyze the error convergence of the Carleman linearization for the shifted system. We derive explicit bounds on the truncation order $N$ required to achieve a target accuracy $\epsilon$ for approximating the solution at time $T$. The analysis is presented in~\cref{sec:convergenceMPS_stable}. For stable systems, this result can be stated in a consistent form with prior knowledge, which is given in~\cref{prop:stable_carleman_convergence}. For unstable systems, we only expect short time convergence, and the convergence result is stated in~\cref{prop:unstable_carleman_convergence}.

  \item \emph{Step 3: Complexity of solving the lifted linear differential equations.}

  This corresponds to the fourth step of the algorithm, where we analyze the complexity of applying the QLSA-based linear differential equation solver to the lifted system obtained from the Carleman linearization. In this step, we will mainly rely on the algorithms in~\cite{An_2026,kroviImprovedQuantumAlgorithms2023} to obtain a better dependence on the final time $T$. The complexity result is summarized in~\cref{sec:QLSAbasedalgorithm}~\cref{prop:stable_linear_solver} for stable systems and in~\cref{prop:unstable_linear_solver} for unstable systems.

  \item \emph{Step 4: Putting everything together and shifting the state back.}

  Finally, we combine the results from the previous steps to derive the overall complexity of the pivot-shifted Carleman linearization algorithm for both stable and unstable systems as stated in~\cref{thm:main-stable}, and~\cref{thm:main-unstable}. Here, we also need to analyze the complexity of shifting the state back from $\ket{v(T)}$ to $\ket{x(T)}$ and estimating $\|x(T)\|$. This is done in~\cref{sec:proofMain}.
\end{itemize}

\subsection{Convergence of the Carleman linearization}
\label{sec:convergenceMPS_stable}
In this section, we analyze the convergence of the Carleman linearization for the shifted system in both the stable and unstable cases. This completes the second step of the complexity analysis outlined above. We start with the error caused by the Carleman linearization with truncation order $N$.  We denote the exact lifted solution from the Carleman linearization with Lyapunov transform in~\eqref{eq:carleman-trunc-LM} algorithms as  $z(t) = (z_k(t))_{k\ge 1}$ where $z_k(t) = v(t)^{\otimes k}$. Following the algorithm stated in~\cref{sec:algorithm}, we denote the corresponding error vectors as  $\xi = \xi(t)\coloneqq z(t) - \hat z(t)$, satisfying the equation
\begin{equation}\label{eq:CarlemanLinearized}
    \partial_t \xi = B_N \xi + \hat d_N\,,
\end{equation}
with the matrix $B_N$ in~\cref{eq:carleman-trunc-LM}, and
\begin{equation}\label{eq:hatd}
\hat d_N
=
\bigl[
0;\,
0;\,
\cdots;\,
0;B_{N,N+1}z_{N+1}(t)
\bigr]\,.
\end{equation}
By analyzing the bound on $\|\xi_1(t)\|:=\|z_1(t)-\hat{z}_1(t)\|$, we can obtain the error bound for approximating $v(t)$.

First, for the stable solution, as will be shown in~\cref{lem:solutionBdP}, under Assumptions of~\cref{thm:main-stable}, the solution $v(t)$ of the shifted system can be bounded by a constant that is strictly less than 1 for all $t\in[0,T]$. This bound on the solution allows us to obtain the exponential decay of the Carleman error with respect to the truncation order $N$, which in turn leads to a logarithmic scaling of $N$ with $T/\epsilon$ to achieve an $\epsilon$-accurate solution. The precise statement is given in the following proposition.

\begin{prop}[Stable system]\label{prop:stable_carleman_convergence}
Under Assumptions~\ref{assumption:stability} and~\ref{assumption:stability2}, we choose $Q = \sqrt{P}/\gamma$ with rescaling parameter $\gamma \in \left(\zeta_{-}^P, r_+^P\right)$ and denote the Carleman truncation order $N$. Suppose the initial value $\|v(0)\|<1$. Then,
\[
\max_{t\in[0,T]}\|v(t)\|\leq \max\left\{\|v(0)\|,\frac{r^P_-}{\gamma}\right\}<1\,.
\]
Furthermore, the Carleman embedding error vector can be bounded:
\begin{equation}\label{eq:xi_bound2}
\| \xi(t)\|
 \le t N\gamma\|F_{2,s}\|_{P}\max_{t\in[0,T]}\|v(t)\|^{N+1}\,,
\end{equation}
for $j = 1, \cdots, k$.
The relative error can be bounded by $\epsilon$, meaning $\|\xi_1(T)\|\leq \|\xi(T)\|\leq \epsilon \|v(T)\|$, by
choosing the truncation order
\[
N=\Theta\left(\frac{\log\left(T\gamma\|F_{2,s}\|_{P}/(\|v(T)\|\epsilon)\right)}{\log(1/\max_{t\in[0,T]}\|v(t)\|)}\right)\,.
\]
\end{prop}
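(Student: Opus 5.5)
The plan has three parts: turn the $v$-equation into a scalar comparison inequality for $\|v(t)\|$, prove that the truncated Carleman generator $B_N$ is non-expansive, and then bound the error via Duhamel's formula. Throughout I write $a=\|F_{2,s}\|_P$, $\mu=\mu_P(F_{1,s})<0$ and $c=\|F_{0,s}\|_P$.

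\textbf{Step 1: transformed coefficients.} With $Q=\sqrt P/\gamma$, the identities of \cref{sec:preliminaries} applied to \cref{eq:EfromF} should give
\[
\|E_{2,s}\|=\gamma a,\qquad \mu(E_{1,s})=\mu,\qquad \|E_{0,s}\|=c/\gamma .
\]
The factors of $\gamma$ come from $Q^{-1}\otimes Q^{-1}=\gamma^2(\sqrt P^{-1}\otimes\sqrt P^{-1})$ and from $Q=\sqrt P/\gamma$.

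\textbf{Step 2: the solution bound.} Differentiating $\tfrac12\|v\|^2$ along \cref{eqn:v_t_equation} gives, wherever $v\neq 0$,
\[
\frac{d}{dt}\|v\|\le f(\|v\|),\qquad f(r):=\gamma a r^2+\mu r+c/\gamma .
\]
The roots of $f$ are exactly $r_\pm^P/\gamma$, which are real and positive by \cref{assumption:stability2}, and $f<0$ strictly between them.
\begin{itemize}
\item Since $\gamma<r_+^P$, the upper root satisfies $r_+^P/\gamma>1>\|v(0)\|$.
\item If $\|v(0)\|\in(r_-^P/\gamma,1)$, then $\|v\|$ is nonincreasing for as long as it stays above $r_-^P/\gamma$.
\item If $\|v(0)\|\le r_-^P/\gamma$, then the interval $[0,r_-^P/\gamma]$ is invariant, by a standard comparison (barrier) argument.
\end{itemize}
Together these give $\max_t\|v(t)\|\le\max\{\|v(0)\|,r_-^P/\gamma\}$. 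I also need $r_-^P/\gamma<1$, i.e.\ $\gamma>r_-^P$. I would check algebraically that $\zeta_-^P\ge r_-^P$, comparing the two explicit formulas in \cref{eq:gamma_bd}, so that $\gamma>\zeta_-^P$ suffices.

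\textbf{Step 3: $B_N$ is non-expansive (main obstacle).} I want $\mu(B_N)\le 0$, so that $\|e^{tB_N}\|\le 1$. For $\xi=(\xi_j)_{j=1}^N$ I would estimate
\[
\mathrm{Re}\langle \xi,B_N\xi\rangle\le \sum_j j\mu\|\xi_j\|^2+\sum_j j\gamma a\|\xi_j\|\|\xi_{j+1}\|+\sum_j j\tfrac{c}{\gamma}\|\xi_j\|\|\xi_{j-1}\| ,
\]
and then apply Young's inequality to the cross terms. The resulting diagonal coefficient of $\|\xi_j\|^2$ is roughly $j(\mu+\gamma a+c/\gamma)$ plus lower-order boundary corrections from shifting indices by one. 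The task is to show every coefficient is $\le 0$. This is where the constraint $\gamma>\zeta_-^P$ enters, through the weighted combinations in $C_E<0$ (the terms $4\mu+3\|\cdot\|+5\|\cdot\|$ look like the worst small-$j$ case). If the plain Young split is not sharp enough, I would first try weighted Young inequalities, with weights chosen so that the index-shift losses are absorbed for every $j$. I expect this step to need the most care.

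\textbf{Step 4: error bound and choice of $N$.} From \cref{eq:CarlemanLinearized}, with $\xi(0)=0$ when the exact lifted initial data is used, Duhamel's formula and Step 3 give
\[
\|\xi(t)\|\le\int_0^t\|\hat d_N(\tau)\|\,d\tau .
\]
By \cref{eq:CarlemanA}, $\|B_{N,N+1}\|\le N\|E_{2,s}\|=N\gamma a$, and $\|z_{N+1}\|=\|v\|^{N+1}$. This yields \cref{eq:xi_bound2}. Writing $\rho=\max_t\|v(t)\|<1$, the requirement $TN\gamma a\rho^{N+1}\le\epsilon\|v(T)\|$ is met once $(N+1)\log(1/\rho)\ge\log(TN\gamma a/(\epsilon\|v(T)\|))$. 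The factor $\log N$ is absorbed, which gives the stated $N=\Theta(\cdot)$. Finally, $\|\xi_1\|\le\|\xi\|$ is trivial.
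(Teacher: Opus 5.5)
Your proposal is correct and follows essentially the paper's proof: the same scalar Riccati comparison for $\|v\|$ with roots $r_\pm^P/\gamma$, the same reduction of $\mathrm{Re}\langle\xi,B_N\xi\rangle$ to a scalar tridiagonal form, and the same Gr\"onwall/Duhamel step using $\|B_{N,N+1}\|\le N\gamma\|F_{2,s}\|_P$. Your worry about Step 3 is unfounded, because the plain Young split already closes it: for interior $j$ the coefficient of $\|\xi_j\|^2$ is $\tfrac12\bigl[2j\mu+(2j-1)\gamma a+(2j+1)c/\gamma\bigr]$, which is half of the Gershgorin row sum the paper uses and is affine in $j$ with value $\tfrac12(4\mu+3\gamma a+5c/\gamma)$ at $j=2$ and slope $\mu+\gamma a+c/\gamma$, so it is negative for all $j$ (the rows $j=1$ and $j=N$ being dominated) precisely when $C_E<0$, which holds for $\gamma\in(\zeta_-^P,r_+^P)$.
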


We provide the proof of~\cref{prop:stable_carleman_convergence} in~\cref{sec:solutionBdd}.~\cref{prop:stable_carleman_convergence} implies that, as long as $x(0)-s$ is sufficiently small and the system is stable, the truncation order $N$ only needs to scale logarithmically with $t/\epsilon$ to achieve an $\epsilon$-accurate solution.

Without Assumptions~\ref{assumption:stability} and~\ref{assumption:stability2} and correct choice of $Q$, the system might not be a stable system, while we can still provide the bound for Carleman error vector for short time.
When $\|v(0)\|$ is sufficiently small, combining the Carleman error with the solution bound, we can also obtain a logarithmic scaling of the truncation order $N$ with $t/\epsilon$ to achieve an $\epsilon$-accurate solution for the unstable system for short time, as shown in the following proposition.
\begin{prop}[Unstable system]\label{prop:unstable_carleman_convergence}
Choose $Q=I$ and pivot $s=x(0)$. Define
\[
t^* = \frac{1}{e\left(\|F_{1,s}\| + \|F_{0,s}\| + \|F_{2,s}\|\right)}\,.
\]
Then, $\max_{t\in[0,t^*]}\|v(t)\|<1$. In addition, the Carleman embedding error vector can be approximated as
\begin{equation}
    \max_{t\in[0,t^*]}\|\xi(t)\|\leq tN^{3/2}\|F_{2,s}\| \left(t\left(\|F_{2,s}\| +\|F_{1,s}\| + \|F_{0,s}\|\right)e\right)^{N+1}\,.
\end{equation}
Given $T\in [0,t^*]$, the relative error can be bounded by $\epsilon$ so that $\|\xi(T)\|\leq \epsilon \|v(T)\|$ by
choosing the truncation order
\[
N \geq \Theta\left(\frac{\log(T\|F_{2,s}\|/(\|v(T)\|\epsilon))}{\log(1/\left(t^*\left(\|F_{2,s}\| + \|F_{1,s}\| + \|F_{0,s}\|\right)e\right))}\right).
\]
\end{prop}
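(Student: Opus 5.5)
We need to prove Proposition (unstable system): with $s=x(0)$ and $Q=I$, so $v=u=x-x(0)$ and $v(0)=0$, we show $\|v(t)\|<1$ on $[0,t^*]$, bound the Carleman error, and choose $N$.

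\textbf{Step 1: solution bound.} Write $a=\|F_{2,s}\|$, $b=\|F_{1,s}\|$, $c=\|F_{0,s}\|$ and $K=a+b+c$. The scalar $r(t)=\|v(t)\|$ satisfies the differential inequality $\dot r\le a r^2+br+c$ with $r(0)=0$. While $r\le 1$ this gives $\dot r\le K$, hence $r(t)\le Kt$. A continuity (bootstrap) argument then gives $r(t)\le Kt\le 1/e<1$ for all $t\le t^*=1/(eK)$. Indeed, the first time $r$ reached $1$ would have to exceed $t^*$.

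\textbf{Step 2: error equation.} The error $\xi$ satisfies $\partial_t\xi=B_N\xi+\hat d_N$ with $\xi(0)=0$. Duhamel's principle gives
\[
\xi(t)=\int_0^t e^{(t-\tau)B_N}\hat d_N(\tau)\,d\tau .
\]
The forcing is controlled by $\|\hat d_N(\tau)\|\le \|B_{N,N+1}\|\,\|v(\tau)\|^{N+1}\le N a\,(K\tau)^{N+1}$.

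\textbf{Step 3: propagator bound.} This is the main obstacle, since $B_N$ need not be dissipative, so $\|e^{sB_N}\|$ must be controlled directly on the short interval. I would not use the crude bound $e^{s\|B_N\|}$, because $\|B_N\|\sim NK$ would give $e^{NKt}$. Instead, I would refine the argument of \cite[Lemma 16]{kroviImprovedQuantumAlgorithms2023} by working blockwise. A Dyson/path expansion writes the $(j,k)$ block of $e^{sB_N}$ as a sum over lattice paths between levels $j$ and $k$. Each step costs at most $\ell\max(a,b,c)$ at level $\ell$, which gives a bound of the form $\binom{\cdot}{\cdot}(sK)^{|j-k|}$ times a geometric factor.

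Since only the last block of $\hat d_N$ is nonzero, we need the column blocks $e^{sB_N}$ applied from level $N$. Combining path weights $\sim N!/k!\,(sK)^{N-k}$ with $\|v\|^{N+1}\le (K\tau)^{N+1}$ should give blockwise contributions bounded by $(eKt)^{N+1}$. Summing the $N$ blocks in $\ell^2$ produces the $\sqrt N$ factor. Integrating over $\tau\in[0,t]$ gives the extra factor $t$, so the total bound is
\[
t\,N^{3/2}a\,(eKt)^{N+1}.
\]
If the sharp path counting fails, my fallback is a comparison with the scalar majorant ODE solved by power series. This uses the fact that the Carleman system of the scalar equation $\dot r=ar^2+br+c$ dominates entrywise, and its truncation error is explicitly bounded for $t<1/(eK)$.

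\textbf{Step 4: truncation order.} For $T\le t^*$ we have $eKT\le eKt^*=1$. Requiring $TN^{3/2}a(eKT)^{N+1}\le\epsilon\|v(T)\|$ and taking logarithms gives
\[
N=\Theta\left(\frac{\log\bigl(Ta/(\|v(T)\|\epsilon)\bigr)}{\log\bigl(1/(eKT)\bigr)}\right).
\]
The $N^{3/2}$ factor is absorbed into the $\Theta$, and the denominator is written in the statement's form $\log\bigl(1/(t^*Ke)\bigr)$ with $T$ in place of $t^*$ where needed.
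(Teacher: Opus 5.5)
Your Steps 1, 2 and 4 follow the paper's route. Your bootstrap in Step 1 is in fact more careful than the paper's integral estimate, which tacitly uses $\|v\|\le 1$.

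\textbf{The gap is Step 3}, which is the only nontrivial estimate. The path-expansion bound is asserted (``should give'') rather than derived. The weights $N!/k!\,(sK)^{N-k}$ you quote account only for monotone descents from level $N$ to level $k$, whereas paths may linger at a level or move up. The fallback's ``explicitly bounded'' scalar truncation error is not supplied either. Norm-based path counting also cannot beat exponential growth in $N$: once blocks are replaced by their norms, the $(N,N)$ entry of the majorant exponential is already at least $e^{sNb}$. More importantly, your reason for discarding the crude bound is a miscalculation. Growth of order $e^{NKt}$ is exactly what the factor $e^{N+1}$ in the target absorbs.

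\textbf{How to close it.} Split $B_N$ into its block-diagonal, block-superdiagonal and block-subdiagonal parts. This gives $\|B_N\|\le Nb+(N-1)a+Nc\le NK$, so for $t\le t^*$
\[
\|\xi(t)\|\le\int_0^t e^{(t-\tau)NK}\,Na\,(K\tau)^{N+1}\,d\tau\le tNa\,(Kt)^{N+1}e^{NKt}\le tNa\,\bigl(Kt\,e^{Kt}\bigr)^{N+1}\le tNa\,(eKt)^{N+1}.
\]
The last step uses $Kt\le 1/e$, hence $e^{Kt}\le e^{1/e}<e$. This is the claimed estimate, even with $N$ in place of $N^{3/2}$.

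\textbf{Comparison with the paper.} The paper instead uses a weighted logarithmic norm with $\Lambda=\mathrm{diag}(1/j)$. It obtains $\|e^{tB_N}\|\le\sqrt N\,e^{O(N)(\mu(F_{1,s})+\|F_{0,s}\|+\|F_{2,s}\|)t}$, which is the source of its $N^{3/2}$. That bound is equally exponential in $N$ and is absorbed the same way. Its only gain is $\mu(F_{1,s})$ in place of $\|F_{1,s}\|$.

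\textbf{A remark on Step 4.} At the stated $t^*=1/(eK)$ the base $eKt^*$ equals $1$, so the statement's denominator $\log\bigl(1/(t^*Ke)\bigr)$ vanishes. Your substitution of $T$ for $t^*$ therefore requires $T<t^*$ strictly. If you keep the sharper base $Kt\,e^{Kt}\le e^{1/e-1}$ instead of loosening it to $eKt$, the degeneracy disappears. The decay rate $\log\bigl(1/(Kt\,e^{Kt})\bigr)$ is then at least $1-1/e$ uniformly on $[0,t^*]$.
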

The proof of~\cref{prop:stable_carleman_convergence} is similar to that of~\cref{prop:unstable_carleman_convergence}. We put the proof of~\cref{prop:unstable_carleman_convergence} in Appendix~\ref{appd:unstable}.

\subsubsection{Proof of \texorpdfstring{\cref{prop:stable_carleman_convergence}}{Lg}}
\label{sec:solutionBdd}
We first establish a bound on the solution $v(t)$ of the shifted system, which is crucial for analyzing the convergence of the Carleman linearization.
\begin{lem}\label{lem:solutionBdP}
    Under Assumptions~\ref{assumption:stability} and~\ref{assumption:stability2}. Consider $v(t)$ in~\cref{eqn:v_t_equation}. Define
    \begin{equation}\label{eq:rpm_P}
   \begin{aligned}
   r_\pm^P & =\frac{-\mu_P( F_{1,s})\pm \sqrt{\mu_P( F_{1,s})^2 -4\| F _{2,s}\|_P\| F_{0,s} \|_P}}{2\| F_{2,s}\|_P}\,.\\
   \end{aligned}
   \end{equation}
   Then,
\begin{equation}\label{eq:Dv}
    \max_{t\in[0,T]}\|v(t)\| = \begin{cases}
        \|v(0)\| & \text{if }\|v(0)\|\in[\frac{1}{\gamma}r_-^P, \frac{1}{\gamma}r_+^P),\\
         \frac{1}{\gamma} r_-^P&  \text{if }\|v(0)\|\in[0, \frac{1}{\gamma}r_-^P).\\
    \end{cases}
    \end{equation}
In particular, if $\gamma \in \left(\zeta_{-}^P,r_+^P\right)\subset\left(r_{-}^P, r_+^P\right)$ and $\|v(0)\|<1$, then $\max_{t\in[0,T]}\|v(t)\|\leq \max\left\{\|v(0)\|,\frac{r^P_-}{\gamma}\right\}<1$.
\end{lem}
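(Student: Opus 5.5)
We will argue with a scalar differential inequality for the Euclidean norm of $v$, using that $Q=\sqrt{P}/\gamma$ converts Euclidean quantities in $v$ into $P$-weighted quantities in $u$. Since $v=\sqrt{P}u/\gamma$, we have $\|v\|=\|u\|_P/\gamma$. The transformed coefficients then satisfy the following identities:
\begin{equation*}
\mu(E_{1,s})=\mu_P(F_{1,s}),\qquad \|E_{2,s}\|=\gamma\|F_{2,s}\|_P,\qquad \|E_{0,s}\|=\|F_{0,s}\|_P/\gamma .
\end{equation*}
These follow directly from \cref{eq:EfromF} and the definitions of the weighted norms in \cref{sec:preliminaries}; the $\gamma$ factors come from $Q^{-1}\otimes Q^{-1}$ contributing $\gamma^2$ against one $1/\gamma$. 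Differentiating $\|v\|^2$ along \cref{eqn:v_t_equation} gives
\begin{equation*}
\tfrac12\tfrac{d}{dt}\|v\|^2=\langle v,E_{2,s}v^{\otimes2}\rangle+\langle v,E_{1,s}v\rangle+\langle v,E_{0,s}\rangle\le \|E_{2,s}\|\|v\|^3+\mu(E_{1,s})\|v\|^2+\|E_{0,s}\|\|v\| .
\end{equation*}
Where $\|v\|>0$, this yields $\tfrac{d}{dt}\|v\|\le f(\|v\|)$ with $f(r)=\gamma\|F_{2,s}\|_P r^2+\mu_P(F_{1,s})r+\|F_{0,s}\|_P/\gamma$. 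Substituting $r=\rho/\gamma$ shows $f(r)=\tfrac{1}{\gamma}\bigl(\|F_{2,s}\|_P\rho^2+\mu_P(F_{1,s})\rho+\|F_{0,s}\|_P\bigr)$. Hence the roots of $f$ are exactly $r_\pm^P/\gamma$. By \cref{assumption:stability2} and $\mu_P(F_{1,s})<0$, these roots are real, positive and distinct, and $f<0$ strictly between them.

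Next comes a comparison/invariance argument on the scalar quantity $\|v(t)\|$. If $\|v(0)\|\in[r_-^P/\gamma,r_+^P/\gamma)$, then $\|v\|$ cannot increase while it lies in this interval, because $f\le0$ there. Indeed, suppose $\|v(t_1)\|>\|v(0)\|$ for some $t_1$. Take the last time before $t_1$ at which $\|v\|=\|v(0)\|$; on the following stretch $\|v\|\in(\|v(0)\|,r_+^P/\gamma)$, so its derivative is $\le f<0$, a contradiction. Thus the maximum equals $\|v(0)\|$. If $\|v(0)\|<r_-^P/\gamma$, a similar first-exit argument shows that $\|v\|$ never exceeds $r_-^P/\gamma$: at any crossing into $(r_-^P/\gamma,r_+^P/\gamma)$ the derivative bound is negative. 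The stated formula \cref{eq:Dv} asserts equality with $r_-^P/\gamma$. I expect the argument to deliver only the upper bound, and this is all that \cref{prop:stable_carleman_convergence} needs, so I would prove and use it in that form. The lack of differentiability of $\|v\|$ at $0$ is handled by working with $\|v\|^2$, or with upper Dini derivatives, and it is harmless because the bounds lie above $0$.

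The main obstacle is the final claim: when $\gamma\in(\zeta_-^P,r_+^P)$, one has $\zeta_-^P\ge r_-^P$, hence $r_-^P/\gamma<1$ and $1<r_+^P/\gamma$. The last inequality places any $\|v(0)\|<1$ inside the invariant region, so the two cases combine to give $\max_t\|v\|\le\max\{\|v(0)\|,r_-^P/\gamma\}<1$. The inequality $r_-^P<\gamma$ is immediate from $\gamma>\zeta_-^P$ once $\zeta_-^P\ge r_-^P$ is known. To verify $\zeta_-^P\ge r_-^P$, I would write $a=-\mu_P(F_{1,s})>0$ and $b=\|F_{0,s}\|_P\|F_{2,s}\|_P$. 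Clearing the common factor $\|F_{2,s}\|_P$, the claim becomes
\begin{equation*}
\tfrac{4a-\sqrt{16a^2-60b}}{6}\ge\tfrac{a-\sqrt{a^2-4b}}{2},
\end{equation*}
which I will check by monotonicity in $b/a^2\in[0,1/4]$; both sides vanish at $b=0$. I expect this comparison to need care with the admissible range of $b$, specifically whether $16a^2\ge60b$ is automatically implied.
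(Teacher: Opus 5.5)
Your core argument matches the paper's proof. The paper proves the general-$O$ version \cref{lm:normboundMPS} by the same differentiation of $\|v\|^2$ and scalar Riccati comparison. It then specializes to $Q=\sqrt{P}/\gamma$ using the three identities you list, which give $r_\pm^O=r_\pm^P/\gamma$.

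Your suspicion about the second case is correct. The paper's proof also yields only $\|v(t)\|<r_-^P/\gamma$, and equality is false in general. For example, take the scalar system $\partial_t u=(u-1)(u-2)$ with $P=1$ and $u(0)=0$: $u(t)$ approaches $r_-^P=1$ only asymptotically. So proving the upper bound is the right move. In your first case, make one small repair: take $t_1$ to be the first time $\|v\|$ reaches some level $c\in(\|v(0)\|,r_+^P/\gamma)$. Then the stretch after the last return to $\|v(0)\|$ genuinely stays below $r_+^P/\gamma$.

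The step that fails as planned is your check of $\zeta_-^P\ge r_-^P$, which the paper merely asserts. With $x=b/a^2$, your inequality is $L(x)\ge R(x)$, where $L(x)=\tfrac{4-\sqrt{16-60x}}{6}$ and $R(x)=\tfrac{1-\sqrt{1-4x}}{2}$. However, $L'(x)\ge R'(x)$ holds only for $x\le 9/40$, and $L(1/4)=R(1/4)=\tfrac12$. So the gap opens and then closes again, and a derivative comparison started at $b=0$ breaks down on $(9/40,1/4)$. You would need to use both endpoints, or simply square. Indeed, $L\ge R$ is equivalent to $1+3\sqrt{1-4x}\ge\sqrt{16-60x}$, which after squaring becomes $\sqrt{1-4x}\ge 1-4x$, true on $[0,1/4]$.

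A cleaner route uses the quadratics themselves. Set $c_i=\|F_{i,s}\|_P$, $q_1(\gamma)=c_2\gamma^2-a\gamma+c_0$ and $q_2(\gamma)=3c_2\gamma^2-4a\gamma+5c_0$, whose roots are $r_\pm^P$ and $\zeta_\pm^P$. Then $q_2-3q_1=2c_0-a\gamma$, and Vieta's relations give $q_2(r_-^P)=c_2r_-^P(r_+^P-r_-^P)\ge 0$ and $q_2(r_+^P)=-c_2r_+^P(r_+^P-r_-^P)<0$. Hence $r_-^P\le\zeta_-^P<r_+^P<\zeta_+^P$. Your worry about $16a^2\ge 60b$ is moot, since $a^2>4b$ already gives $16a^2>64b$.
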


This bound in~\cref{lem:solutionBdP} is slightly more general than the results in~\cite{jennings_quantum_2025,liu_efficient_2021}, as it does not require a positive lower bound on $\|u(0)\|$ (i.e., an assumption of the form $\delta < \|u(0)\|$ for some $\delta > 0$). For example, when $u(0) = 0 < r_-^P$, the solution is not controlled by its initial value, yet we can still obtain the uniform upper bound $\|u(t)\|_P < r_-^P$. The absence of a lower bound on $\|u(0)\|$ is consistent with the convergence radius statement in~\cite{novikau2025globalizing}, while yielding a more precise mathematical formulation.

Under these conditions, the norm of the transformed solution remains uniformly bounded by $1$ on $[0,T]$. This uniform bound enables us to establish the convergence of the Carleman linearization and prove~\cref{prop:stable_carleman_convergence} as follows.

\begin{proof}[Proof of~\cref{prop:stable_carleman_convergence}]
The Lyapunov transform matrix $Q$ works as the preconditioner matrix to the Carleman linearization matrix $B_N$.

We recall the definition of the error vector $\xi(t)$ in~\cref{eq:CarlemanLinearized} and~\cref{eq:hatd}:
\[
    \partial_t \xi = B_N \xi + \hat d_N\,,
\]
with
\[
\hat d_N
=
\bigl[
0;\,
0;\,
\cdots;\,
0;B_{N,N+1}v^{\otimes (N+1)}(t)
\bigr]\,.
\]
where the block entries $B_{\cdot,\cdot}$ are defined in~\cref{eq:CarlemanA}.

Notice that the error term $\|\xi(t)\|$ follows
\begin{equation}
    \partial_t \|\xi\|^2 = \xi^\dag (B_N + B_N^\dag)\xi + \xi^\dag  \hat{d}_N + \hat{d}^\dag_N \xi\,.
\end{equation}
The first term can be estimated as
\begin{equation}\label{eq:bd_BN}
    \begin{aligned}
        &\xi^\dagger( B_N+B_N^\dagger)\xi  = \sum_j \xi_j^\dag (B_{j,j} + B_{j,j}^\dag)\xi_j + \sum_{j} \xi_{j+1}^\dag (B_{j+1, j} + B_{j, j+1}^\dag)\xi_j + \sum_j \xi_j^\dag(B_{j+1,j}^\dag + B_{j,j+1})\xi_{j+1} \\
        &  \leq \sum_j 2\mu(B_{j,j})\|\xi_j\|^2 + \sum_j2\|B_{j+1,j}\|\|\xi_j\|\|\xi_{j+1}\|+\sum_j2\|B_{j,j+1}\|\|\xi_j\|\|\xi_{j+1}\|\\
        &\leq \sum_j 2j\mu(E_{1,s})\|\xi_j\|^2 + 2j\|E_{0,s}\|\|\xi_j \|\| \xi_{j-1}\| + 2(j-1)\|E_{2,s}\| \|\xi_j \|\|\xi_{j-1}\|\\
        & \le \xi_G^\dag (G+ G^\dag)\xi_G\leq 2\mu(G)\,.
    \end{aligned}
\end{equation}
In the second inequality, we use the definition of $B_{j,j}$, $B_{j,j+1}$, and $B_{j+1,j}$ in~\cref{eq:CarlemanA}. In the last inequality, we define the vector $\xi_G = (\|\xi_i\|)_i$ and
the tridiagonal matrix $G\in\mathbb{C}^{N\times N}$:
\begin{equation}\label{eq:Gij}
    G_{j,j} = j\mu(E_{1,s}), G_{j-1,j} = j\|E_{0,s}\|, G_{j+1,j} = j\|E_{2,s}\|.
\end{equation}
We denote the upper bound of the eigenvalues of $G+ G^\dag $ as $C_E$.
By Gershgorin's circle theorem,
\begin{equation}
    C_E \geq \begin{cases}
        2G_{j,j} + (G_{j,j+1}+G_{j+1,j}) + (G_{j,j-1} + G_{j-1,j})& \text{for }j = 2, \cdots, N-1\\
        2G_{1,1} + (G_{1,2}+G_{2,1}) & \text{for } j = 1\\
        2G_{N,N} + (G_{N,N-1} + G_{N-1, N}) & \text{for }j= N\\
    \end{cases}
\end{equation}
Substituting the element of $G$ in~\cref{eq:Gij}, we can show that
the largest eigenvalue of $G+ G^\dag$ is strictly negative provided the following conditions are satisfied:
\begin{equation}
    0\geq \begin{cases}
        2j\mu(E_{1,s}) + (2j-1)\|E_{2,s}\| + (2j+1)\|E_{0,s}\|& \text{for }j = 2, \cdots, N-1\\
        2\mu(E_{1,s}) + \|E_{2,s}\|+ 2\|E_{0,s}\| & \text{for } j = 1\\
        2N\mu(E_{1,s}) + (N-1) \|E_2\| + N\|E_{0,s}\| & \text{for }j= N\\
    \end{cases}
\end{equation}
It is sufficient to have the following conditions to ensure the above three conditions are satisfied:
\begin{equation}\label{eq:cond_muGnegative}
    \begin{cases}
        \mu(E_{1,s}) + \|E_{2,s}\| + \|E_{0,s}\| &<0\\
        4\mu(E_{1,s}) + 3\|E_{2,s}\| + 5\|E_{0,s}\| &<0\\
        2\mu(E_{1,s}) + \|E_{2,s}\| + 2\|E_{0,s}\| &<0.\\
    \end{cases}
\end{equation}
The third inequality can be derived by the first two, thus
\begin{equation}\label{eq:C0}
   C_E:=\max\{ 4\mu( E_{1,s}) + 3\| E_2\| + 5\| E_{0,s}\|,
        \mu( E_{1,s}) +\| E_2\| + \| E_{0,s}\|\}<0\ \Rightarrow\ \mu(G)\leq\frac{1}{2} C_E<0\,.
\end{equation}
We will show later that this $C_E$ in~\eqref{eq:C0} matches the $C_E$ defined in~\cref{thm:main-stable}~\eqref{eqn:C_E} when $Q = \sqrt{P}/\gamma$, see~\cref{eq:E2F} for detail.

Now, let's assume~\eqref{eq:C0} is satisfied (we will show that it can be satisfied by choosing $Q$ properly later). Since \eqref{eq:bd_BN} is established for any $\xi$, we have $\mu(B_N)\leq \mu(G)$.
The forcing term $\hat{d}_N$ can be upper bounded by Cauchy Schwartz inequality and $\|B_{N,N+1}\| = \sum_{m = 1}^{N} \|I^{\otimes (m-1)}\otimes E_{2,s}\otimes I^{N-m}\| \leq \sum_{m=1}^N \|E_{2,s}\| = N\|E_{2,s}\|$ as
\begin{equation}
\begin{aligned}
     &\xi^\dag  \hat{d}_N + \hat d^\dag_N \xi = 2\mathrm{Re}(\xi_N B_{N,N+1} v^{\otimes (N+1)}(t))\leq 2N\|\xi\| \|E_{2,s}\|\|v(t)\|^{N+1}
\end{aligned}
\end{equation}
Consequently, we have $\frac{d}{dt}\| \xi\|^2 \leq 2\mu(G)\|\xi\|^2 + 2N\|\xi\| \|E_{2,s}\|\|v\|^{N+1}$, it reveals that for $\|\xi\|>0$, we have
\begin{equation}
    \frac{d}{dt}\| \xi\| \leq \mu(G)\|\xi\| + N \|E_{2,s}\|\|v(t)\|^{N+1},\quad \|\xi(0)\|=0\,.
\end{equation}
By Gr\"onwall's inequality, we obtain
\begin{equation}\label{eq:xibound}
    \|\xi(t)\|\leq tN \|E_{2,s}\|\max_{t\in[0,T]}\|v(t)\|^{N+1}.
\end{equation}

Finally, when $Q = \sqrt{P}/\gamma$, we note that
\[
\mu(E_{1,s}) = \mu_P(F_{1,s}),\quad
    \|E_{2,s}\|= \gamma\|F_{2,s}\|_{P},\quad \|E_{0,s}\| = \frac{1}{\gamma}\|F_{0,s}\|_P\,.
\]
The condition in~\cref{eq:C0} is then satisfied when $\gamma \in \left(\zeta_{-}^P, r_+^P\right)$. This concludes the proof of~\cref{prop:stable_carleman_convergence}.
\end{proof}

In the remaining part of this section, we will provide the proof of~\cref{lem:solutionBdP}. We first provide a bound for general Lyapunov transformation and then show that the special case reveals the results in~\cref{lem:solutionBdP}.
\begin{lem}[Stable system: long time bound of $\|v(t)\|$]\label{lm:normboundMPS} Define $O = Q^\dag Q$, where $Q$ is the Lyapunov transform matrix defined in~\cref{eq:LyapunovTransform}.
  Suppose $\mu_O(F_{1,s})<0$ and $ \mu_O(F_{1,s})^2 -4\|F_{2,s}\|_O\|F_{0,s} \|_O>0$. Consider $u(t)$ in~\cref{eq:MPSquadratic} and $v(t)$ in~\cref{eqn:v_t_equation}. Define
    \begin{equation}
   \begin{aligned}
   r_\pm^O & =\frac{-\mu_O( F_{1,s})\pm \sqrt{\mu_O( F_{1,s})^2 -4\| F _{2,s}\|_O\| F_{0,s} \|_O}}{2\| F_{2,s}\|_O}\,.\\
   \end{aligned}
   \end{equation}
   Then,
   \begin{itemize}
       \item If $\|u(0)\|_O=\|v(0)\|\in \left[r_-^O, r_+^O\right)$, then $\|u(t)\|_O =\|v(t)\|\leq \|u(0)\|_O< r_+^O$.
       \item If $\|u(0)\|_O =\|v(0)\|\in \left[0,r_-^O\right)$, then $\|u(t)\|_O =\|v(t)\|< r_-^O$.
   \end{itemize}
\end{lem}
The proof of the solution bound follows the comparison theorem and the bound for scalar Riccati equation.
\begin{proof}[Proof of \cref{lm:normboundMPS}]
We first estimate the time derivative of $\|v\|^2 = v^\dag v $,
    \begin{equation}
    \begin{aligned}
        \partial_t \|v\|^2 &= v^\dag E_{2,s} (v\otimes v) + (v^\dag \otimes v^\dag) E_{2,s}^\dag v + v^\dag (E_{1,s} + E_{1,s}^\dag)v + v^\dag E_{0,s} + E_{0,s}^\dag v \\
        &\le 2\|E_{2,s}\|\|v\|^3 + 2\mu(E_{1,s})\|v\|^2 + 2\|E_{0,s} \|\|v\|\,.
    \end{aligned}
    \end{equation}
    If $\|v\|\neq 0 $, this implies
    \begin{equation}
        \partial_t \|v(t)\| \leq\|E_{2,s}\|\|v\|^2 + \mu(E_{1,s})\|v\| + \|E_{0,s} \| \,.
    \end{equation}
    Consider the scalar comparison ODE for $z(t) = \|v(t)\|$,
    \begin{equation}
        \partial_t z  =  \|E_{2,s}\|z^2 + \mu(E_{1,s})z + \|E_{0,s} \|
    \end{equation}
    When its discriminant is positive,
    \begin{equation}
        \Delta =  \mu(E_{1,s})^2 -4\|E_{2,s}\|\|E_{0,s} \|>0\,,
    \end{equation}
    the quadratic polynomial admits two distinct real roots
    \begin{equation}\label{eq:rpm}
    r_\pm \coloneqq \frac{-\mu(E_{1,s})\pm \sqrt{\mu(E_{1,s})^2 -4\|E_{2,s}\|\|E_{0,s} \|}}{2\|E_{2,s}\|}\,.
\end{equation}
From the relationship between $E_{i,s}$ and $F_{i,s}$ in~\cref{eq:EfromF} and the properties of the weighted inner product introduced in~\cref{sec:preliminaries}, we have
\begin{equation}
    \mu(E_{1,s}) = \mu_O(F_{1,s}), \|E_{0,s}\| = \|F_{0,s}\|_O, \|E_{2,s}\| = \|F_{2,s}\|_O, \|v(t)\|=\|u(t)\|_O\,.
\end{equation}
Substituting them into the expression of $r_\pm$ in~\cref{eq:rpm}, we can rewrite those two roots as
\begin{equation}
    r_\pm = \frac{-\mu_O( F_{1,s})\pm \sqrt{\mu_O( F_{1,s})^2 -4\| F _{2,s}\|_O\| F_{0,s} \|_O}}{2\| F_{2,s}\|_O} = r_\pm^O\,.
\end{equation}
By the comparison theorem, we obtain the bound on $\|v(t)\|$ and its dependence on $\|v(0)\|$, i.e.
\begin{itemize}
    \item If $\|v(0)\| \in \left[r_-^O, r_+^O\right)$, then $\|v(t)\|\leq \|v(0)\|< r_+^O$.
    \item If $\|v(0)\|\in \left[0,r_-^O\right)$, then $\|v(t)\|< r_-^O$.
\end{itemize}
With $\|v(0)\| = \|u(0)\|_O, \|v(t)\| = \|u(t)\|_O$, this completes the proof.
\end{proof}

Now, we are ready to prove~\cref{lem:solutionBdP} by applying~\cref{lm:normboundMPS} with a specific choice of $Q$:
\begin{proof}[Proof of~\cref{lem:solutionBdP}]
The conditions in~\cref{lm:normboundMPS} are satisfied by choosing $Q = \sqrt{P}/\gamma$ under Assumptions~\ref{assumption:stability} and~\ref{assumption:stability2}. From Assumptions~\ref{assumption:stability}, we have
\[
    P F_{1,s}  + F_{1,s}^\dag P \prec 0.
\]

From the property of log norm in~\cref{sec:preliminaries}, the weighted log norm is negative as
\begin{equation}
    \mu_O(F_{1,s}) = \frac{1}{2}\max_i\lambda_i\left( \sqrt{P}^{-1}\left( PF_{1,s}+ F_{1,s}^\dag P\right)\sqrt{P}^{-1}\right)<0.
\end{equation}
because $P$ is hermitian.  Under Assumptions~\ref{assumption:stability2}, we have
\[
   \mu_P(F_{1,s})^2 -4\|F_{2,s}\|_P\|F_{0,s} \|_P>0.
\]
Furthermore, with $Q = \sqrt{P}/\gamma$, we have
\begin{equation}\label{eq:E2F}
\mu_O(F_{1,s}) = \mu_P(F_{1,s}),\quad
    \|F_{2,s}\|_O = \gamma\|F_{2,s}\|_{P},\quad \|F_{0,s}\|_O = \frac{1}{\gamma}\|F_{0,s}\|_P,\quad \|v(t)\|_O = \|u(t)\|\,.
\end{equation}
The first three equations in~\cref{eq:E2F} imply that $\mu_O(F_{1,s})^2 -4\|F_{2,s}\|_O\|F_{0,s} \|_O>0$ and $r_\pm^O = \frac{1}{\gamma} r^P_\pm$. This concludes the proof of~\cref{lem:solutionBdP}.
\end{proof}

\subsection{Quantum solvers for linear differential equations}
\label{sec:QLSAbasedalgorithm}
In this section, we analyze the query complexity of the quantum algorithm applied to the linear differential equation in~\cref{eq:carleman-trunc-LM}. This completes the third step of the complexity analysis. According to~\eqref{eq:bd_BN},\eqref{eq:C0},\eqref{eq:E2F} in the proof of~\cref{prop:stable_carleman_convergence}, we have
\[
    B_N + B_N^\dag\leq \frac{1}{2}C_E:=\frac{1}{2}\max\{ 4\mu_O(F_{1,s}) + 3\| F_{2,s}\|_O + 5\| F_{0,s}\|_O,
        \mu_O( F_{1,s}) +\| F_{2,s}\|_O + \| F_{0,s}\|_O\}<0
\]
under the assumption of~\cref{thm:main-stable}. By applying the quantum algorithm~\cite{An_2026} for dissipative differential equations, we obtain the following result for solving the Carleman linearized system.

\begin{prop}[Stable system]\label{prop:stable_linear_solver}
    Under the assumption of~\cref{thm:main-stable}. Let $\epsilon>0$ and assume the simulation time $T\geq 1/|C_E|$. There exists a quantum algorithm that outputs an $\epsilon$-approximation of the final state $\ket{v(T)}$ using
    \begin{equation}
        \widetilde{O}\left(\frac{\sqrt{T}g_v}{\sqrt{|C_E|}}\alpha_E\polylog(1/\epsilon)\right)
    \end{equation}
    queries to the block-encoding of $F_{i,s}$, for $i=0,1,2$, and the state preparation oracle of $\ket{v(0)}$,
    \begin{equation}
        \widetilde{O}\left(\frac{\kappa_Q^2\sqrt{T}g_v}{\sqrt{|C_E|}}\alpha_E\polylog(1/\epsilon)\right)
    \end{equation}
    queries to the block-encoding of $Q$. Here $\alpha_E,g_v$ are defined in~\cref{subsec:mainResults}, $\kappa_Q$ is the condition number of $Q:=\sqrt{P}/\gamma$ defined in~\cref{thm:main-stable}.
\end{prop}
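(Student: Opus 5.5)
The plan is to reduce the statement to the fast-forwarded dissipative linear ODE solver of~\cite{An_2026}, applied to the truncated Carleman system~\eqref{eq:carleman-trunc-LM}. The key input is that $\mu(B_N)\le \mu(G)\le \tfrac12 C_E<0$. This was established in the proof of~\cref{prop:stable_carleman_convergence} via the Gershgorin estimate~\eqref{eq:bd_BN}--\eqref{eq:C0}, together with the identification~\eqref{eq:E2F} of $\mu(E_{1,s})$, $\|E_{2,s}\|$ and $\|E_{0,s}\|$ with the $P$-weighted quantities when $Q=\sqrt{P}/\gamma$ and $\gamma\in(\zeta_-^P,r_+^P)$. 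Since $B_N$ is time independent and strictly dissipative with rate $|C_E|/2$, the result of~\cite{An_2026} for final-state preparation applies. For an inhomogeneous dissipative system with $T\gtrsim 1/|C_E|$, it costs $\widetilde{O}\bigl(\alpha_{B}\sqrt{T/|C_E|}\, g\,\polylog(1/\epsilon)\bigr)$ queries. Here $\alpha_B$ is the block-encoding normalization of $B_N$, and $g$ is a ratio of the form $\max_t\|\hat z(t)\|/\|\hat z(T)\|$, which accounts for the amplitude amplification onto the final block of $Y$. I expect $g_v$ to be essentially this ratio, transferred to $v$. The hypothesis $T\ge 1/|C_E|$ is exactly the regime in which the $\sqrt{T/|C_E|}$ scaling of~\cite{An_2026} holds.

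The first step is to build the block encoding of $B_N$. Each block $B_{j,j+1}$, $B_{j,j}$, $B_{j,j-1}$ in~\eqref{eq:CarlemanA} is a sum of at most $N$ tensor-embedded copies of $E_{2,s}$, $E_{1,s}$, or $E_{0,s}$. Using LCU, a block encoding of $B_N$ follows from block encodings of the $E_{i,s}$ with normalization $O(N(\alpha_{E_2}+\alpha_{E_1}+\alpha_{E_0}))$, and the factor $N=\polylog$ is absorbed into $\widetilde{O}$. The $E_{i,s}$ are obtained by multiplying block encodings, following~\eqref{eq:EfromF}. We have $E_{0,s}=QF_{0,s}$, with normalization $\alpha_Q\alpha_{F_{0,s}}$. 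For $E_{1,s}=QF_{1,s}Q^{-1}$ and $E_{2,s}=QF_{2,s}(Q^{-1}\otimes Q^{-1})$, we need a block encoding of $Q^{-1}$. I will obtain it by QSVT-based matrix inversion from the block encoding of $Q$, with normalization $\kappa_Q/\alpha_Q$ and $\widetilde O(\kappa_Q)$ queries to $Q$. Substituting the normalizations from~\cref{lem:query0}, namely $\alpha_{F_{1,s}}=\alpha_{F_1}+2\|s\|\alpha_{F_2}$ and the analogous expressions, gives
\[
\alpha_Q\alpha_{F_{0,s}}+\kappa_Q\alpha_{F_{1,s}}+\frac{\kappa_Q^2}{\alpha_Q}\alpha_{F_{2,s}},
\]
which regroups exactly into $\alpha_E$ in~\eqref{eq:alpha_E}. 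This is the source of the $\alpha_E$ factor. The $E_{2,s}$ term uses two inverses, each costing $\widetilde O(\kappa_Q)$ uses of $Q$, so every query to the $B_N$ encoding costs $\widetilde O(\kappa_Q^2)$ queries to $Q$. This gives the stated $\kappa_Q^2$ overhead.

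The second step is the state preparation for $\hat z(0)=[v_0;v_0^{\otimes2};\dots;v_0^{\otimes N}]$ and for $d_N=[E_{0,s};0;\dots]$. Here $v_0^{\otimes k}$ is prepared from $k$ copies of $\ket{v(0)}$, and the blocks are combined with amplitudes $\propto\|v_0\|^k$. Because $\|v_0\|<1$, the block weights decay geometrically, so the superposition costs $O(N)$ uses of the $\ket{v(0)}$ oracle. The vector $d_N$ is produced by applying the block encoding of $E_{0,s}$ to a fixed state. The norms needed to form the right-hand side are known classically.

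Finally, I will invoke~\cite{An_2026} with precision $\epsilon$. The Carleman truncation error is handled separately by~\cref{prop:stable_carleman_convergence}, which fixes $N$ logarithmically, so here I only need the linear solver to output $\ket{\hat z(T)}$. Projecting onto the first block with amplitude amplification then yields $\ket{\hat v_T}$. The main obstacle is the precise form of the amplification overhead, that is, showing the normalization ratio is bounded by $g_v$. I would bound $\|\hat z(T)\|\ge\|\hat z_1(T)\|\approx\|v(T)\|$ and $\max_t\|\hat z(t)\|\le \sqrt{\sum_k \max_t\|v\|^{2k}}+\|\xi\| = O\bigl(\max_t\|v(t)\|/\sqrt{1-\max_t\|v\|^2}\bigr)$. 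This is finite by~\cref{lem:solutionBdP}, and it produces a ratio of the form $\max_t\|v(t)\|/\|v(T)\|$ up to constants. I would take this as the definition of $g_v$.
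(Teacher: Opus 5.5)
Your overall route is the paper's. It uses dissipativity $\mu(B_N)\le\tfrac12 C_E$ from the Gershgorin estimate, and an LCU block encoding of $B_N$ with normalization $N(\alpha_Q\alpha_{F_{0,s}}+\kappa_Q\alpha_{F_{1,s}}+\kappa_Q^2\alpha_{F_{2,s}}/\alpha_Q)$; your regrouping of this into $\alpha_E$ is correct. It then uses $O(N)$ uses of $\ket{v(0)}$ to prepare $\hat z(0)$, the final-state solver of \cite{An_2026}, and extraction of the first Carleman block. The gap is in that last step, which is where most of the paper's proof sits.

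You propose to run the solver at precision $\epsilon$ and then project. But postselecting and renormalizing the first block amplifies the solver error. If $\bigl\|\tilde z-\hat z(T)/\|\hat z(T)\|\bigr\|\le\epsilon'$, the normalized first block is only within $2\epsilon'/\|\tilde z_1\|$ of $\hat z_1(T)/\|\hat z_1(T)\|$. Moreover, $\|\tilde z_1\|\approx\|\hat z_1(T)\|/\|\hat z(T)\|$ is not of order one in general; the paper only guarantees order $1/\sqrt N$. To get an $\epsilon$-approximation of $\ket{v(T)}$ you must take $\epsilon'=\Theta\bigl(\epsilon\,\|\hat z_1(T)\|/\|\hat z(T)\|\bigr)$; the paper uses $\epsilon'=\Theta(\epsilon/\sqrt N)$. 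You must also add the truncation contribution $2\|\xi_1(T)\|/\|v(T)\|\le 2\delta'$, with $\delta'=\Theta(\epsilon)$ fixing $N$. This costs only logarithmic factors, but as written your claimed output precision does not follow.

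Second, you never count the amplitude-amplification rounds for the projection onto the first block. That overhead, $O(\|\hat z(T)\|/\|\hat z_1(T)\|)$, is a separate multiplicative factor. It is distinct from the ratio $\max_t\|\hat z(t)\|/\|\hat z(T)\|$ that enters the \cite{An_2026} bound, which is the only ratio you bound. The paper controls both via $\|v(t)\|\le 1\Rightarrow\|v(t)^{\otimes k}\|\le\|v(t)\|$. This gives $\|\hat z(T)\|\le\sqrt N(1+\delta')\|v(T)\|$, hence $O(\sqrt N)$ rounds and the factor $\sqrt N\frac{1+\delta'}{1-\delta'}g_v$.

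Your geometric-series estimate works equally well here. It gives $\|\hat z(T)\|/\|\hat z_1(T)\|=O\bigl(1/\sqrt{1-\rho^2}\bigr)$ with $\rho=\max_t\|v(t)\|<1$, an $N$-independent but $\rho$-dependent constant in place of $\sqrt N$. So the gap closes once this step is stated and counted.

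A minor point: two factors of $Q^{-1}$ at $\widetilde O(\kappa_Q)$ queries each add to $\widetilde O(\kappa_Q)$ per query to $B_N$, not $\widetilde O(\kappa_Q^2)$. The claimed $\kappa_Q^2$ bound still holds as an upper bound, but your stated justification for it is off.
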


In general, the shifted system may remain unstable even after pivot shift. Nevertheless, over a short time interval, one can still establish the following query complexity using the quantum algorithm for linear differential equations~\cite{kroviImprovedQuantumAlgorithms2023}.
\begin{prop}[Unstable system]\label{prop:unstable_linear_solver}
  Under the assumption of~\cref{thm:main-unstable}, we choose $t^*$ as in~\cref{prop:unstable_carleman_convergence}. For any $T\leq t^*$, there exists a quantum algorithm to solve the quadratic ODE and output a quantum state $\epsilon$-close to the final state $\ket{v(T)}$ using
    \begin{equation}
        \widetilde{\mathcal{O}}\left(g_v\frac{\left(
                    \alpha_{F_{0,s}}
                    + \alpha_{F_{1,s}}
                    + \alpha_{F_{2,s}}
                \right)\left\|F_{2,s}\right\|T^2}{\|v(T)\|\epsilon}\polylog(n)\right)
    \end{equation}
 queries to the block-encoding of $F_{i,s}$. Here, $\alpha_{F_{i,s}}$ is the block-encoding factor for $F_{i,s}$, defined in~\cref{lem:query0}.
\end{prop}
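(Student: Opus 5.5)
We will prove \cref{prop:unstable_linear_solver} by applying the truncated-Taylor QLSA solver of~\cite{kroviImprovedQuantumAlgorithms2023} to the lifted system~\eqref{eq:carleman-trunc-LM} with $Q=I$ and $s=x(0)$. We then track three quantities: the block-encoding normalization of $B_N$, a uniform bound on $\|e^{tB_N}\|$ over $[0,T]$, and the amplitude of the final first block. The argument has five steps.

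\textbf{Step 1: block encoding of $B_N$.} By~\cref{lem:query0}, each $F_{i,s}$ has a block encoding with factor $\alpha_{F_{i,s}}$. Each block row $j$ of $B_N$ is a sum of at most $j$ tensor-shifted copies of $E_{i,s}=F_{i,s}$. An LCU over the three bands and the positions therefore gives a block encoding of $B_N$ with normalization $\mathcal{O}(N(\alpha_{F_{0,s}}+\alpha_{F_{1,s}}+\alpha_{F_{2,s}}))$. This costs one query to each $O_{F_{i,s}}$, up to the $\polylog(n)$ factor from the index registers. Since $u(0)=0$, we have $\hat z(0)=0$, so the only state-preparation oracle needed is $\ket{d_N}=\ket{F_{0,s}}$. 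The truncation order $N$ from~\cref{prop:unstable_carleman_convergence} is logarithmic, so $N$ is absorbed into $\widetilde{\mathcal O}$.

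\textbf{Step 2: bound on $e^{tB_N}$.} The complexity of~\cite{kroviImprovedQuantumAlgorithms2023} scales as $\widetilde{\mathcal O}(\alpha_{B_N}T\, g\cdot C_{\max}/\|\text{target}\|)$, where $C_{\max}=\sup_{t\le T}\|e^{tB_N}\|$ and $g$ is a ratio of maximal to final solution norm. Instead of the crude estimate of~\cite[Lemma 16]{kroviImprovedQuantumAlgorithms2023}, we use the lower block-triangular plus banded structure together with the short-time hypothesis $T\le t^*=1/(e\sum_i\|F_{i,s}\|)$. Expanding $e^{tB_N}$ in powers and counting walks on the tridiagonal band, each step from level $j$ has weight at most $j\sum_i\|F_{i,s}\|$. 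Combining this with $j^k/k!\le e^j$ and $t\sum_i\|F_{i,s}\|\le 1/e$ shows that the relevant block of $e^{tB_N}$ acting on $d_N$ is bounded by a constant times $T\|F_{0,s}\|$. Because $\hat z(0)=0$ and the forcing enters only at level one, only this Duhamel integral contributes.

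\textbf{Step 3: final amplitude and the $\epsilon$-dependence.} The output amplitude is governed by $\|\hat z_1(T)\|\approx\|v(T)\|$. By~\cref{prop:unstable_carleman_convergence}, $\|\hat z_1(T)-v(T)\|\le\epsilon\|v(T)\|$, so $\|\hat z_1(T)\|$ is within a constant factor of $\|v(T)\|$. Amplitude amplification of the first block therefore costs a factor $g_v/\|v(T)\|$ relative to the history norm. The extra factor $T\|F_{2,s}\|/\epsilon$ arises because the Taylor/QLSA error must be set relative to $\|v(T)\|$, which itself may be $\mathcal{O}(T)$ small. I expect the precise bookkeeping to follow from the Krovi error propagation with condition number $\mathcal{O}(MC_{\max})$ and $M=\mathcal{O}(\alpha_{B_N}T)$ time steps. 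Multiplying the factors from Steps 1–3 gives $\widetilde{\mathcal O}\bigl(g_v(\sum_i\alpha_{F_{i,s}})\|F_{2,s}\|T^2/(\|v(T)\|\epsilon)\,\polylog(n)\bigr)$.

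\textbf{Main obstacle.} The hardest step is Step 2: obtaining a uniform bound on $\|e^{tB_N}\|$ restricted to the reachable subspace that is independent of $N$ up to polylogarithmic factors. The walk-counting series must converge using only $t\le t^*$. I would first try to mirror the proof of~\cref{prop:unstable_carleman_convergence}: bound the level-$k$ component by $(te\sum_i\|F_{i,s}\|)^{k}$ times a polynomial in $N$, and sum the resulting geometric series.
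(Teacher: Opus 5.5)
\textbf{Step 2 does not bound the quantity the solver needs.} In \cite[Theorem 7]{kroviImprovedQuantumAlgorithms2023} (\cref{lem:QLSAbasedAlgorithm2}), the factor $C(B_N)=\sup_{t\le T}\|e^{tB_N}\|$ is the operator norm on the whole lifted space. That is what bounds $\|A_{M,M_p-1}^{-1}\|$, i.e.\ the condition number the QLSA sees. Your walk count only controls $e^{tB_N}$ applied to $d_N$, i.e.\ the trajectory $\hat z(t)$. That information enters through the ratio $g_{\hat z}$, which the paper bounds directly from $\|v\|\le 1$ and the Carleman error; it does not enter through $\kappa$.

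On the full space, an $N$-independent bound is false in the unstable regime. The last diagonal block $B_{N,N}$ is the Kronecker sum of $N$ copies of $F_{1,s}$. For example, take $F_{2,s}=0$, so that $B_N$ is block lower triangular, and $F_{1,s}=\lambda I$ with $\lambda>0$. Then $\|e^{tB_N}\|\ge\|(e^{tF_{1,s}})^{\otimes N}\|=e^{N\lambda t}$, which is $e^{\Theta(N)}$ at $t=t^*$ if $\|F_{0,s}\|\le\lambda$. By continuity this persists for small $F_{2,s}\neq 0$. A walk count started at level $N$ rather than level $1$ shows exactly this $e^{cN}$ growth.

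\textbf{Step 3 then has no source for the claimed factor.} Taylor-truncation and QLSA tolerances enter only polylogarithmically. Measuring them relative to $\|v(T)\|$ therefore produces $\log(1/\|v(T)\|)$, not $1/(\|v(T)\|\epsilon)$. Amplifying the first block costs only $\mathcal{O}(\sqrt N)$, since $\|v\|\le 1$ gives $\|\hat z_1(T)\|/\|\hat z(T)\|\ge(1-\delta')/(\sqrt N(1+\delta'))$. In fact, with $C_{\max}=\mathcal{O}(1)$ your own ingredients ($\kappa=\mathcal{O}(MC_{\max})$, $M=\mathcal{O}(\alpha_{B_N}T)$) would give $\widetilde{\mathcal{O}}(g_v\sum_i\alpha_{F_{i,s}}T\polylog(1/\epsilon))$, with neither $T^2$ nor $1/\epsilon$. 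So Steps 2 and 3 are inconsistent with each other.

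In the paper, the factor $T\|F_{2,s}\|/(\|v(T)\|\epsilon)$ comes precisely from the growth you tried to exclude:
\begin{enumerate}
\item The weighted log-norm argument, with $\Lambda$ weighting level $j$ by $1/j$ (proof of \cref{prop:unstable_carleman_convergence}), gives $C(B_N)\le\sqrt N\,e^{\frac N2(\mu(F_{1,s})+\|F_{0,s}\|+\|F_{2,s}\|)T}\le\sqrt N\,e^{N}$.
\item This is inserted into Krovi's bound together with $\|B_N\|=\mathcal{O}(N\sum_i\alpha_{F_{i,s}})$, $g_{\hat z}\le\sqrt N\frac{1+\delta'}{1-\delta'}g_v$, and the $\sqrt N$ amplification overhead.
\item The logarithmic choice of $N$ from \cref{prop:unstable_carleman_convergence}, with $\delta'=\Theta(\epsilon)$, turns $e^N$ into $T\|F_{2,s}\|/(\|v(T)\|\epsilon)$.
\end{enumerate}
Combined with the explicit factor $T$ from Krovi's bound, this yields the stated $T^2/\epsilon$ scaling.
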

We put the proof of~\cref{prop:unstable_linear_solver} in~\cref{appd:unstable}.

Before proving~\cref{prop:stable_linear_solver}, we quantify the normalizing coefficient for $B_N$ and $d_N$ in~\cref{eq:carleman-trunc-LM}. Follows the argument in~\cite[(L.3)-(L.6)]{jennings_quantum_2025}, we have
\begin{itemize}
    \item $\alpha_{d_N} = \|E_{0,s}\|=\alpha_{F_{0,s}}\alpha_{Q}$,
    \item $\alpha_{B_N}= N(\alpha_{F_{0,s}}\alpha_Q + \alpha_{F_{1,s}}\kappa_Q+\frac{\alpha_{F_{2,s}}\kappa_Q^2}{\alpha_Q})$.
\end{itemize}
By LCU and quantum singular value transformation~\cite{gilyen2019quantum},
a single query to the block encoding oracle of $B_N$ requires one call to each block encoding of $F_{i,s}, i = 0,1,2$ and $\mathcal{O}(\kappa_Q^2\log(1/\epsilon)^2)$ calls to the block encoding of $Q$ with precision $\epsilon$. Moreover, one query to the state preparation oracle for $\ket{d_N}$ requires one call to $E_{0,s}$ and one call to the block encodig of $Q$. From \cite[Lemma 5]{liu_efficient_2021}, the initial state $\hat{z}(0)$ can be prepared with $N$ queries to the state preparation oracle for $\ket{v(0)}$.
\begin{proof}[Proof of \cref{prop:stable_linear_solver}]
The complexity of QLSA based Carleman linearization method can be obtained by applying QLSA in Corollary 12 in~\cite{An_2026}. We first state the result here.
\begin{lem}(The time independent version for final state preparation in ~\cite[Corollary 12]{An_2026})
    Using the truncated Taylor series method for solving the ODEs $\partial_t u = A u + b$, $t\in[0,T]$ with initial condition $u(0) = u_0$. Suppose the $A$ is a matrix such that $A + A^\dag \leq -2\eta<0$. Let $\epsilon>0$ be the target error, $T\geq \eta^{-1}$ be the evolution time, $\alpha_A$ be the block-encoding factor of $A$ and $\alpha_b$ be the normalization factor of $b$. Then we can obtain an $\epsilon$-approximation of the final state $u(T)/\|u(T)\|$ using
    \begin{equation}
        \widetilde{O}\left(\frac{\max_t\|u(t)\|}{\|u(T)\|}\frac{\alpha_A \sqrt{T}}{\sqrt{\eta}}\log\left(\frac{\|u_0\| + \alpha_b}{\|u(T)\|}\frac{\alpha_A}{\eta\epsilon}\right)\log\left(\frac{1}{\epsilon}\right)\right)
    \end{equation}
    queries to the block encoding of $A$ and
    \begin{equation}
        \widetilde{O}\left(\frac{\max_t\|u(t)\|}{\|u(T)\|}\frac{\alpha_A \sqrt{T}}{\sqrt{\eta}}\log\left(\frac{1}{\epsilon}\right)\right)
    \end{equation}
    queries to the state preparation of $\ket{u_0}$ and $\ket{b}$.
\end{lem}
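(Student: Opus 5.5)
The lemma is an instance of \cite[Corollary 12]{An_2026}. I would reprove it by analyzing exactly the linear system $A_{M,M_p-1}Y=b_{M,M_p-1}$ from \cref{eq:AMMp}, with $B_N,d_N$ replaced by $A,b$. The plan has four parts:
\begin{itemize}
\item bound the condition number of $A_{M,M_p-1}$ independently of $M$, using dissipativity;
\item bound the discretization error;
\item lower bound the success probability of landing in the padded final-time block;
\item choose $h$ and $M_p$ so that the QLSA cost of \cref{lem:QLSP}, times the amplitude amplification overhead, scales like $\sqrt{T/\eta}$.
\end{itemize}

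\emph{Step 1 (contraction of one step).} Take the step size $h=\Theta(1/\alpha_A)$ and the Taylor order $J=\mathcal{O}(\log(\cdot/\epsilon))$. From $A+A^\dagger\le -2\eta$ we get $\|e^{hA}\|\le e^{-\eta h}$. The truncation remainder is at most $(\alpha_A h)^{J+1}/(J+1)!$, which is negligible, so $\|R^k\|\le e^{-\eta h k}(1+\delta)^k$. Choosing $J$ large enough that $\delta\le \eta h/2$ gives $\|R^k\|\le e^{-\eta hk/2}$.

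\emph{Step 2 (condition number).} $\|A_{M,M_p-1}\|=\mathcal{O}(1)$ holds immediately. The inverse is block lower triangular. In the evolution part its $(j,k)$ block is $R^{j-k}$, and the padded rows carry copies of $R^{M-k}$ plus identity blocks. By a Schur-test (row/column sum) bound,
\[
\|A_{M,M_p-1}^{-1}\|\lesssim \sum_{k\ge0}\|R^k\| + M_p \lesssim \frac{1}{\eta h}+M_p .
\]
The important point is that this bound is independent of $M$. Using the contraction from Step 1 here is the \emph{main obstacle}. A naive bound would give $\kappa\sim M$, which only yields $\mathcal{O}(T)$ complexity. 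I would check the geometric-series estimate carefully, including the coupling between the padding rows and the evolution rows.

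\emph{Step 3 (error).} The local Taylor errors are propagated by powers of $R$, so the global error is at most $\frac{1}{\eta h}$ times the local error. The QLSA targets precision $\epsilon\|u(T)\|/\|Y\|$, which produces the logarithm $\log\bigl(\frac{\|u_0\|+\alpha_b}{\|u(T)\|}\frac{\alpha_A}{\eta\epsilon}\bigr)$.

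\emph{Step 4 (success probability and balancing).} The probability of measuring a padded block is
\[
p=\frac{M_p\|u(T)\|^2}{\sum_{m\le M}\|u(t_m)\|^2+M_p\|u(T)\|^2}\ge \frac{M_p\|u(T)\|^2}{M g^2+M_p\|u(T)\|^2},
\]
where $g=\max_t\|u(t)\|$. With amplitude amplification the total cost is $\kappa\, p^{-1/2}\log(1/\epsilon)$. Choosing $M_p=\Theta(1/(\eta h))$ gives $\kappa=\mathcal{O}(1/(\eta h))$ and $p^{-1/2}\lesssim \frac{g}{\|u(T)\|}\sqrt{M\eta h}$ (using $T\ge\eta^{-1}$, so that $M\eta h\ge 1$). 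Hence
\[
\text{cost}\lesssim \frac{g}{\|u(T)\|}\frac{\sqrt{M}}{\sqrt{\eta h}}
=\frac{g}{\|u(T)\|}\frac{\sqrt{T}}{h\sqrt{\eta}}
=\frac{g}{\|u(T)\|}\frac{\alpha_A\sqrt{T}}{\sqrt\eta}.
\]
Each application of $A_{M,M_p-1}$ uses $\mathcal{O}(J)$ queries to the block encoding of $A$, which supplies the extra logarithmic factor in the first bound. The states $\ket{u_0}$ and $\ket{b}$ are needed only once per right-hand-side preparation, which gives the second bound without that factor.
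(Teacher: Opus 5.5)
Your proposal is correct. The paper does not prove this lemma itself: it imports the bound verbatim from \cite[Corollary 12]{An_2026} and uses it as a black box in the proof of \cref{prop:stable_linear_solver}. Your reconstruction supplies the mechanism behind that corollary: an $M$-independent bound $\|A_{M,M_p-1}^{-1}\|\lesssim 1/(\eta h)+M_p$ obtained by a Schur test from the contraction $\|R^k\|\le e^{-c\eta hk}$, padding length $M_p=\Theta(1/(\eta h))$, and amplitude amplification with $p^{-1/2}\lesssim\sqrt{T\eta}\,\max_t\|u(t)\|/\|u(T)\|$. What this buys over the citation is transparency:
\begin{itemize}
\item it shows exactly where $\sqrt{T/\eta}$ comes from;
\item it shows where $T\ge\eta^{-1}$ is used, namely to make $M\eta h\ge 1$ so that the evolution blocks, not the padding, dominate $1/p$;
\item it shows that the extra logarithm in the $A$-query count is the Taylor order $J$, while the $\log(1/\epsilon)$ shared by both counts is the QLSA precision. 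Your Step 3 attributes the former to the QLSA precision, which is harmless inside $\widetilde{O}$.
\end{itemize}

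If you write it out, tighten three details.

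First, $\|R\|\le e^{-\eta h}+\delta$, not $e^{-\eta h}(1+\delta)$. Take $h\le 1/(2\alpha_A)$; then $\eta h\le 1/2$, since $\eta\le\|A\|\le\alpha_A$. In that regime it suffices to require $\delta\le \eta h/4$ to get $\|R\|\le e^{-\eta h/2}$.

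Second, the forcing block in $b_{M,M_p-1}$ is $\sum_{j<J}\frac{h^{j+1}}{(j+1)!}A^j b$, not $b$. So each right-hand-side preparation also costs $O(J)$ queries to $A$. Its success amplitude is $\Omega(1)$: the function $\varphi(z)=(e^z-1)/z$ has inverse $z/(e^z-1)$, which is analytic on $|z|<2\pi$, so $\|\varphi(hA)^{-1}\|=O(1)$ for $\|hA\|\le 1/2$. This is what keeps the state-preparation count as you claim.

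Third, since you post-select the padded blocks, the QLSA precision must be $\epsilon\sqrt{p}$. Your choice $\epsilon\|u(T)\|/\|Y\|$ is smaller, so it suffices and only changes logarithmic factors.
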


In our setting, we first apply the above lemma to solve the Carleman linearized equation $\partial_t \hat{z}  = B_N\hat{z} + d_N$ and then calculate the extra rounds from amplitude amplification to extract the approximation of $v(T)$. We denote the quantum state after applying the QLSA as  $\tilde{z}$ then the first block $\tilde{z}_1$, the same block when Carleman truncation order $N =1$, approximates $v(T)$.
First, we denote the precision from QLSA as $\epsilon'$.
For all $\epsilon'>0$ and $T\geq1/|C_E|$,
    \begin{equation}
        \left\|\tilde{z} - \frac{\hat{z}(T)}{\|\hat{z}(T)\|}\right\|\leq \epsilon'
    \end{equation}
    with
    \begin{equation}\label{eqn:BNquery}
        \widetilde{O}\left(\frac{\max\|\hat{z}(t)\|}{\|\hat{z}(T)\|}\frac{\alpha_{B_N} \sqrt{T}}{\sqrt{|C_E|}}\log\left(\frac{\|\hat{z}(0)\|+\alpha_{d_N}}{\|\hat{z}(T)\|}\frac{\alpha_{B_N}}{|C_E|\epsilon'}\right)\log(1/\epsilon')\right)
    \end{equation}
    queries to block-encoding of $B_N$ and
    \begin{equation}\label{eqn:initialquery}
        \widetilde{O}\left(\frac{\max_t\|\hat{z}(t)\|}{\|\hat{z}(T)\|}\frac{\alpha_{B_N} \sqrt{T}}{\sqrt{|C_E|}}\log(1/\epsilon')\right)
    \end{equation}
    queries to the state preparation oracle of $\ket{\hat{z}(0)}$ and $\ket{d_N}$. Let the Carleman linearization error bounded by $\delta'$. \cref{prop:stable_carleman_convergence} shows that $\|\xi_j(T)\|\leq \|\xi(T)\|\leq \delta'\|v(T)\|$ can be achieved
    with truncation order $N \geq \log(T\|E_{2,s}\|/\delta'\|v(T)\|)/\log(1/\max_{t\in[0,T]}\|v(t)\|).$

    Through measuring the quantum registers, we use the first block, $\tilde{z}_1$, of size  corresponding to Carlman linearization truncated at $N = 1$, to approximate $v(T)$. The success probability can be approximated as
\begin{equation}\label{eq:z1tilde}
\begin{aligned}
    \|\tilde{z}_1\|
    &\geq
    \frac{\|\hat{z}_1(T)\|}{\|\hat{z}(T)\|}
    -
    \left\|
        \frac{\hat{z}_1(T)}{\|\hat{z}(T)\|}
        - \tilde{z}_1
    \right\| \\
    &\geq
    \frac{\|\hat{z}_1(T)\|}{\|\hat{z}(T)\|}
    -
    \left\|
        \frac{\hat{z}(T)}{\|\hat{z}(T)\|}
        - \tilde{z}
    \right\|
    \geq
    \frac{\|\hat{z}_1(T)\|}{\|\hat{z}(T)\|}
    - \epsilon'.
\end{aligned}
\end{equation}
    With $\max_{t\in[0,T]}\|v(t)\| \leq 1$,
    the upper and lower bounds of $\hat{z}$ can be analyzed similar to the technique in~\cite[Theorem 8]{kroviImprovedQuantumAlgorithms2023}
    \begin{itemize}
    \item $\|\hat{z}(T)\|\geq\|\hat{z}_1(T)\|\geq \|v(T)\| - \|v(T)-\hat{z}_1(T)\|\geq\|v(T)\|(1-\delta'),$
    \item $\|\hat{z}(T)\|\leq \sqrt{N}(1+\delta') \|v(T)\|,$
    \item $\frac{\|\hat{z}_1(T)\|}{\| \hat{z}(T)\|}\geq \frac{(1-\delta')\|v(T)\|}{\sqrt{N}(1+\delta')\|v(T)\|} = \frac{(1-\delta')}{\sqrt{N}(1+\delta')}.$
\end{itemize}
Therefore, we have $\|\tilde{z}_1\| \geq \frac{(1-\delta')}{\sqrt{N}(1+\delta')} - \epsilon'$ and
\begin{equation}\label{eq:g_vdef}
    \frac{\max_t\|\hat{z}(t)\|}{\|\hat{z}(T)\|}\leq\frac{\sqrt{N}(1+\delta')\max_{t\in[0,T]}\|v(t)\|}{(1-\delta')\|v(T)\|} = \sqrt{N}\frac{1+\delta'}{1-\delta'} g_v, \quad \|\hat{z}(0)\|\leq \sqrt{N}(1+\delta')\|v(0)\|.
\end{equation}
By choosing $\epsilon' = \Theta\!\left(\frac{\epsilon}{\sqrt{N}}\right),
    \delta' = \Theta(\epsilon),$
the success probability  in~\cref{eq:z1tilde} is lower bounded by $\Theta(1/N)$, it
can be boosted to a constant using $\mathcal{O}(\sqrt{N})$ rounds of amplitude amplification.

To bound the approximation error from $\tilde{z}_1$,  we use
\begin{equation}\label{eq:z1tilde_error}
\begin{aligned}
    &\left\|
        \frac{\tilde{z}_1}{\|\tilde{z}_1\|}
        -
        \frac{v(T)}{\|v(T)\|}
    \right\|
    \leq
    \left\|
        \frac{\tilde{z}_1}{\|\tilde{z}_1\|}
        -
        \frac{\hat{z}_1(T)/\|\hat{z}(T)\|}{\|\hat{z}_1(T)\|/\|\hat{z}(T)\|}
    \right\|
    +
    \left\|
        \frac{\hat{z}_1(T)}{\|\hat{z}_1(T)\|}
        -
        \frac{v(T)}{\|v(T)\|}
    \right\| \\
    \leq &2\frac{\left\|\tilde{z}_1 - \frac{\hat{z}_1(T)}{\|\hat{z}(T)\|}\right\|}{\left\|\tilde{z}_1\right\|}+
    \left\|
        \frac{\hat{z}_1(T)}{\|\hat{z}_1(T)\|}
        -
        \frac{v(T)}{\|v(T)\|}
    \right\|
    \leq \frac{2\epsilon'}{\|\tilde{z}_1\|} + \frac{2\|v(T)-\hat{z}_1(T)\|}{\|v(T)\|} \leq \frac{2\epsilon'}{\|\tilde{z}_1\|} + 2\delta'\,.
\end{aligned}
\end{equation}
Still by choosing $\epsilon' = \Theta\!\left(\frac{\epsilon}{\sqrt{N}}\right),
    \delta' = \Theta(\epsilon)$, the above error $ \left\|
        \frac{\tilde{z}_1}{\|\tilde{z}_1\|}
        -
        \frac{v(T)}{\|v(T)\|}
    \right\|$ can be bounded by $\mathcal{O}(\epsilon)$.

Combining the QLSA complexity with the amplitude amplification overhead yields
\begin{equation}
  \begin{aligned}
 &\widetilde{O}\left(\underbrace{\frac{\max\|\hat{z}(t)\|}{\|\hat{z}(T)\|}\frac{\alpha_{B_N} \sqrt{T}}{\sqrt{|C_E|}}\log\left(\frac{\|\hat{z}(0)\|+\alpha_{d_N}}{\|\hat{z}(T)\|}\frac{\alpha_{B_N}}{|C_E|\epsilon'}\right)\log(1/\epsilon')}_{\text{From~\eqref{eqn:BNquery}}}\times \underbrace{\sqrt{N}}_{\text{Amplitude amplification}}\right)\\
=        &\mathcal{O}\Biggl(N^2
        \left(
            \alpha_{F_{0,s}}\alpha_Q
            + \alpha_{F_{1,s}}\kappa_Q
            + \frac{\alpha_{F_{2,s}}\kappa_Q^2}{\alpha_Q}
        \right)
        g_v
        \frac{\sqrt{T}}{\sqrt{|C_E|}} \\
        &\times
        \log\!\left(
            \frac{3\sqrt{N}\|v(0)\|+\|E_{0,s}\|}{\|v(T)\|}
            \frac{
                N^{3/2}g_v
                \left(
                    \alpha_{F_{0,s}}\alpha_Q
                    + \alpha_{F_{1,s}}\kappa_Q
                    + \frac{\alpha_{F_{2,s}}\kappa_Q^2}{\alpha_Q}
                \right)
            }{|C_E|\epsilon}
        \right)
        \log(\sqrt{N}/\epsilon)
    \Biggr)
\end{aligned}
\end{equation}
queries to the block-encodings of $F_{i,s}$, $i=0,1,2$,
\begin{equation}
   \mathcal{O}\Biggl(N^2
        \left(
            \alpha_{F_{0,s}}\alpha_Q
            + \alpha_{F_{1,s}}\kappa_Q
            + \frac{\alpha_{F_{2,s}}\kappa_Q^2}{\alpha_Q}
        \right)
        g_v
        \frac{\sqrt{T}}{\sqrt{|C_E|}}
        \log(\sqrt{N}/\epsilon)
    \Biggr)
\end{equation}
queries to the state preparation oracle of $\ket{v(0)}$ and $\ket{d_N}$.

In the equality, we use~\eqref{eq:g_vdef}, the definition of $\alpha_{B_N}$ and $\alpha_{d_N}$, and the choice of $\epsilon'$. Plugging in the dependence of $N$, up to polylogarithmic factors, the calls to coefficient matrices $F_{i,s}$, $i=0,1,2$ become
\begin{equation}
    \widetilde{O}\!\left(
        \sqrt{T}\, \frac{g_v}{\sqrt{|C_E|}}
        \left(
            \alpha_{F_{0,s}}\alpha_Q
            + \alpha_{F_{1,s}}\kappa_Q
            + \frac{\alpha_{F_{2,s}}\kappa_Q^2}{\alpha_Q}
        \right)
        \mathrm{polylog}(T/\epsilon)
    \right)\,.
\end{equation}
Since one query of $B_N$ requires $\mathcal{O}(\kappa_Q^2\log(1/\epsilon)^2)$ calls to the block encoding of $Q$, the number of queries to the block-encoding of $Q$ is
\begin{equation}
    \widetilde{O}\!\left(
        \kappa_Q^2 \sqrt{T}\, \frac{g_v}{\sqrt{|C_E|}}
        \left(
            \alpha_{F_{0,s}}\alpha_Q
            + \alpha_{F_{1,s}}\kappa_Q
            + \frac{\alpha_{F_{2,s}}\kappa_Q^2}{\alpha_Q}
        \right)
        \mathrm{polylog}(T/\epsilon)
    \right),
\end{equation}
and the number of queries to the state-preparation oracle for $\ket{v(0)}$ is
\begin{equation}
    \widetilde{O}\!\left(
        \sqrt{T}\, \frac{g_v}{\sqrt{|C_E|}}
        \left(
            \alpha_{F_{0,s}}\alpha_Q
            + \alpha_{F_{1,s}}\kappa_Q
            + \frac{\alpha_{F_{2,s}}\kappa_Q^2}{\alpha_Q}
        \right)
        \mathrm{polylog}(T/\epsilon)
    \right)\,.
\end{equation}
Plugging in the dependence of $\alpha_{F_{i,s}}$ in terms of $\alpha_{F_i}$ and $\alpha_Q$ from~\cref{lem:query0}, we conclude the proof.
\end{proof}

\subsection{Proof of the main theorem}
\label{sec:proofMain}
In this section, we combine the results from the previous sections to prove the main theorem for stable systems, i.e. \cref{thm:main-stable}. The proof of the main theorem for unstable systems, i.e. \cref{thm:main-unstable}, is given in~\cref{appd:unstable}.
\begin{proof}[Proof of \cref{thm:main-stable}]
By applying quantum solvers to the Carleman linearized differential equation, there exists quantum algorithm that outputs the $\epsilon'$-approximation of the final state, $\ket{\hat{v}_T}$, so that
\begin{equation}
    \left\|\ket{\hat{v}_T}-\frac{v(T)}{\|v(T)\|}\right\|\leq \epsilon'.
\end{equation}
From \cref{prop:stable_linear_solver}, we obtain the query complexity in terms of $\alpha_{F_{i,s}}, g_v$ and the state preparation oracle of $\ket{v(0)}$. Recall that $Q$ is block-encoded with factor $\alpha_Q=\mathcal{O}(\|Q\|)$. We then apply \cref{lem:QLSP} with $A=Q/\alpha_Q$, the state $\ket{Q^{-1}\hat{v}_T}$ with precision $\epsilon''$ can be produced with $\mathcal{O}(\kappa_Q \log(1/\epsilon''))$ calls to the oracle for $\ket{\hat{v}_T}$ and $Q$. We denote the state as $\ket{\hat{u}_T}$, then
\begin{equation}
    \left\|\ket{\hat{u}_T}-\frac{Q^{-1}\hat{v}_T}{\|Q^{-1}\hat{v}_T\|}\right\|\leq \epsilon''.
\end{equation}
Combining the error from the quantum algorithm for linearized equation and the inverse matrix multiplication, we have
\begin{equation}
    \begin{aligned}
        &\left\|\ket{\hat{u}_T} - \frac{u(T)}{\|u(T)\|}\right\|\\
        \leq& \left\|\ket{\hat{u}_T} - \frac{Q^{-1}\hat{v}_T}{\|Q^{-1}\hat{v}_T\|}\right\| + \left\|\frac{Q^{-1}\hat{v}_T}{\|Q^{-1}\hat{v}_T\|} -\frac{Q^{-1}v(T)}{\|Q^{-1}v(T)\|} \right\|\\
        \leq& \left\|\ket{\hat{u}_T} - \frac{Q^{-1}\hat{v}_T}{\|Q^{-1}\hat{v}_T\|}\right\| +\|Q^{-1}\| \left\|\frac{\hat{v}_T/\|\hat{v}_T\|}{\|Q^{-1}\hat{v}_T\|/\|\hat{v}_T\|} -\frac{v(T)/\|v(T)\|}{\|Q^{-1}v(T)\|/\|v(T)\|} \right\|\\
        \leq & \epsilon'' + \|Q^{-1}\|\frac{2\|v(T)\|}{\|u(T)\|}\left\|\frac{\hat{v}_T}{\|\hat{v}_T\|} - \frac{v(T)}{\|v(T)\|}\right\|\leq \epsilon'' + 2\kappa_Q \epsilon'\\
    \end{aligned}
\end{equation}
We choose $\epsilon' = \epsilon/(4\kappa_Q)$ and $\epsilon'' = \epsilon/2$, according to~\cref{prop:stable_linear_solver}, there exists a quantum algorithm that output the state approximate $\ket{u(T)}$ up to $\epsilon$ using
    \begin{equation}
        \widetilde{O}\left(\frac{\kappa_Q\sqrt{T}g_v}{\sqrt{|C_E|}}\alpha_E\polylog(4\kappa_Q/\epsilon)\right)
    \end{equation}
    queries to the block-encoding of $F_{i,s}$, for $i=0,1,2$, and the state preparation oracle of $\ket{v(0)}$,
    \begin{equation}
        \widetilde{O}\left(\frac{\kappa_Q^3\sqrt{T}g_v}{\sqrt{|C_E|}}\alpha_E\polylog(4\kappa_Q/\epsilon)\right)
    \end{equation}
    queries to the block-encoding of $Q$.

    Finally, we use the final state shift oracle in~\cref{lem:query0}. We first note that the success probability $p=\Theta(1)$ after amplitude amplification. The final state $\ket{x(T)}$ can be obtained from
    \begin{equation}
        \mathcal{O}\left(\frac{\sqrt{\|x(T)-s\|^2+ \|s\|^2 } }{\|x(T)\|}\right)
    \end{equation}
    queries to the state prepartion oracle of $\ket{u(T)}$ and $O_s$.

    Combining it with the initial state shift oracle in ~\cref{lem:query0}, we note that we need $\mathcal{O}\left(\frac{\sqrt{\|x(0)\|^2 + \|s\|^2}}{\|x(0)-s\|}\right)$ queries to $O_{x_0}$ and $O_s$ for preparing $U_{\ket{u(0)}}$ and $\mathcal{O}(\kappa_Q)$ queries to $U_{\ket{u(0)}}$ for preparing $U_{\ket{v(0)}}$, there exist a quantum algorithm that approximate the final state $\ket{x(T)}$ up to precision $\epsilon$ using
    \begin{equation}
        \widetilde{O}\left(\frac{\sqrt{\|x(T)-s\|^2+ \|s\|^2 } }{\|x(T)\|}\frac{\kappa_Q\sqrt{T}g_v}{\sqrt{|C_E|}}\alpha_E\polylog(4\kappa_Q/\epsilon)\right)
    \end{equation}
    queries to the block-encoding of $O_{F_{i}}$ for $i=0,1,2$, and
     \begin{equation}
        \widetilde{O}\left(\frac{\sqrt{\|x(T)-s\|^2+ \|s\|^2 } }{\|x(T)\|}\frac{\sqrt{\|x(0)\|^2 + \|s\|^2}}{\|x(0)-s\|}\frac{\kappa^2_Q\sqrt{T}g_v}{\sqrt{|C_E|}}\alpha_E\polylog(4\kappa_Q/\epsilon)\right)
    \end{equation}
    queries to $O_{x_0}$ and $O_s$,
    \begin{equation}
        \widetilde{O}\left(\frac{\sqrt{\|x(T)-s\|^2+ \|s\|^2 } }{\|x(T)\|}\frac{\kappa_Q^3\sqrt{T}g_v}{\sqrt{|C_E|}}\alpha_E\polylog(4\kappa_Q/\epsilon)\right)
    \end{equation}
    queries to the block-encoding of $Q$. This concludes the proof of \cref{thm:main-stable}.
\end{proof}
The analysis for unstable systems follow the same idea and is postponed to~\cref{appd:unstable}.

\section{Numerical Tests}
\label{sec:example}
We evaluate the proposed pivot-shifted Carleman (PSC) method on two canonical nonlinear systems: the 1D logistic equation and the 2D Lotka–Volterra predator-prey equations. For each system, we assess solution accuracy by computing the $\ell_2$ norm of the error between the simulated trajectory and a reference benchmark. For the logistic equation, the benchmark is the closed-form analytical solution; for the Lotka–Volterra system, it is a high-order Runge–Kutta solution. Both the standard Carleman linearization and the PSC variant are evaluated against these benchmarks to quantify the accuracy gain afforded by the pivot shift. The source code is available in our GitHub repository \cite{repo2026}.
\subsection{Logistic equation}
We begin with the logistic equation, where $x$ is a scalar variable:
\begin{equation}\label{eq:logistic}
\frac{dx}{dt} = x(1-x),\qquad x(0)=x_0.
\end{equation}
We set $x_0 = 0.5$ and evaluate both the standard Carleman linearization and the PSC method over $[0, 10]$ across a range of truncation orders. \Cref{fig:logistic}(a)--(d) display the solution trajectories and pointwise errors at $N=4$ and $N=8$.

\begin{figure}[!htbp]
    \centering
    \includegraphics[width=1\linewidth]{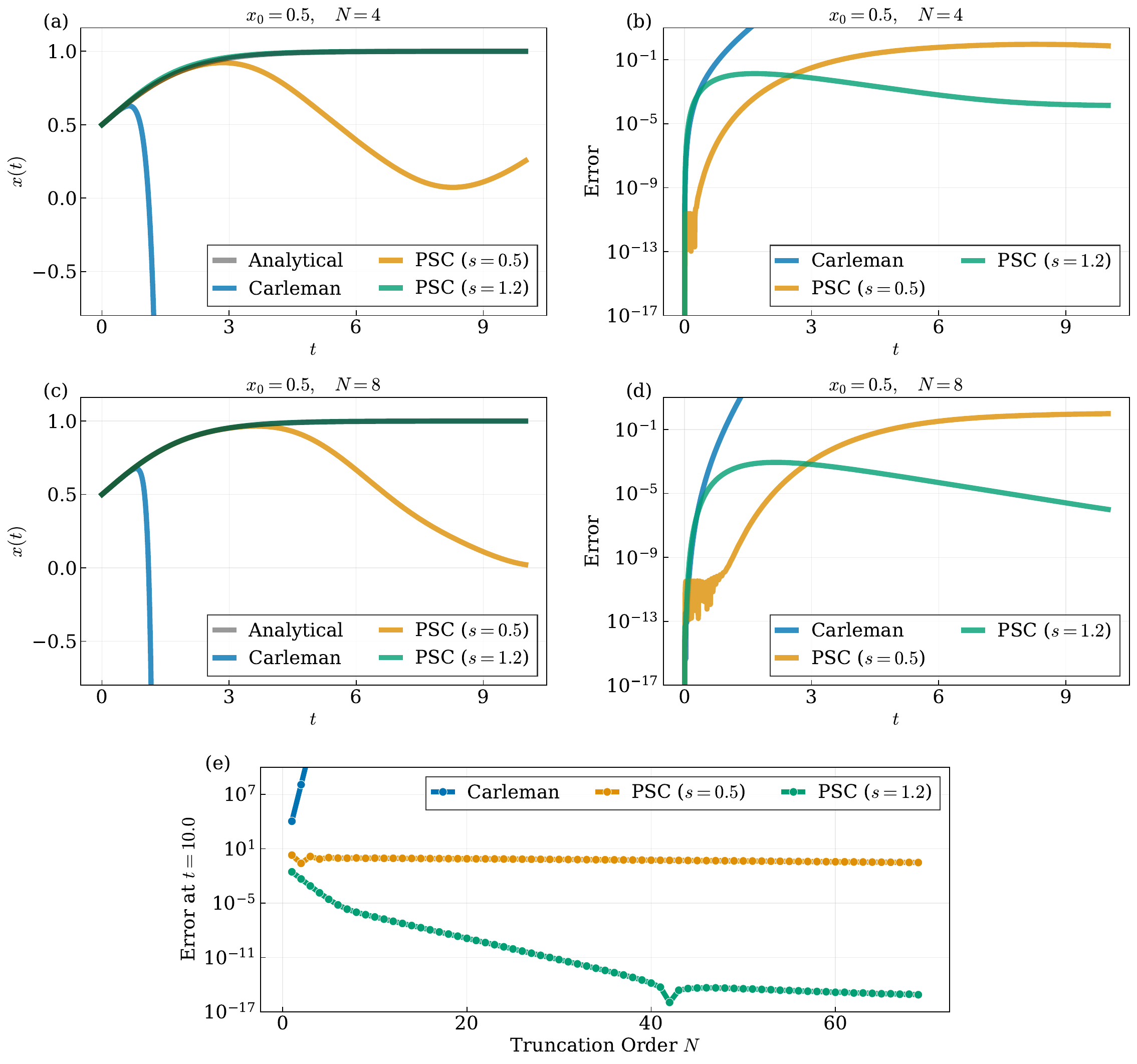}
    \caption{Performance of the standard Carleman linearization and the PSC method on the logistic equation~\cref{eq:logistic} with $x_0 = 0.5$. (a), (c): Solution trajectories of the exact solution, Carleman, PSC with $s=0.5$, and PSC with $s=1.2$, at truncation orders $N=4$ and $N=8$, respectively. (b), (d): Corresponding pointwise errors on a semi-logarithmic scale. The standard Carleman method diverges in both cases due to the violated stability condition $\alpha(F_1) = 1 > 0$. The PSC method with $s=0.5$ also eventually diverges, as this pivot $s$ lies outside the admissible region. The PSC method with $s=1.2$ remains stable and achieves substantially lower error. (e): Error at the final time $t=10$ as a function of truncation order $N \in [1, 69]$. While the errors of Carleman and PSC with $s=0.5$ stagnate or grow with $N$, PSC with $s=1.2$ exhibits exponential convergence, confirming the theoretical analysis.}
    \label{fig:logistic}
\end{figure}

The Carleman solution is unstable and diverges substantially from the analytical solution, as seen in \cref{fig:logistic}(a) and (c). This is because the logistic equation, when cast in the form of \cref{eq:qODEx}, yields $\alpha(F_1) = 1 > 0$, violating the stability condition in \cref{assumption:stability}. Consequently, the Carleman solution departs from the exact solution at approximately $t = 1.2$.

The PSC method addresses this instability by introducing the shifted variable $u = x - s$ and applying Carleman linearization to the resulting system. Rewriting the logistic equation in the form of \cref{eq:MPSquadratic,eq:Fis} gives
\begin{equation}\label{eq:MPSu}
\frac{du}{dt} = -u^2 + (1-2s)u + s - s^2, \qquad u(0) = u_0 := x_0 - s,
\end{equation}
with coefficients $F_{2,s} = -1$, $F_{1,s} = 1-2s$, and $F_{0,s} = s - s^2$. The stability condition \cref{assumption:stability} then requires
\begin{equation}
\alpha(F_{1,s}) = \mu(F_{1,s}) = 1 - 2s < 0,
\end{equation}
i.e., $s > 1/2$. We test two pivots, $s = 0.5$ and $s = 1.2$, to illustrate this dependence. When $s = 0.5$, the stability condition is marginally violated and the PSC solution is expected to diverge over time. \cref{fig:logistic}(a) and (c) confirm this: at $N = 4$ and $N = 8$, the PSC trajectory begins to deviate near $t = 3$, and the error plots in \cref{fig:logistic}(b) and (d) quantify the rapid error growth for long time evolution. \cref{fig:logistic}(e) reinforces this further, showing that the error of PSC with $s = 0.5$ stagnates as $N$ increases, with no meaningful improvement from higher truncation orders.

In contrast, when the pivot is chosen near the stable equilibrium $x^{\star} = 1$, exemplified by $s = 1.2$, the PSC solution closely tracks the analytical solution throughout the entire time interval. \cref{fig:logistic}(a) and (c) show that the trajectory remains accurate at both truncation orders, and the error plots in \cref{fig:logistic}(b) and (d) confirm that the error stays consistently low, in the range of $10^{-3}$ to $10^{-5}$. \cref{fig:logistic}(e) further demonstrates the advantage of this pivot choice: the error at $s = 1.2$ decays exponentially with $N$, confirming that proximity to the stable equilibrium is the key criterion for selecting an effective pivot. The comparison between $s = 0.5$ and $s = 1.2$ highlights the importance of choosing a correct pivot for the PSC method. The former, although closer to the initial condition, fails to achieve high accuracy due to the violation of the stability condition, while the latter, despite being farther from the initial condition, successfully stabilizes the solution and achieves exponential convergence.

\subsection{Lotka-Volterra equations}
We now test Carleman and PSC methods on a two-dimensional Lotka-Volterra predator-prey model as presented in \cite[Equation 37]{novikau2025globalizing},
\[
\begin{aligned}
    \partial_t x_1 &= \eta x_1 - \beta x_1x_2\,,\\
    \partial_t x_2 &= -\gamma x_2 + \delta x_1x_2\,,
\end{aligned}
\]
where $\eta, \gamma, \beta, \delta >0$. The system can be reduced in two parameters by normalizing $x_1$ and $x_2$
\begin{equation}\label{eq:LotkaVolterra}
\begin{aligned}
    \partial_t x_1 &= \eta x_1 - \eta x_1x_2\,,\\
    \partial_t x_2 &= -\gamma x_2 + \gamma x_1x_2\,,
\end{aligned}
\end{equation}
where $\eta, \gamma>0$. The system admits two unstable equilibrium points, $(0,0)$ and $(1,1)$ (Specifically, $(1,1)$ does not satisfy~\cref{assumption:stability} but is Lyapunov stable).

Let the vector $x = (x_1, x_2)^\top$. Then \cref{eq:LotkaVolterra} can be rewritten in the form of ~\cref{eq:qODEx} with
\begin{equation}
    F_0 = 0, \quad F_1 = \begin{pmatrix}
        \eta & 0 \\ 0 & -\gamma\\
    \end{pmatrix},\quad  F_2 = \begin{pmatrix}
        0 & -\eta & 0 & 0 \\
        0  & \gamma & 0 & 0 \\
    \end{pmatrix}\,.
\end{equation}
Since $\alpha(F_1) = \eta>0$, the Lotka-Volterra equations provide an example of an unstable ordinary differential equation for which the Carleman algorithm diverges in finite time interval $[0,T]$, where the end time $T<t^*$, in accordance with~\cref{lm:normboundMPS_2,prop:unstable_carleman_convergence}.

In the PSC method, we introduce the shifted variable $u = x-s$, then the coefficients in \cref{eq:MPSquadratic} are
\begin{equation}
    F_{1,s} = \begin{pmatrix}
        \eta - \eta s_2 & -\eta s_1 \\
        \gamma s_2 & -\gamma + \gamma s_1 \\
    \end{pmatrix}, \quad F_{0,s} = \begin{pmatrix}
        \eta (s_1 -  s_1 s_2)\\
        \gamma (s_1 s_2 -  s_2)\\
    \end{pmatrix}.
\end{equation}
For pivot choice $s = x(0)$, the shifted coefficient matrix $F_{1,s}$ becomes the Jacobian matrix at the initial time point, this gives that
\begin{equation}
    \|F_{1,s}\|\leq\sqrt{\frac{\tau + \sqrt{\tau^2-4\eta^2\gamma^2(s_1+s_2-1)^2}}{2}}\leq \sqrt{\tau} ,\|F_2\| = \sqrt{\eta^2+\gamma^2},\|F_{0,s}\|\leq \sqrt{2(\eta^2 + \gamma^2)}
\end{equation}
where $\tau = \eta^2\left((1-s_2)^2+s_1^2
\right)+\gamma^2(s_2^2 + (s_1-1)^2)\leq 2(\eta^2+\gamma^2)$. Hence, the time limit $t^*$ in~\cref{lm:normboundMPS_2,prop:unstable_carleman_convergence} can be estimated as
\begin{equation}
    t^* \geq\frac{1}{e(1+2\sqrt{2})\sqrt{\eta^2+\gamma^2}}.
\end{equation}
\begin{figure}[!htbp]
    \centering
    \includegraphics[width=1\linewidth]{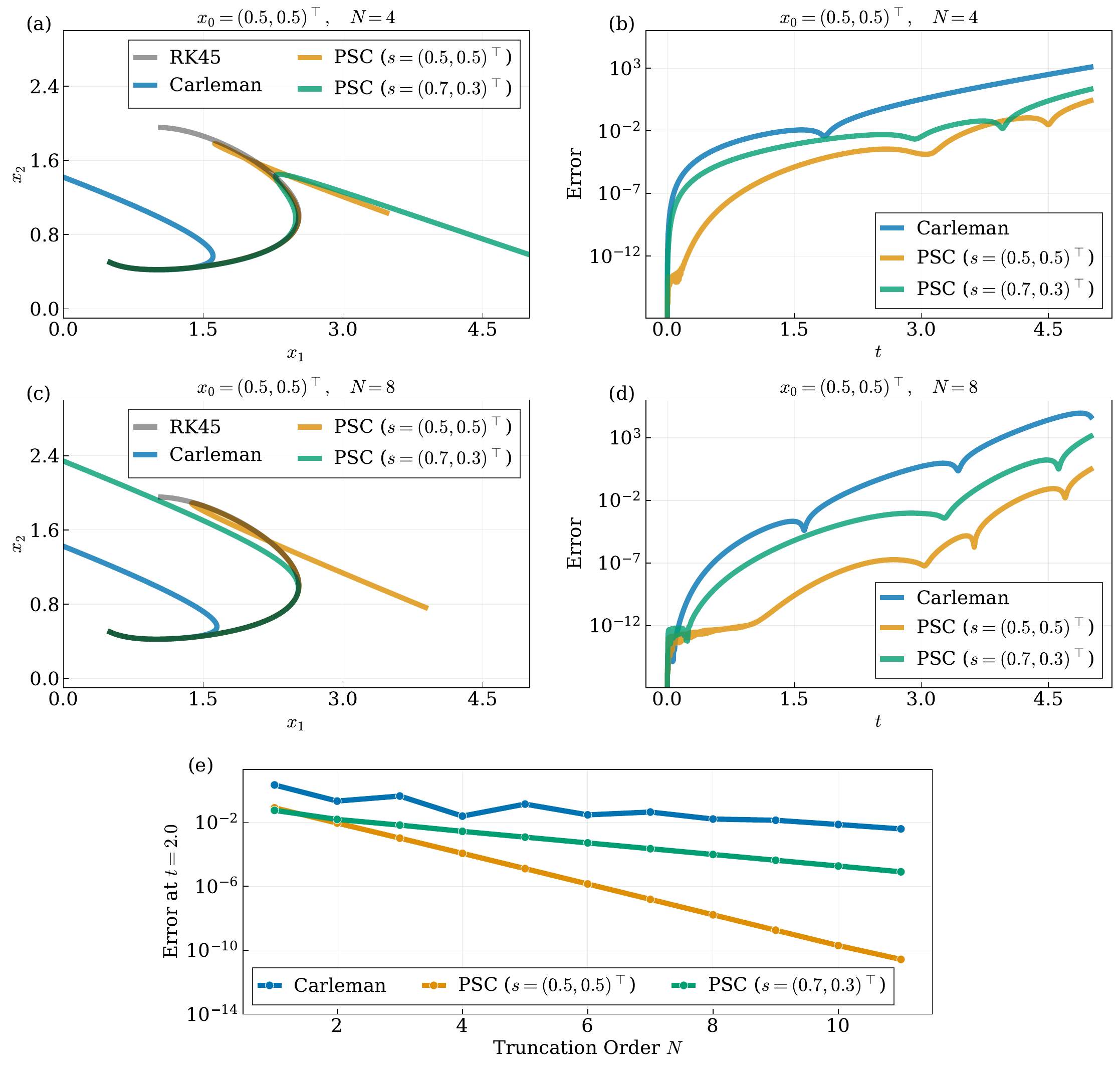}
    \caption{Performance of the standard Carleman linearization and the PSC method on the Lotka-Volterra equations with $x_0 = (0.5, 0.5)^\top$, $\eta=1.0$, and $\gamma=0.475$. The benchmark solution is obtained via the RK45 solver. (a), (c): Phase-plane trajectories of the benchmark solution, Carleman, PSC with $s=(0.5, 0.5)^\top$, and PSC with $s=(0.7, 0.3)^\top$, at truncation orders $N=4$ and $N=8$, respectively. (b), (d): Corresponding $\ell_2$ norm errors over $t \in [0, 5]$ on a semi-logarithmic scale. The standard Carleman method diverges from the reference trajectory, while the PSC method remains stable; in particular, $s=(0.5,0.5)^\top$ achieves substantially lower error at both truncation orders. (e): $\ell_2$ norm error at the final time $t=2.0$ as a function of truncation order $N \in [1, 11]$. The Carleman error stagnates or grows with $N$, whereas PSC converges consistently as $N$ increases; the pivot $s=(0.5,0.5)^\top$ yields a faster decay rate, as expected.}
    \label{fig:LV}
\end{figure}

The simulation results for the Lotka-Volterra system are shown in \cref{fig:LV}, with parameters $\eta = 1$ and $\gamma = 0.475$. We set the initial condition $x_0 = (0.5, 0.5)^\top$ and evaluate the standard Carleman linearization and the PSC method over $t \in [0, 5]$ for two pivot choices, $s = (0.5, 0.5)^\top$ and $s = (0.7, 0.3)^\top$.

\cref{fig:LV}(a) and (b) show the phase-plane trajectories and $\ell_2$ state errors at truncation order $N = 4$. The standard Carleman approximation initially tracks the reference trajectory but diverges after a short time. The PSC method mitigates this instability for both pivots, though the degree of improvement depends on the pivot choice. With $s = (0.7, 0.3)^\top$, the PSC trajectory remains closer to the reference than Carleman but still deviates over time. With $s = (0.5, 0.5)^\top$, the trajectory follows the reference for longer and the error grows more slowly, making it the more effective pivot in this setting. This behavior is consistent with \cref{lm:normboundMPS_2,prop:unstable_carleman_convergence}: for both pivot choices, the linearized coefficient matrix satisfies $\alpha(F_{1,s}) > 0$, so the stability condition is violated and eventual divergence is expected for both. Nevertheless, both PSC variants substantially outperform the standard Carleman method. \cref{fig:LV}(c) and (d) display the corresponding results at $N = 8$ and exhibit the same qualitative behavior.

\cref{fig:LV}(e) complements this picture by reporting the $\ell_2$ error at $t = 2$ as a function of truncation order $N \in [1, 11]$. The standard Carleman method stagnates at small $N$ and fails to benefit from higher-order truncations. For the PSC method, the error can still be reduced by increasing $N$ even when the stability condition is violated, provided the pivot is chosen appropriately and the evolution time is not too long. With $s = (0.5, 0.5)^\top$, the error decays rapidly with $N$, demonstrating that increasing the truncation order systematically improves accuracy at this time horizon. The pivot $s = (0.7, 0.3)^\top$ also yields monotone decay but at a slower rate, further confirming that $s = (0.5, 0.5)^\top$ is the more suitable pivot for this system.

\section{Discussions and open questions}
\label{sec:discussion}

In this work, we proposed and analyzed a pivot-shifted Carleman (PSC) method for stabilizing Carleman linearization in the solution of nonlinear ordinary differential equations. We established theoretical guarantees for the stability and convergence of the PSC method and demonstrated its effectiveness through numerical examples. Our results reveal two complementary aspects of the method. First, for stable systems, when the pivot point is chosen appropriately, the PSC method can efficiently stabilize the original Carleman linearization and achieve exponential convergence with respect to the truncation order. This modification significantly broadens the regime in which Carleman-based quantum algorithms can be applied efficiently. Second, for unstable systems, even when the stability condition is violated, we choose the pivot to be the initial condition, $s=x(0)$, so that the shifted variable satisfies $u(0)=0$. This shrinks the standard Carleman truncation factor $\|u\|^{N+1}$ near the initial time and yields improved short-time accuracy over standard Carleman linearization, with the error continuing to decrease as the truncation order grows. This suggests that the PSC method can be beneficial even when the stability condition is not strictly satisfied or when no prior knowledge of a stable equilibrium is available. These convergence properties also provide a theoretical foundation for the single-pivot step underlying the empirical schemes in~\cite{endo2024divergence,novikau2025globalizing}.

There are several open questions and future directions arising from our work. First, although we have established the stability and convergence of the PSC method, the optimal choice of pivot point remains open. Developing systematic methods for selecting pivots that maximize stability and improve convergence would be valuable.
Second, while our analysis focuses on quadratic nonlinearities, extending the PSC method to higher-order nonlinearities and more general classes of systems is straightforward at the algorithmic level. However, the corresponding stability and convergence analysis in these more complex settings may require new techniques and insights.
Third, in the unstable case, our results show that the PSC method can achieve short time accuracy even when the stability condition is violated. As demonstrated in~\cite{novikau2025globalizing}, long time simulation may also be possible by switching pivots multiple times. However, to the best of our knowledge, it remains unclear how to implement multiple pivot switching efficiently in practice, and the stability and convergence properties of such an approach are not yet well understood.
Our work opens the door to addressing these questions and further advancing the theory and applications of Carleman linearization in both classical and quantum settings.

\section*{Acknowledgements}
This work was supported in part by the University of Michigan through a startup grant of Z.D. (Z.D., K.W.) and the Van Loo Postdoctoral Fellowship (K.W.).

\bibliographystyle{unsrt}
\bibliography{ref}

\appendix
\section{Oracle construction for the shifted system (Proof of \texorpdfstring{\cref{lem:query0}}{PDFstring})}\label{appd:QuantumOracles}
In this section, we provide a more detailed discussion of~\cref{lem:query0}. We assume block encoding oracles for the original coefficient matrices $F_{0}, F_1, F_2$ with normalization coefficient $\alpha_{F_{0}}, \alpha_{F_1}$ $\alpha_{F_2}$, as well as the state preparation oracle for the pivot $s$.
 The shifted coefficient matrices follow
\begin{equation}
    F_{2, s} = F_{2}, \quad F_{1, s}= F_1 + F_2(s\otimes I_n + I_n\otimes s), \quad F_{0,s} = F_2s^{\otimes 2}+ F_1 s + F_0\,.
\end{equation}
Then $\alpha_{F_{2,s}} = \alpha_{F_2}$.
\begin{lem}[block encoding of the tensor product]
     Suppose $U_A, U_B$ are the $(\alpha_1, a,\epsilon)$-block encoding of $A$ and $(\alpha_2, b,\epsilon)$-block encoding of $B$, then there exists $(\alpha_1\alpha_2, a+b, \epsilon)$-block encoding of $A\otimes B$ by composing $U_A\otimes U_B$ with $\mathcal{O}(a+b)$ SWAP gates.
\end{lem}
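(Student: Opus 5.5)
The plan is to work directly at the level of the block-encoding unitaries. Since the statement is about $A\otimes B$, the natural candidate is the tensor product $U_A\otimes U_B$, with its qubit registers reordered so that all ancillas come first. Concretely, $U_A$ acts on $a+n_A$ qubits, ordered as ancilla register $\mathsf{a}$ then system register $\mathsf{A}$, and $U_B$ acts on $b+n_B$ qubits, ordered as $\mathsf{b}$ then $\mathsf{B}$. The operator $U_A\otimes U_B$ therefore acts on the ordering $(\mathsf{a},\mathsf{A},\mathsf{b},\mathsf{B})$. Let $S$ be the permutation taking $(\mathsf{a},\mathsf{b},\mathsf{A},\mathsf{B})$ to $(\mathsf{a},\mathsf{A},\mathsf{b},\mathsf{B})$. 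It moves the $b$ ancilla qubits past the $n_A$ system qubits, so it can be written as a product of SWAP gates. The proposed encoding is $W:=S^\dagger (U_A\otimes U_B) S$.

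\textbf{Step 1 (block identity).} Compute the top-left block of $W$:
\[
(\bra{0^{a+b}}\otimes I)\,W\,(\ket{0^{a+b}}\otimes I) = \bigl[(\bra{0^a}\otimes I)U_A(\ket{0^a}\otimes I)\bigr]\otimes\bigl[(\bra{0^b}\otimes I)U_B(\ket{0^b}\otimes I)\bigr].
\]
This holds because $S\,(\ket{0^{a+b}}\otimes\ket{\psi_A}\otimes\ket{\psi_B}) = \ket{0^a}\ket{\psi_A}\ket{0^b}\ket{\psi_B}$ and the inner products factorize across the two tensor factors. Writing $\tilde A,\tilde B$ for the two blocks, we have $\|\alpha_1\tilde A - A\|\le\epsilon$ and $\|\alpha_2\tilde B-B\|\le\epsilon$.

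\textbf{Step 2 (error).} Expand
\[
\alpha_1\alpha_2\,\tilde A\otimes\tilde B - A\otimes B = (\alpha_1\tilde A - A)\otimes \alpha_2\tilde B + A\otimes(\alpha_2\tilde B - B).
\]
Multiplicativity of the operator norm under tensor products, together with $\|\tilde B\|\le 1$, gives the bound $\alpha_2\epsilon+\|A\|\epsilon\le (\alpha_1+\alpha_2+\epsilon)\epsilon$.

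\textbf{Main obstacle.} The statement claims precision $\epsilon$ rather than $(\alpha_1+\alpha_2)\epsilon$, and I expect this to be the sticking point. The literal claim holds in the exact case $\epsilon=0$, which is what the paper actually uses: $F_{1,s}$ and $F_{0,s}$ are built from exact encodings of $F_2$, $s$, and $I$. It also holds if $\epsilon$ is read as a generic precision parameter up to constants. My plan is to prove the exact statement cleanly and record the general bound $\mathcal{O}((\alpha_1+\alpha_2)\epsilon)$, noting that it matches the stated form whenever the normalizations are $\mathcal{O}(1)$ or the encodings are exact.

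\textbf{SWAP count.} Finally, $S$ moves $b$ qubits across $n_A$ system qubits, and implementing it naively costs $\mathcal{O}(b\,n_A)$ SWAP gates. To match the stated $\mathcal{O}(a+b)$, I will instead treat the reordering as a relabeling of wires. Such a relabeling costs no gates, or $\mathcal{O}(a+b)$ gates if one only reorders the ancilla qubits into a contiguous block. This accounts for the count in the statement.
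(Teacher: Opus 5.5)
The paper states this lemma without proof. It is quoted as a standard fact and is only ever applied with exact encodings, namely $I_n$ and $O_s$ in building $s\otimes I_n+I_n\otimes s$. So your argument has to stand on its own, and it does: Steps 1 and 2 are the standard proof and are correct.

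Your concern about the precision is justified. State it as a correction to the statement, not as ``$\epsilon$ read up to constants'': the missing factor is $\alpha_1+\alpha_2$, which is not a constant (in the paper's application it involves $\|s\|$ and $\alpha_{F_2}$). The literal claim is false for the construction $S^\dagger(U_A\otimes U_B)S$. Take $A=I$ encoded by $U_A=R_y(\theta)\otimes I$ with $\cos(\theta/2)=1-\epsilon$, which is a $(1,1,\epsilon)$-block encoding. Take $B=\alpha_2 I$ encoded exactly by $U_B=I$. The resulting error is $\alpha_2\epsilon>\epsilon$ whenever $\alpha_2>1$. Now take $A=\alpha_1 I$ and $B=\alpha_2 I$ with blocks $(1-\epsilon/\alpha_1)I$ and $(1-\epsilon/\alpha_2)I$; the error is $(\alpha_1+\alpha_2)\epsilon-\epsilon^2$. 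So your bound $(\alpha_1+\alpha_2+\epsilon)\epsilon$ is tight to first order, and it has the same form as the multiplication lemma stated right after. Proving the exact case plus this corrected bound is therefore the right plan, and it covers everything the paper actually uses.

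On the SWAP count, your main resolution is the correct one. The reordering is a relabeling of wires and hence free, and the order among ancillas is irrelevant because $\ket{0^{a+b}}$ is permutation invariant. Drop the alternative ``$\mathcal{O}(a+b)$ gates'' count. If the system register $\mathsf{A}$ must stay ordered, then for $n_A,b\ge 1$ every one of the $n_A+b$ qubits in the segment $(\mathsf{A},\mathsf{b})$ changes position. Any physical implementation therefore needs at least $\lceil (n_A+b)/2\rceil$ SWAPs; exactly $n_A+b-\gcd(n_A,b)$ suffice with arbitrary, non-adjacent SWAPs, and your $b\,n_A$ is the nearest-neighbour count. The stated $\mathcal{O}(a+b)$ can thus only be justified as a relabeling.
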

\begin{lem}(block encoding of the multiplications~\cite[Proposition 9.12]{lin2026quantum})
    Suppose $U_A, U_B$ are the $(\alpha_1, a,\epsilon_1)$-block encoding of $A$ and $(\alpha_2, b,\epsilon_2)$-block encoding of $B$, then there exists $(\alpha_1\alpha_2, a+b, \alpha_1\epsilon_2 + \alpha_2 \epsilon_1 + \epsilon_1 \epsilon_2)$-block encoding of $AB$.
\end{lem}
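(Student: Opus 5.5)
We reduce everything to the definition of block encoding and the mixed-product property of the Kronecker product. Write the two encodings as
\[
\|A-\alpha_1(\bra{0^a}\otimes I)U_A(\ket{0^a}\otimes I)\|\le\epsilon_1,\qquad \|B-\alpha_2(\bra{0^b}\otimes I)U_B(\ket{0^b}\otimes I)\|\le\epsilon_2,
\]
and set $\tilde A:=\alpha_1(\bra{0^a}\otimes I)U_A(\ket{0^a}\otimes I)$ and $\tilde B:=\alpha_2(\bra{0^b}\otimes I)U_B(\ket{0^b}\otimes I)$. Since $U_A,U_B$ are unitary, we have $\|\tilde A\|\le\alpha_1$ and $\|\tilde B\|\le\alpha_2$.

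\textbf{The construction.} We apply $U_B$ on ancilla register $b$ together with the system. We then apply $U_A$ on a fresh ancilla register $a$ together with the same system. Call the resulting unitary $W=(I_b\otimes U_A)(I_a\otimes U_B)$, suitably reordered. We then project both ancilla registers onto $\ket{0^{a+b}}$.

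\textbf{Step 1: the projected block factorizes.} Because $U_B$ acts trivially on register $a$, and $U_A$ acts trivially on register $b$, we can insert the projector $\ketbra{0^b}$ between the two factors without changing the result. This gives
\[
(\bra{0^{a+b}}\otimes I)W(\ket{0^{a+b}}\otimes I)=(\bra{0^a}\otimes I)U_A(\ket{0^a}\otimes I)\,(\bra{0^b}\otimes I)U_B(\ket{0^b}\otimes I).
\]
Hence $\alpha_1\alpha_2$ times this block equals $\tilde A\tilde B$. This step is the only place needing care: we must check that the projection onto $\ket{0^b}$ commutes past $U_A$. It does, since $U_A$ is the identity on register $b$, and inserting the resolution $I_b=\ketbra{0^b}+(I-\ketbra{0^b})$ shows that the second term is annihilated by the outer $\bra{0^b}$.

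\textbf{Step 2: the error bound.} We expand
\[
AB-\tilde A\tilde B=(A-\tilde A)B+\tilde A(B-\tilde B).
\]
If we used $\|B\|\le\alpha_2+\epsilon_2$ directly, we would get
\[
\|AB-\tilde A\tilde B\|\le\epsilon_1(\alpha_2+\epsilon_2)+\alpha_1\epsilon_2=\alpha_1\epsilon_2+\alpha_2\epsilon_1+\epsilon_1\epsilon_2 .
\]
To keep the bookkeeping exact, we instead use the three-term expansion
\[
AB-\tilde A\tilde B=(A-\tilde A)\tilde B+\tilde A(B-\tilde B)+(A-\tilde A)(B-\tilde B),
\]
which gives the same bound $\alpha_2\epsilon_1+\alpha_1\epsilon_2+\epsilon_1\epsilon_2$.

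\textbf{Step 3: counting.} The ancilla count is $a+b$, and the normalization is $\alpha_1\alpha_2$. This is the claimed $(\alpha_1\alpha_2,a+b,\alpha_1\epsilon_2+\alpha_2\epsilon_1+\epsilon_1\epsilon_2)$-block encoding, obtained with one use of each of $U_A$ and $U_B$.
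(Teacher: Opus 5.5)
Your proof is correct and is the standard argument behind the cited result; the paper gives no proof of its own and defers to Proposition 9.12 of the lecture notes. Composing $(I_b\otimes U_A)(I_a\otimes U_B)$, factoring the $\ket{0^{a+b}}$-projected block as $\tilde A\tilde B/(\alpha_1\alpha_2)$, and bounding $\|AB-\tilde A\tilde B\|$ via $\|\tilde A\|\le\alpha_1$, $\|\tilde B\|\le\alpha_2$ is exactly that argument. Either of your two expansions in Step 2 already yields $\alpha_1\epsilon_2+\alpha_2\epsilon_1+\epsilon_1\epsilon_2$, so the second one is redundant but harmless.
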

Combined with LCU, we know that $I_n$ is the $(1,0,0)$ -block encoding of itself, then there exists the $(2\|s\|, \log(2),0)$ block encoding of $s \otimes I_n + I_n \otimes s$.
Thus, $F_{1,s}$ admits an $(\alpha_{F_1}+2\|s\| \alpha_{F_2}, a_{F_2} + 2\log(2), (2\|s\|+1)\epsilon)$-block encoding. Similarly, $F_{0,s}$ admits an $(\alpha_{F_0}+\alpha_{F_1} \|s\| + \alpha_{F_2}\|s\|^2, a + \log(4), (\|s\|^2  + \|s\|+1) \epsilon)$-block encoding.

We also provide detailed discussion to prepare the initial state as the input of the QLSA based quantum algorithm for linear differential equations and shift the final state back to the original differential equation.
\paragraph{Preparation of shifted states}

Finally, we discuss the complexity of preparing shifted states. Given a vector $\psi$ with known norm $\|\psi\|$, and the associated normalized quantum state $\ket{\psi}$, we assume there exists an oracle $U_{\psi}$ that prepares $\ket{\psi}$ in the form
\begin{equation}\label{eqn:stateprep}
U_{\psi}\ket{0^{m+1}}
=
\sqrt{p}\ket{0}\ket{\psi}
+
\sqrt{1-p}\ket{1}\ket{\Phi},
\end{equation}
where the success probability $p$ is known. By post-selecting the first register on the outcome $\ket{0}$, we obtain the state $\ket{0}\ket{\psi}$. Our goal is to prepare the normalized state associated with the shifted vector $\psi-s$, namely,
\[
\ket{\psi-s}:=\frac{\psi-s}{\|\psi-s\|},
\]
using the oracles $O_s$ and $U_\psi$. This setting covers both the initial shift and the final shift in \cref{lem:query0}:
\begin{itemize}
  \item For the initial shift, we take $U_{\psi}=O_{x(0)}$, and in this case $p=1$.
  \item For the final shift, $U_{\psi}$ is constructed from the QLSA based linear differential equation solvers, we take $U_\psi = U_{u(T)}$,  with the corresponding probability $p $. In this case, we aim to prepare the state $\ket{x(T)} = \ket{u(T) + s}$, which only change the following construction slightly with the same query complexity scaling.
\end{itemize}

We prepare the state $\ket{\psi-s}$ using one additional ancilla qubit. First, by combining a rotation on the ancilla with controlled-$U_\psi$, we can prepare
\[
\frac{\sqrt{p}\|\psi\|}{\sqrt{\|\psi\|^2+\|s\|^2}}\ket{0}\ket{0}\ket{\psi}
+
\frac{\sqrt{p}\|s\|}{\sqrt{\|\psi\|^2+\|s\|^2}}\ket{1}\ket{0^{m+1}}
+
\frac{\sqrt{1-p}\|\psi\|}{\sqrt{\|\psi\|^2+\|s\|^2}}\ket{0}\ket{1}\ket{\Phi}
+
\frac{\sqrt{1-p}\|s\|}{\sqrt{\|\psi\|^2+\|s\|^2}}\ket{1}\ket{1}\ket{0^m}.
\]
Next, applying controlled-$O_s$ to the last register conditioned on the first ancilla being $\ket{1}$ gives
\[
\sqrt{p}\left(
\frac{\|\psi\|}{\sqrt{\|\psi\|^2+\|s\|^2}}\ket{0}\ket{0}\ket{\psi}
+
\frac{\|s\|}{\sqrt{\|\psi\|^2+\|s\|^2}}\ket{1}\ket{0}\ket{s}
\right)
+\ket{\perp},
\]
where $\ket{\perp}$ denotes terms orthogonal to the subspace in which the first two registers are $\ket{0}\ket{0}$ or $\ket{1}\ket{0}$.

Applying a Hadamard gate to the first register yields
\[
\frac{\sqrt{p}}{\sqrt{2(\|\psi\|^2+\|s\|^2)}}
\ket{0}\ket{0}\bigl(\|\psi\|\ket{\psi}-\|s\|\ket{s}\bigr)
+\ket{\perp}.
\]
Since $\|\psi\|\ket{\psi}=\psi$ and $\|s\|\ket{s}=s$, this becomes
\[
\frac{\sqrt{p}\,\|\psi-s\|}{\sqrt{2(\|\psi\|^2+\|s\|^2)}}\ket{0}\ket{0}\ket{\psi-s}
+\ket{\perp}.
\]
Therefore, by amplitude amplification followed by post-selection on the first two registers being $\ket{0}\ket{0}$, we can prepare $\ket{\psi-s}$ using
\[
\mathcal{O}\!\left(
\frac{\sqrt{\|\psi\|^2+\|s\|^2}}{\sqrt{p}\,\|\psi-s\|}
\right)
\]
queries to $U_\psi$, $U_\psi^\dagger$, $O_s$, and $O_s^\dagger$.

\section{Short time convergence for unstable systems}
\label{appd:unstable}
In this section, we provide the proofs for the unstable systems. Throughout the proof we choose $Q = I$ and pivot $s = x(0)$, it reveals that $v(0) = 0$ and $E_{i,s} = F_{i,s}$ for all $i = 0, 1, 2$. The error from the algorithm are mainly from Carleman truncation error and the error from the QLSA based quantum algorithm for linear differential equations.
\subsection{Convergence of the Carleman linearization}
We first prove ~\cref{prop:unstable_carleman_convergence} using the following bound for the solution $v(t)$.

\begin{lem}[General system: short time bound of $\|v(t)\|$]\label{lm:normboundMPS_2}
Consider $v(t)$ satisfying~\cref{eqn:v_t_equation}. Let $Q = I$ and choose the pivot $s = x(0)$.
Then $v(0)=0$. Moreover, there exists $t'>0$ such that
\begin{equation}
    \|v(t)\|\leq 1, t\leq T\leq t'.
\end{equation}
In particular, one may choose
\begin{equation}
    t' = \frac{1}{\|F_{2,s}\|  + \|F_{1,s}\| +\|F_{0,s}\|},
\end{equation}
then for all $t<t'$,
\begin{equation}
    \|v(t)\|\leq t(\|F_{2,s}\|  + \|F_{1,s}\| +\|F_{0,s}\|)< 1.
\end{equation}
\end{lem}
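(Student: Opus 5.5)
The plan is to use the same scalar comparison argument as in the proof of \cref{lm:normboundMPS}, except that no sign information on $\mu(F_{1,s})$ is available, so the Riccati inequality has to be controlled crudely on a short interval. With $Q=I$ and $s=x(0)$ we have $v=u=x-x(0)$, so $v(0)=0$ immediately, and $E_{i,s}=F_{i,s}$. Write $K:=\|F_{2,s}\|+\|F_{1,s}\|+\|F_{0,s}\|$ and $t'=1/K$.

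The first step is the differential inequality. Exactly as in the proof of \cref{lm:normboundMPS}, differentiate $\|v\|^2$ and use Cauchy--Schwarz together with $\mu(F_{1,s})\le\|F_{1,s}\|$. Wherever $\|v\|\neq 0$ this gives
\[
\partial_t\|v(t)\|\le \|F_{2,s}\|\|v\|^2+\|F_{1,s}\|\|v\|+\|F_{0,s}\|.
\]
Near points where $v=0$ the norm is not differentiable. To avoid this, I will work instead with the integral form $\|v(t)\|\le\int_0^t\|\partial_\tau v\|\,d\tau$ and bound $\|\partial_\tau v\|$ directly from \cref{eqn:v_t_equation}, using $\|v^{\otimes 2}\|=\|v\|^2$. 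The integral form avoids the nondifferentiability altogether.

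The second step is a bootstrap/continuity argument. Define $\tau^*=\sup\{\tau\le t': \|v(t)\|\le 1 \text{ on }[0,\tau]\}$, which is positive since $v(0)=0$ and $v$ is continuous. On $[0,\tau^*)$ we have $\|v\|^2\le 1$ and $\|v\|\le 1$, so $\|\partial_t v\|\le K$. Integrating from $0$ gives $\|v(t)\|\le tK$. For $t<t'$ this is strictly less than $1$, so by continuity the supremum cannot be attained before $t'$, i.e.\ $\tau^*=t'$. This proves both claims: $\|v(t)\|\le tK<1$ for $t<t'$, and $\|v(t)\|\le 1$ for $t\le t'$, the endpoint following by continuity.

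The main obstacle is only the bookkeeping of the continuity argument: the solution must exist on $[0,t']$ without blow-up. This is handled by the same a priori bound, since the quadratic ODE has a locally Lipschitz right-hand side and the bound keeps $v$ in the unit ball, so the solution extends. An alternative I would keep in reserve is to compare with the scalar Riccati solution $z'=\|F_{2,s}\|z^2+\|F_{1,s}\|z+\|F_{0,s}\|$, $z(0)=0$, whose blow-up time exceeds $1/K$. However, the linear-in-$t$ bound via the bootstrap is simpler and is exactly the estimate stated in the lemma.
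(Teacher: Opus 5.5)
Your proposal is correct and follows the paper's proof: the integral form $v(t)=\int_0^t\bigl(F_{2,s}v^{\otimes 2}+F_{1,s}v+F_{0,s}\bigr)\,d\tau$ with $v(0)=0$, with the integrand bounded by $\|F_{2,s}\|+\|F_{1,s}\|+\|F_{0,s}\|$ using $\|v\|\le 1$. Your explicit bootstrap/continuity argument supplies a justification the paper leaves implicit. The paper uses $\|v(\tau)\|\le 1$ on all of $[0,t]$, but it has only asserted this bound on some unspecified short interval.
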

With ~\cref{lm:normboundMPS_2}, the norm of the solution can be bounded by one for any time $t<t'$. With this short time bound of $\|v(t)\|$, we can prove the convergence of Carleman truncation accordingly.
\begin{proof}[Proof of~\cref{prop:unstable_carleman_convergence}]
The error vector $\xi(t)$ satisfies the differential equation in~\cref{eq:CarlemanLinearized},
\begin{equation}
    \partial_t \xi = B_N \xi +\hat d_N\,, \xi(0) = 0
\end{equation}
i.e., the error vector can be written using the integral
\begin{equation}
    \xi(t) = \int_0^t e^{B_N(t-\tau)}\hat{d}_N(\tau)\mathrm{d}\tau.
\end{equation}
Let $C(B_N) = \sup_{t\in[0,T]}\|e^{B_Nt}\|$ and $\Lambda = \sum_{j=1}^N\ketbra{j}\otimes 1/j$. Notice that by \cite[Theorem 3.35]{plischke2005transient}, we have $\|e^{B_Nt}\|\leq \sqrt{\kappa_{\Lambda} }e^{\mu_\Lambda (B_N)t}$.
Additionally, similar to the proof in \cref{eq:bd_BN}, we replace $B_N$ with $\Lambda B_N$, then
\begin{equation}
    \begin{aligned}
        &\xi^\dagger(\Lambda B_N  +B_N^\dagger\Lambda
        )\xi  \\
        &= \sum_j \frac{1}{j}\xi_j^\dag (B_{j,j} + B_{j,j}^\dag)\xi_j + \sum_{j} \xi_{j+1}^\dag \left(\frac{1}{j+1}B_{j+1, j} + \frac{1}{j+1}B_{j, j+1}^\dag\right)\xi_j + \sum_j \xi_j^\dag\left(\frac{1}{j}B_{j+1,j}^\dag + \frac{1}{j}B_{j,j+1}\right)\xi_{j+1} \\
        &  \leq \sum_j \frac{2}{j}\mu(B_{j,j})\|\xi_j\|^2 + \sum_j\frac{2}{j+1}\|B_{j+1,j}\|\|\xi_j\|\|\xi_{j+1}\|+\sum_j\frac{2}{j}\|B_{j,j+1}\|\|\xi_j\|\|\xi_{j+1}\|\\
        &\leq \sum_j 2\mu(F_{1,s})\|\xi_j\|^2 + 2\|F_{0,s}\|\|\xi_j \|\| \xi_{j-1}\| + 2\|F_{2,s}\| \|\xi_j \|\|\xi_{j-1}\|\\
        & \le \xi_{\tilde{G}}^\dag (\tilde{G}+ \tilde{G}^\dag)\xi_{\tilde{G}}\leq 2\mu(\tilde{G})\|\xi_{\tilde{G}}\|^2\,,
    \end{aligned}
\end{equation}
where $\tilde{G}\in\mathbb{C}^{N\times N}$:
\begin{equation}
    \tilde{G}_{j,j} = \mu(F_{1,s}), \tilde{G}_{j-1,j} = \|F_{0,s}\|, \tilde{G}_{j+1,j} = \|F_{2,s}\|.
\end{equation}
Then by Gershgorin circle theorem,
\begin{equation}
    \mu(\Lambda B_N + B_N^\dag \Lambda) \leq \mu(F_{1,s}) + \|F_{0,s}\| + \|F_{2,s}\|.
\end{equation}
The log norm becomes
\begin{equation}
\begin{aligned}
&\mu_\Lambda(B_N)
= -\frac12 \min_{x\ne 0}
\frac{\langle x,(\Lambda B_N+B_N^\dagger \Lambda)x\rangle}{\langle x,\Lambda x\rangle}
= \frac12 \lambda_{\max}\!\left(\Lambda^{-1/2}(\Lambda B_N+B_N^\dagger \Lambda)\Lambda^{-1/2}\right)
\\
\le &\frac{\left\|\Lambda^{-1/2}\right\|^2}{2}\bigl(\mu(F_{1,s})+\|F_{0,s}\|+\|F_{2,s}\|\bigr)\leq  \frac{N}{2}\bigl(\mu(F_{1,s})+\|F_{0,s}\|+\|F_{2,s}\|\bigr)
\end{aligned}
\end{equation}
by \cite[Lemma 3.31]{plischke2005transient}. Then we have
\begin{equation}
    C(B_N) \leq  \sqrt{N}\sup_{t\in[0,T]}e^{\frac{N}{2}\left(\mu(F_{1,s}) + \|F_{0,s}\| + \|F_{2,s}\|\right)t}.
\end{equation}
From~\cref{lm:normboundMPS_2}, let $t'= \frac{1}{\|F_{2,s}\| + \|F_{1,s}\| + \|F_{0,s}\|}$ then $\|v(t)\|\leq 1$ for all $t<t'$. For  any $t\leq t'$, the Carleman truncation error $\xi$ can be bounded as
\begin{equation}
\begin{aligned}
    &\|\xi(t)\| \le t C(B_N)\sup_{t\in[0,T]}\|\hat{d}_N(t)\| \\
    &\leq   t \sqrt{N}\sup_{\tau\in[0,t]}e^{\frac{N}{2}\left(\mu(F_{1,s}) + \|F_{0,s}\| + \|F_{2,s}\|\right)t} \|B_{N,N+1}\|\sup_{\tau\in[0,t]}\|v(\tau)\|^{N+1}\\
    &\le tN^{3/2}\|F_{2,s}\| \left(t\left(\|F_{2,s}\| + \|F_{1,s}\| + \|F_{0,s}\|\right)e^{\frac{1}{2}\left(\mu(F_{1,s}) + \|F_{0,s}\| + \|F_{2,s}\|\right)t}\right)^{N}\,,
\end{aligned}
\end{equation}
where we use the fact that $\|B_{N,N+1}\| \leq N\|F_{2,s}\|$ and the short time bound of $\|v(t)\|$ in~\cref{lm:normboundMPS_2}.

Set $t^*<t'/e$ so that
\[
t^*\left(\|F_{2,s}\| + \|F_{1,s}\| + \|F_{0,s}\|\right) e^{\frac{1}{2}\left(\mu(F_{1,s}) + \|F_{0,s}\| + \|F_{2,s}\|\right)t^*}\leq t^*\left(\|F_{2,s}\| + \|F_{1,s}\| + \|F_{0,s}\|\right)e<1\,.
\]
For all $t\leq t^*$, the Carleman error $\xi(t)$ decays exponentially with respect to the truncation order $N$. In particular, for any $T\in [0,t^*]$, the Carleman error can be bounded as $\|\xi(T)\|\leq \epsilon \|v(T)\|$ by choosing the truncation order
\[
N \geq \Theta\left(\frac{\log(T\|F_{2,s}\|/(\|v(T)\|\epsilon))}{\log(1/\left(t^*\left(\|F_{2,s}\| + \|F_{1,s}\| + \|F_{0,s}\|\right)e\right))}\right).
\]
This completes the proof of~\cref{prop:unstable_carleman_convergence}.
\end{proof}
The details in proving the short time bound on  the norm of $v(t)$ are provided as follows.
\begin{proof}[Proof of~\cref{lm:normboundMPS_2}]
Since $\|v(0)\|=0$, there exists a short time interval so that $\|v(t)\|\leq 1$. From
\begin{equation}
     \partial_t v = F_{2,s} v^{\otimes 2} + F_{1,s} v+F_{0,s},
\end{equation}
we have the integral form as
\begin{equation}
    v(t) = \int_0^tF_{2,s} v^{\otimes 2}(\tau) + F_{1,s} v(\tau)+F_{0,s}\mathrm{d}\tau
\end{equation}
Then the bound for the norm
\begin{equation}
    \|v(t)\|\leq \int_0^t\|F_{2,s}\| \|v(\tau)\|^2 + \|F_{1,s}\| \|v(\tau)\|+\|F_{0,s}\|\mathrm{d}\tau\le \int_0^t\|F_{2,s}\|  + \|F_{1,s}\| +\|F_{0,s}\|\mathrm{d}\tau = t(\|F_{2,s}\|  + \|F_{1,s}\| +\|F_{0,s}\|).
\end{equation}
We may choose
\begin{equation}
    t' = \frac{1}{\|F_{2,s}\|  + \|F_{1,s}\| +\|F_{0,s}\|},
\end{equation}
then for all $t < t'$,
\begin{equation}
    \|v(t)\|^2<1.
\end{equation}
\end{proof}
\subsection{Quantum solvers for linear differential equations}
Next, we incorporate the error from the QLSA based quantum solver for linear differential equations with the corresponding query complexity.
\begin{proof}[Proof of~\cref{prop:unstable_linear_solver}]
The complexity for differential systems remains unstable after pivot shift can be obtained by applying the quantum solver for linear differential equations in \cite{kroviImprovedQuantumAlgorithms2023} where we still use truncated Taylor series to approximate the solution of linear differential equations while constructing the linear system in a slightly different way in \cref{eq:AMMp}.
\begin{lem}\label{lem:QLSAbasedAlgorithm2}(\cite[Theorem 7]{kroviImprovedQuantumAlgorithms2023})
    For any sparse matrix $A$ with sparsity $s$, dimension $d$ and $C(A) = \max_{t\in[0,T]}\|e^{At}\|$. Using the truncated Taylor series methods to solve linear differential equation $\partial_t u = Au + b$. Let $\epsilon>0$ be the target error then we can obtain an $\epsilon$-approximation of the final state $u(T)/\|u(T)\|$ using
    \begin{equation}
        \mathcal{O}\left(g_u T \|A\| C(A) \poly\left(s, \log(n), \log\left(1 + \frac{Te^2 \|b\|}{\|x(T)\|}\right), \log\left(\frac{1}{\epsilon}\right), \log(T\|A\|C(A))\right)\right)
    \end{equation}
    queries to the oracles $O_A$, $O_b$ and $O_{u(0)}$.
    Here $C(A) = \max_{t\in[0,T]}\|e^{At}\|$ and $g_u = \frac{\max_{t\in[0,T]}\|u(t)\|}{\|u(T)\|}$.
\end{lem}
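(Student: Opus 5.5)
The plan is to reproduce the argument of~\cite{kroviImprovedQuantumAlgorithms2023}: discretize $\partial_t u = Au+b$ on $[0,T]$ by truncated Taylor steps, assemble the steps into a single block-bidiagonal linear system with padding as in~\cref{eq:AMMp}, and invoke~\cref{lem:QLSP}. Then bound three quantities separately: the discretization error, the condition number of the system matrix, and the success probability of extracting the final-time block.

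\emph{Discretization.} Choose $h=T/M$ with $h\|A\|\le 1$, i.e.\ $M=\lceil T\|A\|\rceil$. Take $R=\sum_{j=0}^{K}(Ah)^j/j!$ and define $p$ as in~\cref{sec:algorithm}. Writing $R=e^{Ah}+E$ with $\|E\|\le (h\|A\|)^{K+1}/(K+1)!$, telescope
\[
R^k-e^{Akh}=\sum_{\ell=0}^{k-1}R^{\ell}E\,e^{A(k-1-\ell)h}.
\]
Induction on $k$ then shows $\|R^k\|\le 2C(A)$ for all $k\le M$, provided $M\,C(A)\,(K+1)!^{-1}\le 1/2$. The same expansion bounds the global error of $\hat u^{(M)}$ by
\[
\mathcal{O}\!\left(M\,C(A)\,\frac{\|u(0)\|+T\|b\|}{(K+1)!}\right).
\]
Dividing by $\|u(T)\|$ and requiring this to be at most $\epsilon$ forces
\[
K=\mathcal{O}\!\left(\log\frac{T\|A\|C(A)(1+T\|b\|/\|u(T)\|)}{\epsilon}\right),
\]
which explains the logarithmic factors in the statement.

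\emph{Conditioning.} Write the system matrix $L$ (the matrix in~\cref{eq:AMMp}, with $M_p\sim M$ padding rows). Each block of $R$ is a degree-$K$ polynomial in $A$, so it is block encoded with normalization $\mathcal{O}(e)$, and hence $\|L\|=\mathcal{O}(1)$. Note that $L^{-1}$ is block lower triangular, with blocks $R^{k}$ (or identities in the padding region). Using $\|R^k\|\le 2C(A)$ and summing over at most $M+M_p$ block rows and columns gives
\[
\|L^{-1}\|=\mathcal{O}\bigl((M+M_p)\,C(A)\bigr)=\mathcal{O}\bigl(T\|A\|C(A)\bigr).
\]
This is the linear-in-$T$ factor in the statement. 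If instead one encodes the Taylor series through auxiliary rows, as Krovi does, the same bound follows with an extra factor polynomial in $K$. A block encoding of $L$ costs $\mathcal{O}(K)$ queries to $O_A$, each of which costs $\operatorname{poly}(s,\log n)$ gates for a sparse $A$.

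\emph{Success probability.} The padded solution vector contains $M_p$ copies of $\hat u^{(M)}\approx u(T)$, while its total squared norm is at most $(M+M_p)\max_t\|u(t)\|^2$ up to constants. With $M_p\sim M$, the weight on the final blocks is therefore $\Omega(1/g_u^2)$, and amplitude amplification costs an additional factor $g_u$. Combining this with~\cref{lem:QLSP} gives
\[
\mathcal{O}\bigl(g_u\,T\|A\|C(A)\,\operatorname{poly}(s,\log n,K,\log(1/\epsilon))\bigr)
\]
queries, as claimed.

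The main obstacle is the uniform power bound $\|R^k\|=\mathcal{O}(C(A))$. Because $A$ need not be dissipative, $\|R\|$ can exceed $1$, so the naive bound $\|R\|^k$ is exponential in $k$. I would handle this with the telescoping comparison to $e^{Akh}$ above, closed by a bootstrap induction. This is exactly where the $\log(T\|A\|C(A))$ dependence of $K$ enters, and it is the step that needs care to keep constants under control.
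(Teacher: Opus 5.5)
\paragraph{Comparison with the paper's route.} The paper gives no proof of this lemma. It quotes Krovi's Theorem~7 and uses it as a black box inside the proof of \cref{prop:unstable_linear_solver}, so the relevant comparison is with Krovi's argument. You reproduce its skeleton correctly: Taylor steps, a padded block-bidiagonal system with $M_p\sim M$, $\kappa=\mathcal{O}(T\|A\|C(A))$ from a uniform bound on powers of the one-step map, and $\mathcal{O}(g_u)$ rounds of amplitude amplification.

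The genuine difference is the encoding. You put the Taylor polynomial $R$ directly into the system and block-encode it by LCU, as in the paper's own \cref{eq:AMMp}. Krovi, following Berry et al., spreads each step over auxiliary block rows in which only $A$ appears, so $L$ stays sparse and a sparse-access QLSA applies. Your route makes $L^{-1}$ transparent: its blocks are powers $R^k$ or $I$, so $\|L^{-1}\|\le (M+M_p)\max_{k\le M}\|R^k\|$ is immediate. However, it needs $h\alpha_A\le 1$ for $R$ to have $\mathcal{O}(1)$ normalization. Under sparse access $\alpha_A\le s\|A\|$, so $M$ grows by a factor $s$, which is absorbed into $\poly(s)$. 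Krovi's route pays a $\poly(K)$ overhead in the condition number instead, as you note.

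\paragraph{Smaller points in your write-up.} The step you single out as the main obstacle is easier than a bootstrap. $R$ and $e^{Ah}$ are both functions of $A$ and therefore commute, so the binomial expansion $R^k=\sum_{j}\binom{k}{j}e^{A(k-j)h}E^j$ gives
\[
\|R^k\|\le C(A)(1+\|E\|)^k,\qquad \|R^k-e^{Akh}\|\le C(A)\bigl((1+\|E\|)^k-1\bigr)
\]
directly. This needs only $M\|E\|=\mathcal{O}(1)$, a condition that does not involve $C(A)$. Your noncommutative telescoping also works, but it yields $\|R^k-e^{Akh}\|\le 2kC(A)^2\|E\|$, not the single factor of $C(A)$ in your stated global error. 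This is harmless, since it enters $K$ only logarithmically.

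Two further details need attention.
\begin{enumerate}
\item The final-time relative error contains $\|u(0)\|/\|u(T)\|\le g_u$, which your formula for $K$ replaces by $1$. Post-selecting the final blocks also magnifies the QLSA error by $\mathcal{O}(g_u)$. So strictly, both $K$ and the QLSA precision acquire a $\log g_u$ that neither your derivation nor the quoted statement displays.
\item You do not say how the right-hand side $(u(0);p;\dots;p;0;\dots)$ is prepared. One option is to apply a block-diagonal block encoding carrying $\sum_{j<K}(Ah)^j/(j+1)!$ on the forcing blocks to a state proportional to $(u(0);hb;\dots;hb;0;\dots)$. That operator is well conditioned when $h\alpha_A\le 1$, so this costs only $\mathcal{O}(1)$ rounds of amplification.
\end{enumerate}
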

Follow the same idea in the proof of~\cref{prop:stable_linear_solver},
we first apply the lemma to solve the Carleman linearized equation $\partial_t \hat{z} = B_N \hat{z} + d_N\,,$ and quantify the extra cost from amplitude amplification by obtaining the approximation of the state $\ket{u(T)}$. We note that, when $s=x(0)$, we have $v(0)=\vec{0}$ and the initial state preparation is not necessary.

We denote the quantum state after applying the above algorithm as $\tilde{z}$. It approximates solution of the linearized differential equation at the end time $\hat{z}(T)$ up to precision $\epsilon'$. From~\cref{lem:QLSAbasedAlgorithm2}, there exists a quantum algorithm that output the state $\ket{\tilde{z}}$ such that
    \begin{equation}
        \left\|\tilde{z} - \frac{\hat{z}(T)}{\|\hat{z}(T)\|}\right\|\leq \epsilon'
    \end{equation}
with
\begin{equation}\label{eq:QLSAcomplexity2}
\mathcal{O}\!\left(
g_{\hat{z}} T \|B_N\| C(B_N)\,
\poly\!\left(
C_s,\log(\dim(B_N)),
\log\!\left(1+\frac{T e^2 \|d_N\|}{\|\hat{z}(T)\|}\right),
\log\!\left(\frac{1}{\epsilon'}\right),
\log\!\left(T\|B_N\|C(B_N)\right)
\right)
\right)
\end{equation}
queries to oracles for $B_N, \ket{d_N}$. Here $C_s$ is the sparsity of the matrix $B_N$.

From~\cref{prop:unstable_carleman_convergence}, the relative error from Carleman truncation can be bounded by $\delta'$, $\|\xi(T)\|\leq \delta'\|v(T)\|$ by choosing
\begin{equation}\label{eqn:N_lower_bound_unstable}
N \geq \Theta\left(\frac{\log(T\|F_{2,s}\|/(\|v(T)\|\epsilon))}{\log(1/\left(t^*\left(\|F_{2,s}\| + \mu(F_{1,s}) + \|F_{0,s}\|\right)e\right))}\right).
\end{equation}
where $t^*<\frac{1}{e(\|F_{2,s}\| + \|F_{1,s}\| + \|F_{0,s}\|)}$.

By measuring the registers, we extract the first block $\tilde{z}_1$ of the quantum state output from the QLSA algorithm and use it to approximate $\ket{u(T)}$. This block has the same dimension as the solution when the Carleman truncation order $N = 1$. Follow the same approximation in ~\cref{eq:z1tilde,eq:g_vdef,eq:z1tilde_error}, we choose $\epsilon' = \Theta\!\left(\frac{\epsilon}{\sqrt{N}}\right),
    \delta' = \Theta(\epsilon),$ so that $\|\tilde{z}_1\|\geq \Theta(1/\sqrt{N})$ and $ \left\|
        \frac{\tilde{z}_1}{\|\tilde{z}_1\|}
        -
        \frac{v(T)}{\|v(T)\|}
    \right\|\leq \mathcal{O}(\epsilon)$. The success probability can be boosted to a constant by $\mathcal{O}(\sqrt{N})$ rounds of amplitude amplification.

Furthermore, the parameters in the query complexity can be approximated explicitly as
\begin{itemize}
   \item The norm of $B_N$ can be bounded as $\|B_N\| = \mathcal{O}\left(N\left(\alpha_{F_{0,s}} + \alpha_{F_{1,s}} + \alpha_{F_{2,s}}\right)\right)$.
    \item The sparsity of matrix $B_N$, $C_s\leq 3Nn^2$.
    \item The dimension of $B_N$, $ \frac{n^{N+1}-1}{n-1}$.
    \item From the proof of~\cref{prop:unstable_carleman_convergence},
    \[
    C(B_N) = \sup_{t\in[0,T]}\|e^{B_Nt}\| \leq  \sqrt{N}\sup_{t\in[0,T]}e^{\frac{N}{2}\left(\mu(F_{1,s}) + \|F_{0,s}\| + \|F_{2,s}\|\right)t}\leq \sqrt{N}e^N\,,
    \]
    \item The solution ratio coefficient
    \begin{equation}
    g_{\hat{z}}=\frac{\max_{t\in[0,T]}\|\hat{z}(t)\|}{\|\hat{z}(T)\|}\leq \sqrt{N} \frac{1+\delta'}{1-\delta'} \frac{\max_{t\in[0,T]}\|v(t)\|}{\|v(T)\|} = \sqrt{N}\frac{1+\delta'}{1-\delta'} g_v,
\end{equation}
    with $g_v = \frac{\max_{t\in[0,T]}\|v(t)\|}{\|v(T)\|}$.
    \item $\|\hat{z}(T)\|\geq \|v(T)\|(1-\delta')$.
\end{itemize}

    Combining the complexity from the quantum algorithm with the amplitude amplification overhead, we have an $\epsilon$-approximation of $\ket{v(T)} = \ket{u(T)}$ using
        \begin{equation}
        \adjustbox{max width=\linewidth}{$\displaystyle
        \begin{aligned}
            &\mathcal{O}\!\left(\underbrace{
g_{\hat{z}} T \|B_N\| C(B_N)\,
\poly\!\left(
C_s,\log(\dim(B_N)),
\log\!\left(1+\frac{T e^2 \|d_N\|}{\|\hat{z}(T)\|}\right),
\log\!\left(\frac{1}{\epsilon'}\right),
\log\!\left(T\|B_N\|C(B_N)\right)
\right)}_{\text{From }\cref{eq:QLSAcomplexity2}}
\times \underbrace{\sqrt{N}}_{\text{Amplitude Amplification}}\right)\\
& = \mathcal{O}\left(g_vN^2\left(
                    \alpha_{F_{0,s}}
                    + \alpha_{F_{1,s}}
                    + \alpha_{F_{2,s}}
                \right)Te^N\poly\!\left(
C_s,\log(\frac{n^{N+1}-1}{n-1}),
\log\!\left(1+\frac{T e^2 \|d_N\|}{\|\hat{z}(T)\|}\right),
\log\!\left(\frac{1}{\epsilon'}\right),
\log\!\left(T\|B_N\|C(B_N)\right)
\right)\right) \\
& = \mathcal{O}\left(g_v\poly(N)T^2\frac{\left(
                    \alpha_{F_{0,s}}
                    + \alpha_{F_{1,s}}
                    + \alpha_{F_{2,s}}
                \right)\left\|F_{2,s}\right\|}{\|v(T)\|\delta'}\poly\!\left(
C_s,\log(n),
\log\!\left(1+\frac{T e^2 \|F_{0,s}\|}{\|v(T)\|}\right),
\log\!\left(\frac{\sqrt{N}}{\epsilon}\right),
\log\!\left(T\|B_N\|C(B_N)\right)
\right)\right) \\
&=\widetilde{\mathcal{O}}\left(g_v\frac{\left(
\alpha_{F_{0,s}}
+ \alpha_{F_{1,s}}
+ \alpha_{F_{2,s}}
\right)\left\|F_{2,s}\right\|T^2}{\|v(T)\|\epsilon}\text{polylog}(n)\right)
        \end{aligned}
        $}
    \end{equation}
    queries to the oracle for $B_N$, $\ket{d_N}$.

    In summary, the algorithm requires
    \begin{equation}
        \widetilde{\mathcal{O}}\left(g_v\frac{\left(
                    \alpha_{F_{0,s}}
                    + \alpha_{F_{1,s}}
                    + \alpha_{F_{2,s}}
                \right)\left\|F_{2,s}\right\|T^2}{\|v(T)\|\epsilon}\text{polylog}(n)\right)
    \end{equation}
    queries to the block encoding for $F_{i,s}$.

\end{proof}

\begin{proof}[Proof of~\cref{thm:main-unstable}]
The proof of this theorem is very similar to the proof of~\cref{thm:main-stable}. The only difference is that we don't need to consider the transfer between $u$ and $v$ and prepare $\ket{v(0)}$ since $Q=I$ and $s=x(0)$.

The last step of the algorithm requires $\mathcal{O}(\frac{\sqrt{\|x(T)-s\|^2+ \|s\|^2 } }{\|x(T)\|})$ queries to the oracles for $\ket{u(T)}$ and $\ket{s}=\ket{x(0)}$ to prepare the final state $\ket{x(T)}$.

Consequently, there exists a quantum algorithm that outputs the $\epsilon$-approximation of the state $\ket{x(T)}$ using
\begin{equation}
\begin{aligned}
       & \widetilde{\mathcal{O}}\left(\underbrace{\frac{\sqrt{\|x(T)-s\|^2+ \|s\|^2 } }{\|x(T)\|}}_{\text{Final state shift}}g_v\frac{\left(
                    \alpha_{F_{0,s}}
                    + \alpha_{F_{1,s}}
                    + \alpha_{F_{2,s}}
                \right)\left\|F_{2,s}\right\|T^2}{\|v(T)\|\epsilon}\text{polylog}(n)\right)
\end{aligned}
\end{equation}
 queries to the block-encoding of $F_{i,s}$. Here, $\alpha_{F_{i,s}}$ is the block-encoding factor for $F_{i,s}$, defined in~\cref{lem:query0}.

 Plugging $\alpha_{F_{2,s}} = \alpha_{F_2}$, $\alpha_{F_{1,s}} = \alpha_{F_1}+2\|s\| \alpha_{F_2}$ and $\alpha_{F_{0,s}} = \alpha_{F_0}+\alpha_{F_1} \|s\| + \alpha_{F_2}\|s\|^2$, $\|F_{2,s}\|=\|F_2\|$, and $s=x(0)$, the query complexity becomes
 \begin{equation}
     \widetilde{\mathcal{O}}\left(\frac{\sqrt{\|x(T)-x(0)\|^2+ \|x(0)\|^2 } }{\|x(T)\|}\frac{\left(
                    \alpha_{F_0}+\alpha_{F_1}+\alpha_{F_2}
                \right)(1+\|x(0)\|)^2\left\|F_{2}\right\|T^2g_v}{\|x(T)-x(0)\|\epsilon}\mathrm{polylog}(n)\right),
 \end{equation}
 to oracles for $F_i$ and $\ket{x(0)}$,
 we complete the proof.

\end{proof}

\section{Comparison with the pivot-switching method of~\cite{endo2024divergence}}

The pivot-switching (PS) method of~\cite{endo2024divergence} differs from our pivot-shifted construction in two key respects. First, in~\cite{endo2024divergence} the pivot $s$ is updated dynamically along the trajectory, whereas the present work uses a single, fixed pivot. Second, the lifted state in~\cite{endo2024divergence} is built from the original variable $x$ rather than the shifted variable $u$:
\begin{equation}
    \widehat{z}_{\mathrm{PS}} =
    \begin{bmatrix}
        x; x^{\otimes 2}; \cdots; x^{\otimes N}
    \end{bmatrix}^{\top}.
\end{equation}
Pivot switching can be viewed as a sequence of pivot shifts. Here, we only focus on a single pivot shift, corresponding to one step of pivot switching.

We start from the shifted quadratic representation, as in our PSC method,
\begin{equation}
    \partial_t u = F_{2,s}u^{\otimes 2} + F_{1,s}u + F_{0,s}, \qquad u=x-s.
\end{equation}
Equivalently, in the $x$ variable,
\begin{equation}
    \partial_t x = \widetilde{F}_{2,s}x^{\otimes 2} + \widetilde{F}_{1,s}x + \widetilde{F}_{0,s},
\end{equation}
where
\begin{equation}
\begin{aligned}
    \widetilde{F}_{2,s} &=
    F_{2,s}, \\
    \widetilde{F}_{1,s} &=  F_{1,s} - F_{2,s}\left(I\otimes s+s\otimes I\right), \\
    \widetilde{F}_{0,s} &= F_{0,s} - F_{1,s}s + F_{2,s}s^{\otimes 2}.
\end{aligned}    
\end{equation}

For $j<N$, no truncation is applied because
\[
    \frac{d}{dt}x^{\otimes j}
\]
contains powers no higher than $x^{\otimes(j+1)}$, which is still included in
the order-$N$ lifted state. Thus the first $N-1$ block rows of the PS matrix
have the same quadratic Carleman form in the $x$ basis:
\begin{equation}
\begin{aligned}
    (B_{\mathrm{PS}}^{(N)})_{j,j+1}
    &= \widetilde{F}_{2,s}\otimes I^{\otimes(j-1)} + I\otimes \widetilde{F}_{2,s}\otimes I^{\otimes(j-2)} +
    \cdots
    + I^{\otimes(j-1)}\otimes \widetilde{F}_{2,s}, \\
    (B_{\mathrm{PS}}^{(N)})_{j,j}
    &= \widetilde{F}_{1,s}\otimes I^{\otimes(j-1)} + I\otimes \widetilde{F}_{1,s}\otimes I^{\otimes(j-2)} +
    \cdots
    + I^{\otimes(j-1)}\otimes \widetilde{F}_{1,s}, \\
    (B_{\mathrm{PS}}^{(N)})_{j,j-1}
    &= \widetilde{F}_{0,s}\otimes I^{\otimes(j-1)} + I\otimes \widetilde{F}_{0,s}\otimes I^{\otimes(j-2)} +
    \cdots
    + I^{\otimes(j-1)}\otimes \widetilde{F}_{0,s},
\end{aligned}
\end{equation}
for $1\le j<N$.

The only structural difference appears in the last block row. Before
truncation, the last lifted equation contains
\begin{equation}
    (B_{\mathrm{PS}}^{(N)})_{N,N+1}x^{\otimes(N+1)}.
\end{equation}
However, the PS method truncates in the shifted variable $u=x-s$, so the order
$N+1$ term in $u$ is removed. Equivalently,
\begin{equation}
    x^{\otimes(N+1)} \quad\longmapsto\quad x^{\otimes(N+1)}-(x-s)^{\otimes(N+1)}.
\end{equation}
Since
\begin{equation}
    (x-s)^{\otimes(N+1)} = x^{\otimes(N+1)} + \sum_{k=0}^{N} R_{N+1,k}(-s)x^{\otimes k},
\end{equation}
the highest-order term cancels and only lower-order terms remain:
\begin{equation}
    x^{\otimes(N+1)}-(x-s)^{\otimes(N+1)} = - \sum_{k=0}^{N} R_{N+1,k}(-s)x^{\otimes k}.
\end{equation}
Here $R_{N+1,k}(-s)$ denotes the coefficient matrix of $x^{\otimes k}$ in
the expansion of $(x-s)^{\otimes(N+1)}$.

Therefore, the PS matrix has the schematic form
\begin{equation}
B_{\mathrm{PS}}^{(N)}
=
\begin{bmatrix}
B_{1,1}^{x} & B_{1,2}^{x} & 0 & \cdots & 0 \\
B_{2,1}^{x} & B_{2,2}^{x} & B_{2,3}^{x} & \ddots & \vdots \\
0 & B_{3,2}^{x} & B_{3,3}^{x} & \ddots & 0 \\
\vdots & \ddots & \ddots & \ddots & B_{N-1,N}^{x} \\
C_1 & C_2 & C_3 & \cdots & C_N
\end{bmatrix},
\end{equation}
where the first $N-1$ rows are block tridiagonal, while the last row is dense.
The dense last-row blocks are
\begin{equation}
    C_k = \delta_{k,N-1}B_{N,N-1}^{x} + \delta_{k,N}B_{N,N}^{x} - B_{N,N+1}^{x}R_{N+1,k}(-s),
    \qquad 1\le k\le N.
\end{equation}
The corresponding affine term is
\begin{equation}
    d_{\mathrm{PS}}^{(N)} =
    \begin{bmatrix}
        \widetilde{F}_{0,s} \\
        0 \\
        \vdots \\
        -B_{N,N+1}^{x}R_{N+1,0}(-s)
    \end{bmatrix}.
\end{equation}

Moreover, PSC and PS are equivalent up to a change of coordinates. Indeed, the two lifted states are related by the finite translation map induced by $x=u+s$:
\begin{equation}
\widehat{z}_{\mathrm{PS}}=T_N(s)\widehat{z}_{\mathrm{PSC}},
\end{equation}
where \(T_N(s)\) expands each tensor power
\begin{equation}
    x^{\otimes j}= (u+s)^{\otimes j},
    \qquad
    1\le j\le N.
\end{equation}
The inverse map is obtained by replacing $s$ with $-s$, since $u=x-s$:
\begin{equation}
    \widehat{z}_{\mathrm{PSC}} = T_N(-s)\widehat{z}_{\mathrm{PS}},
    \qquad
    T_N(s)^{-1}=T_N(-s).
\end{equation}

Therefore, if the PSC system is written as
\begin{equation}
    \frac{d}{dt}\widehat{z}_{\mathrm{PSC}} = B_{\mathrm{PSC}}^{(N)}\widehat{z}_{\mathrm{PSC}} + d_{\mathrm{PSC}}^{(N)},
\end{equation}
then the same truncated dynamics expressed in the $x$ basis is
\begin{equation}
    \frac{d}{dt}\widehat{z}_{\mathrm{PS}} = T_N(s)B_{\mathrm{PSC}}^{(N)}T_N(-s)\widehat{z}_{\mathrm{PS}} + T_N(s)d_{\mathrm{PSC}}^{(N)}.
\end{equation}
Thus,
\begin{equation}
    B_{\mathrm{PS}}^{(N)} = T_N(s)B_{\mathrm{PSC}}^{(N)}T_N(-s),
    \qquad
    d_{\mathrm{PS}}^{(N)} = T_N(s)d_{\mathrm{PSC}}^{(N)}.
\end{equation}

This shows that the PS method does not define a different truncated trajectory. It rewrites the same shifted-order truncation in the unshifted \(x\)-coordinates. Consequently, when the lifted initial conditions are chosen consistently,
\begin{equation}
    \widehat{z}_{\mathrm{PS}}(0) = T_N(s)\widehat{z}_{\mathrm{PSC}}(0),
\end{equation}
the two methods satisfy
\begin{equation}
    \widehat{z}_{\mathrm{PS}}(t) = T_N(s)\widehat{z}_{\mathrm{PSC}}(t),
\end{equation}
and therefore recover the same physical solution $x(t)=u(t)+s$.

The practical difference is not the approximation itself, but the matrix representation: PSC remains sparse and block tridiagonal, whereas the PS construction generally produces a dense final block row after the coordinate transformation, which is inefficient for block encoding and quantum simulation. This dense row comes precisely from re-expressing the removed shifted term $u^{\otimes(N+1)}$ in the unshifted $x$ basis.

\end{document}